\newcommand{\Gs}{\mathcal{G}}
\newcommand{\Hs}{\mathcal{H}}
\newcommand{\Vs}{\mathcal{V}}
\newcommand{\Es}{\mathcal{E}}
\newcommand{\inter}{\cap}
\newcommand{\union}{\cup}
\newcommand{\ft}{\tilde{f}}
\newcommand{\R}{\mathbb{R}}
\newcommand{\nlsum}{\sum\nolimits}
\newcommand{\Gf}{\mathsf{G}}
\newcommand{\Vf}{\mathsf{V}}
\newcommand{\vf}{\mathsf{v}}
\newcommand{\Ef}{\mathsf{E}}
\newcommand{\Fs}{\mathcal{F}}
\newcommand{\mcoocut}{{\sc MinCoopCut}}
\newcommand{\mcut}{{\sc Minimum $(s,t)$-Cut}}
\newcommand{\fa}{\hat{f}}   
\newcommand{\Sh}{\widehat{S}}
\newcommand{\Ch}{\widehat{C}}
\newcommand{\Cs}{\mathcal{C}}
\newcommand{\famo}{\hat{f}_{add}}   
\newcommand{\fae}{\hat{f}_{ea}}   
\newcommand{\fap}{\hat{f}_{pf}}  
\newcommand{\fs}{\hat{f}_s}  
\newcommand{\pr}{\mathrm{Pr}}
\newtheorem{prob}{Problem}
\newtheorem{theorem}{Theorem}
\newtheorem{lemma}{Lemma}
\newtheorem{proposition}{Proposition}
\newtheorem{corollary}{Corollary}
\theoremstyle{definition}
\newtheorem{defn}{Definition}
\DeclareMathOperator*{\argmin}{argmin}
\def\restriction#1#2{\mathchoice
              {\setbox1\hbox{${\displaystyle #1}_{\scriptstyle #2}$}
              \restrictionaux{#1}{#2}}
              {\setbox1\hbox{${\textstyle #1}_{\scriptstyle #2}$}
              \restrictionaux{#1}{#2}}
              {\setbox1\hbox{${\scriptstyle #1}_{\scriptscriptstyle #2}$}
              \restrictionaux{#1}{#2}}
              {\setbox1\hbox{${\scriptscriptstyle #1}_{\scriptscriptstyle #2}$}
              \restrictionaux{#1}{#2}}}
\def\restrictionaux#1#2{{#1\,\smash{\vrule height .8\ht1 depth .85\dp1}}_{\,#2}}
\begin{document}
\title{Graph Cuts with Interacting Edge Costs -- Examples, Approximations, and Algorithms}
\date{}
\author{Stefanie Jegelka and Jeff Bilmes}
\maketitle

\begin{abstract}
We study an extension of the classical graph cut problem, wherein we replace the modular (sum of edge weights) cost function by a submodular set function defined over graph edges.
Special cases of this problem have appeared in different applications in signal processing, machine learning, and computer vision. In this paper, we connect these applications via the generic formulation of ``cooperative graph cuts'', for which we study complexity, algorithms, and connections to polymatroidal network flows. Finally, we compare the proposed algorithms empirically.
\end{abstract}

\section{Introduction}
\label{intro}

Graphs have been a ubiquitous modeling tool in areas as diverse as operations research, signal processing and machine learning. Graphical representations reveal structure in the problem that is often the key to obtaining efficient algorithms for
real-world data analysis problems.
As a prominent example, the \mcut{} 
problem underlies important problems in
 low-level computer vision \citep{boyVek06} (e.g.,
image segmentation and regularization), probabilistic inference in
graphical models \citep{gps89,rkat08}, and for representing pseudo-boolean functions in computer vision and constraint satisfaction problems \citep{kb05,rrl11,zcj09}. The reduction to cuts has had a tremendous practical impact.

The algorithmic efficiency of cuts comes at the price of reduced modeling power: graph cuts model problems that correspond to a special class of functions (a sub-class of submodular functions defined on the nodes of the graph \citep{zcj09}). Section~\ref{sec:motiv-spec-cases} lists applications that do not fall into this category. Motivated by these limitations, this paper 
studies a non-additive generalization of \mcut{}, where the cut cost function is a \emph{submodular set function} over graph edges. 

A set function $f: 2^{\Es} \to \mathbb{R}$ defined on subsets of a ground set $\Es$ is submodular if it satisfies \emph{diminishing
  marginal costs}: for all sets $A \subset B \subseteq \Es$ and $e \in
\Es \setminus B$, it holds that
\begin{align}\label{eq:diminreturns}
  f(A \union \{e\}) - f(A) \geq f(B \union \{e\}) - f(B).
\end{align}
This generalization -- we refer to it as \emph{Cooperative Cut} -- introduces dependencies between edges, and expresses a wider set of functions than graph cuts.

\subsection{Minimum Cut and Minimum Cooperative Cut}

The \mcut{} problem is defined as follows.
\begin{prob}[\mcut]
  Given a weighted graph $\Gs = (\Vs,\Es,w)$ with terminal nodes $s, t \in \Vs$,
  find a cut $C \subseteq \Es$ of minimum cost $w(C) = \sum_{e \in
    C}w(e)$. A \emph{cut} is a set of edges whose removal
  disconnects all paths between $s$ and $t$.
\end{prob}
We assume throughout that $w \geq 0$.
Many very efficient algorithms are known to solve \mcut; the reader
is referred to
\citep{amo93,schrijver2004} for an overview.

In graph cuts, the cost of any given cut $C \subseteq \Es$ is
a sum $w(C) = \sum_{e \in C}w(e)$ of edge weights. 
This function is said to be
\emph{modular} or, equivalently, \emph{additive} on the edge set $\Es$. 
It implies two important modeling characteristics for graph cuts: First, the nonnegativity of the weights can only penalize two nodes for being separated --- this introduces a form of positive correlation between nodes, also hence this is sometimes referred to as \emph{attractive} potentials in the computer vision community. Second, modular edge weights express interactions of only two nodes at a time. That is, the additive contribution $w(e)$ to the cost $\sum_{e \in C}w(e)$ of a cut $C$ by a given edge $e \in C$ is the same regardless of the cut in which the edge $e$ is
considered. 

Several applications, however, are more accurately modeled when allowing 
non-additive interactions between edge weights. We survey some
examples and applications in Section~\ref{sec:motiv-spec-cases}.
These examples are captured when replacing the modular cost function
$w$ by a nonnegative and nondecreasing \emph{submodular} set function
over graph edges.  The definition \eqref{eq:diminreturns} implies that
with submodular edge weights, the cost of an additional edge depends
on which other edges are contained in the cut. This non-additivity
allows specific long-range dependencies between multiple (pairs
of) nodes simultaneously that cannot be expressed by graph cuts.
These observations motivate the definition of \emph{cooperative graph
  cuts}.

\begin{prob}[Minimum cooperative cut (\mcoocut)]\label{prob:cc}
  Given a graph $\Gs = (\Vs,\Es,f)$ with terminal nodes $s, t \in \Vs$
  and a nonnegative, monotone nondecreasing submodular function $f: 2^\Es \to
  \mathbb{R}_+$ defined on subsets of edges, find an $(s,t)$-cut $C \subseteq \Es$ of minimum cost $f(C)$.
\end{prob}
A set function $f$ is \emph{nondecreasing} 
if $A \subseteq B \subseteq \Es$ implies that $f(A) \leq f(B)$.
\mcoocut{} is a constrained submodular minimization problem:
\begin{alignat}{2}
  \label{eq:cc}
  \text{minimize } f(C) \quad \text{ subject to } C \subseteq \Es \text{ is an
    $(s,t)$-cut in } \Gs.
\end{alignat}
As cooperative cuts employ the same graph structures as standard graph
cuts, they easily integrate into and extend many of the applications of graph cuts.
We note, however, that the graph $\Gs$ need not have any relationship
to the submodular function $f$ other than the fact that the edges
of the graph $\Gs$ constitute the ground set of $f$.

\subsection{Relation to the literature}
Cooperative graph cuts relate to the recent literature in two aspects.
First, a number of models in signal processing, computer vision, security, and machine learning 
are special cases of cooperative cuts, as is discussed in Section~\ref{sec:motiv-spec-cases}.

Second, recent interest has emerged in the 
literature regarding the implications of extending
classical combinatorial problems (such as {\sc Shortest Path}, {\sc Minimum
Spanning Tree}, or {\sc Set Cover}) from a sum-of-weights to submodular cost
functions
\citep{sf08,in09b,gktp09,gtw10,jb11,jb11inference,ky09,hms07,zctz09,
  bauBerBu13}. None of this work, however, has addressed cuts. 
In this work, we provide lower and upper bounds on the approximation factor of \mcoocut{}. 

One approach to solving \mcoocut{} is via relaxations.
For \mcut{}, a celebrated result of \citep{ff56,df54} states that the dual of the linear programming relaxation is a {\sc Maximum Flow} problem, and that their optimal values coincide. 
We refer to the ratio between the maximum flow value (i.e., the optimal value of the relaxation), and the optimal value of the discrete cut, as the \emph{flow-cut gap}. For \mcut{}, this ratio is one. 
In Section~\ref{sec:relax}, we formulate a convex relaxation of \mcoocut{} whose dual problem is a generalized flow problem, where submodular capacity constraints are
 placed not only on individual edges but on arbitrary sets of edges
 simultaneously. The flow-cut gap for this problem can be on the order of $n$, the number of nodes. In contrast, the related polymatroidal maximum flow problem \citep{lm82,h82} (defined in Section~\ref{subsec:pmf}) still has a flow-cut gap of one. Polymatroidal flows are equivalent to submodular flows, and have recently gained attention for modeling information flow in
 wireless networks \citep{krv11,kv11,ckrv12}. Their dual problem is a minimum cut problem where the edge weights are defined by a convolution of local submodular functions \citep{lo83}. Such convolutions are generally not submodular (see Equation~\eqref{eq:pmf}).

\subsection{Summary of main contributions}

In this paper, we survey diverse examples of cooperative cuts in different applications, and
provide a detailed theoretical analysis: 
\begin{itemize}
\item We show a lower bound of $\Omega(\sqrt{n})$ on the approximation factor for
  the general \mcoocut{} problem. 
\item We analyze two families of approximation
  algorithms. The first relies on substituting the submodular cost
  function by a tractable approximation. The second family consists of
  rounding algorithms that build on the relaxation of the
  problem.
  Both families contain algorithms that use
  a partitioning of edges into node incidence
  sets, but in different ways. 
\item   We provide a lower bound of $n-1$ on the flow-cut gap, and relate
  it to different families of submodular functions. 
\end{itemize}
In Section~\ref{subsec:pmf} we draw connections to polymatroidal flows \citep{lm82,h82}. The non-additive cut cost function used in the resulting approximation is solvable exactly and may itself be interesting to consider as an exactly solvable class of e.g.\ higher-order potentials in computer vision.
The paper concludes with a discussion of open problems. In particular, the results of this paper motivate a wider and more refined study of the complexity and expressive power of non-additive graph cut problems.

The paper is structured as follows. In Section~\ref{sec:motiv-spec-cases} we discuss various instances of cooperative cuts and their properties. The complexity of \mcoocut{} is addressed in Section~\ref{sec:complexity}, convex relaxations in Section~\ref{sec:relax},
and algorithmic approaches in Section~\ref{sec:algo}.

\subsection{Notation and technical preliminaries} 
\label{sec:prel-notat}

Throughout this paper, we are given a directed
graph\footnote{Undirected graphs can be reduced to bidirected graphs.}
$\Gs = (\Vs, \Es)$ with $n$ nodes and $m$ edges, and terminal nodes
$s,t \in \Vs$. The function
$f:2^\Es \to \mathbb{R}_+$ is submodular and monotone nondecreasing, where by $2^\Es$ we denote the power set
of $\Es$.  We assume $f$ to be \emph{normalized}, i.e., $f(\emptyset)
= 0$. 
Equivalently to Definition~\eqref{eq:diminreturns}, the function $f$ is submodular if
for all $A, B \subseteq \Es$, it
holds that
\begin{equation}
  f(A) + f(B) \geq f(A \union B) + f(A \inter B). \label{eq:subdef}
\end{equation}
The function $f$ generalizes commonly used modular edge weight functions
$w: \Es \to \mathbb{R}_+$ that
satisfy Equation~\ref{eq:subdef} with equality.
We denote the \emph{marginal cost} of an element $e \in \Es$ with respect to a set $A \subseteq \Es$ by
$f(e \mid A) \triangleq f(A \union \{e\}) - f(A)$.
A function $f(A) = g(w(A))$ for a nonnegative modular function $w$ and a concave scalar function $g$ is always submodular.

For any node $v \in \Vs$, let $\delta^+(v) = \{(v,u) \in \Es\}$ be the
set of edges directed out of $v$, and $\delta^-(v) = \{(u,v) \in \Es\}$ be
the set of edges into $v$. Together, these two directed sets form the (undirected) incidence set
$\delta(v) = \delta^+(v) \union \delta^-(v)$.
These definitions directly extend to sets of nodes:
for a set $S \subseteq \Vs$ of nodes, $\delta^+(S) = \{(v,u) \in \Es : v \in S, u \notin S \}$ is the
set of edges leaving $S$. Without loss of generality, we assume all
graphs are simple.

The \emph{Lov\'asz extension} $\ft:[0,1]^m \to \R$ of the submodular
function $f$ is its lower convex envelope and is defined as
follows~\citep{lo83}. Given a vector $x \in [0,1]^m$, we can uniquely
decompose $x$ into its level sets $\{ B_j \}_j$ as $x = \sum_j
\lambda_j \chi_{B_j}$ where $B_1 \subset B_2 \subset \dots$ are
distinct subsets.  Here and in the following, $\chi_B \in [0,1]^m$ is the
characteristic vector of the set $B$, with $\chi_B(e) = 1$ if $e \in
B$, and $\chi_B(e)=0$ otherwise. Then $\ft(x) =
\sum_j\lambda_jf(B_j)$. This construction illustrates that
$\ft(\chi_{B}) = f(B)$ for any set $B$.  The Lov\'{a}sz extension can
be computed by sorting the entries of the argument $x$ in $O(m\log m)$
time and calling $f$ $m$ times.

A \emph{separator} of a submodular function $f$ is a set $S \subseteq
\Es$ with the property that $f(S) + f(\Es \setminus S) = f(\Es)$,
implying that for any $B \subseteq \Es$, $f(B) = f(B \cap S) + f(B
\cap (\Es \setminus S))$.  If $S$ is a minimal separator, then we say
that $f$ \emph{couples} all edges in $S$. For the edges within a
minimal separator, $f$ is strictly subadditive: $\sum_{e \in S}f(e) >
f(S)$ for $|S|>1$. That means, the joint cost of this set of edges is smaller than the sum of their individual costs.

\subsection{Node functions induced by submodular edge weights}
\begin{figure}
  \begin{minipage}{0.5\linewidth}
    \centering
    \includegraphics[width=0.55\textwidth,height=0.5\textwidth]{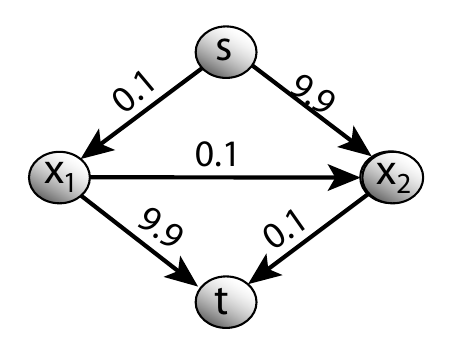}
  \end{minipage}
  \begin{minipage}{0.5\linewidth}
    Let $f(A) = \sqrt{\sum_{e \in A}w(e)}$, so\\
    $h(X) = \sqrt{\sum_{e \in \delta^+(X)}w(e)}$. \\ Then
    $h$ is not submodular:    
  \end{minipage}
  \begin{align*}
    h(\{s,x_1\}) + h(\{s, x_2\}) = \sqrt{19.9} +
      \sqrt{0.2}
      \;\;\;<\;\;\; 2\sqrt{10} =
    h(\{s\}) + h(\{s,x_1,x_2\})
  \end{align*}
  \caption{The node function induced by a cooperative cut is in
    general not submodular. The above $h$ violates
    inequality~\eqref{eq:subdef} for $A = \{s,x_1\}$, $B = \{s, x_2\}$
    but satisfies it (strictly) for 
    $A = \{t,x_1\}$, $B = \{t, x_2\}$.
   }
  \label{fig:notsubmod}
\end{figure}
Every cost function $f$ on graph edges defines a cut function $h: 2^\Vs \to \mathbb{R}_+$ on sets $X \subseteq \Vs$ of nodes:
\begin{align}\label{eq:nodefunc}
  h(X) \triangleq f(\delta^+(X)).
\end{align}
It is well known that if $f$ is a (modular) sum of nonnegative edge
weights, then $h$ is submodular \citep{schrijver2004}. In fact, the following
is true:
\begin{proposition}
The function $f$ is a non-negative modular function on the edge
set if and only if $h(X) = f(\delta^+(X))$ is a submodular
function on the nodes for all edge-induced sub-graphs of $\Gs = (\Vs, \Vs \times \Vs)$.
\label{prop:submdoular_iff_modular}
\end{proposition}
If, however,
$f$ is an arbitrary nondecreasing \emph{submodular} function, then
this is not always the case, as Figure~\ref{fig:notsubmod}
illustrates. Proposition~\ref{prop:properties_h},
proven in Appendix~\ref{app:properties},
summarizes some key properties of $h$.
\begin{proposition}
\label{prop:properties_h}
  Let $h: 2^\Vs \to \R$, $h(X) \triangleq f(\delta^+(X))$ be the node function induced by a cooperative cut with nonnegative nondecreasing submodular cost function $f$. Then:
  \begin{enumerate}
  \item $h$ is not always submodular, and
  \item $h$ is subadditive, i.e., $h(A) + h(B) \geq h(A \cup B)$ for any $A, B \subseteq \Vs$.
  \end{enumerate}
\end{proposition}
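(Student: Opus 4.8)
Parts~1 and 2 call for rather different arguments: part~2 is a short deduction from the standing hypotheses on $f$, whereas part~1 requires an explicit counterexample.

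For part~2 (subadditivity of $h$), the plan is a three-step chain. First, since $f$ is submodular, nonnegative and normalized, it is subadditive, so $h(A)+h(B) = f(\delta^+(A)) + f(\delta^+(B)) \ge f\big(\delta^+(A)\cup\delta^+(B)\big)$. Second, I would record the purely combinatorial inclusion $\delta^+(A\cup B)\subseteq \delta^+(A)\cup\delta^+(B)$: if an edge $(u,v)$ leaves $A\cup B$ then $u$ lies in $A$ or in $B$ while $v$ lies in neither, hence $(u,v)$ leaves whichever of $A,B$ contains $u$. Third, monotonicity of $f$ upgrades this inclusion to $f\big(\delta^+(A)\cup\delta^+(B)\big)\ge f\big(\delta^+(A\cup B)\big) = h(A\cup B)$. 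Chaining the three yields the claim. (The reverse inclusion fails in general — an edge between $A\setminus B$ and $B\setminus A$ lies in the right-hand side but not in $\delta^+(A\cup B)$ — and in the \emph{modular} case this slack is exactly what additionally gives submodularity of $h$, whereas submodularity of $f$ does not.)

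For part~1, the plan is to exhibit a small instance (cf.\ Figure~\ref{fig:notsubmod}) on which $h$ violates $h(X)+h(Y)\ge h(X\cup Y)+h(X\cap Y)$. Concretely, take four nodes $s,v_1,v_2,t$ with edges $e_1=(s,v_1)$, $e_2=(s,v_2)$, $e_3=(v_1,t)$, $e_4=(v_2,t)$, and set $X=\{s,v_1\}$, $Y=\{s,v_2\}$. Then $\delta^+(X)=\{e_2,e_3\}$, $\delta^+(Y)=\{e_1,e_4\}$, $\delta^+(X\cup Y)=\{e_3,e_4\}$ and $\delta^+(X\cap Y)=\delta^+(\{s\})=\{e_1,e_2\}$; each $e_i$ appears exactly once on each side, so a modular weight would give equality here, and the only remaining freedom is how $f$ ``pairs'' the four edges. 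I would then pick a coverage-type submodular function, e.g.\ $f(S)=2\,\big|\{\,g : g\cap S\neq\emptyset\,\}\big|$ over the two groups $g_1=\{e_1,e_4\}$ and $g_2=\{e_2,e_3\}$, which is nonnegative, monotone, normalized and submodular, and for which $h(X)+h(Y)=2+2=4 < 8 = 4+4 = h(X\cup Y)+h(X\cap Y)$. Checking the defining properties of this $f$ and evaluating the four quantities is the only actual computation, and it is routine.

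The only real design decision, and hence the ``hard part,'' sits in part~1: one must choose the instance so that the diagonal pairing of the cut edges realized by $X$ and $Y$ is cheaper under $f$ than the pairing realized by $X\cup Y$ and $X\cap Y$ — achieved by bundling into a common submodular ``group'' precisely the cut edges that do \emph{not} share the node separating $X$ from $Y$. Part~2 is immediate once one observes $\delta^+(A\cup B)\subseteq\delta^+(A)\cup\delta^+(B)$ and invokes subadditivity and monotonicity of $f$.
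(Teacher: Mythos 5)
Your proof is correct and follows essentially the same route as the paper: part~2 is the identical chain of inequalities (subadditivity of $f$ obtained from submodularity plus nonnegativity, the inclusion $\delta^+(A\cup B)\subseteq\delta^+(A)\cup\delta^+(B)$, then monotonicity), and part~1 uses the same diamond graph and the same pair $X=\{s,v_1\}$, $Y=\{s,v_2\}$ as the paper's Figure~\ref{fig:notsubmod}. The only difference is the witness function for part~1 --- the paper takes $f(A)=\sqrt{\sum_{e\in A}w(e)}$ with asymmetric weights, whereas you take a coverage function over the two diagonal groups $\{e_1,e_4\}$ and $\{e_2,e_3\}$ --- and your computation checks out.
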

The non-submodularity of $h$ shows that cooperative cuts are strictly more general than (modular-weight) graph cuts.
In some cases, the function $h$ is submodular. One obvious sufficient condition is that $f$ is nonnegative and modular, but this condition is not necessary as shown in the following.
\begin{proposition}
\label{prop:symm}
  Let $f$ be monotone submodular and permutation symmetric in the sense that $f(A) = f(\sigma(A))$ for any permutation $\sigma$ of the set $\Es$. If $\Gs$ is a complete graph, then $h$ is submodular.
\end{proposition}
\begin{proof} 
  Symmetry implies that $f$ is of the form $f(A) = g(|A|)$ for a scalar function $g$. Submodularity of $f$ implies that there is always a function $g'$ that interpolates $g$ on $\R\setminus \{0, 1, \ldots,m\}$, i.e., $f(A) = g'(|A|) = g(|A|)$, and $g'$ is a piecewise linear concave function. Let $E_{XY}$ be the edges between sets $X$ and $Y$. 
The submodularity of $h$ is identical to the condition that for all $X \subseteq \Vs$, $x,y \notin X$, it holds that
  \begin{align}
    h(X \union x) + h(X \union y) \geq h(X) + h(X \union x \union y).
  \end{align}
  Let $R = \Vs \setminus (X \union x \union y)$. By the concavity and
  monotonicity of $g'$ we have
  \begin{align*}
    &h(X) + h(X \union x \union y)\\
    &\quad = g'(|E_{XR}|+|E_{Xx}| + |E_{Xy}|) + g'(|E_{XR}|+|E_{Rx}| + |E_{Ry}|)\\
    &\quad = g'(|X||R| + 2|X|) + g'(|X||R| + 2|R|)\\
    &\quad \leq 2 g'(|X||R| + |X| + |R|)\\
    &\quad \leq g'(|E_{XR}| + |E_{Xy}| + |E_{Rx}| + |E_{xy}|) + g'(|E_{XR}| + |E_{Xx}| + |E_{Ry}| + |E_{xy}|)\\
    &\quad= h(X \union x) + h(X \union y),
  \end{align*}
  and hence submodularity of $h$ follows. \qed
\end{proof}
Note that if $\Gs$ is not complete, then $h$ might no longer be submodular. An exact (possibly algebraic or graph-theoretic) characterization of the conditions on $\Gs$ and $f$ that imply submodularity of $h$ is currently an open problem.

\section{Motivation and special cases}
\label{sec:motiv-spec-cases}

We begin by surveying special cases of cooperative cuts from
applications in signal processing, machine learning, and computer
vision.  Notably, some of these applications lead to submodular node
functions $h$ as defined in \eqref{eq:nodefunc} and are hence
polynomial-time solvable, while for others $h$ is not submodular.  We
first discuss the latter case which motivated this paper, and then
special submodular cases. Additional special cases are discussed in
Section~\ref{sec:easier}.

\subsection{General, non-submodular examples}
\label{subsec:nonsubmod-ex}

\paragraph{Image segmentation.} The classical task of segmenting an
image into a foreground object and its background is commonly
formulated as a \emph{maximum a posteriori} (MAP) inference problem in
a Markov Random Field (MRF) or Conditional Random Field (CRF). 
If the potential functions of the random field are submodular functions (of the node variables), then the MAP solution can be computed efficiently via the minimum cut in an auxiliary graph~\citep{gps89,bj01,kolmogorov2004energy}.

While these graph cut models have been very successful in computer vision,
they suffer from known shortcomings.
For example, the cut model implicitly penalizes the length of the object boundary,
or equivalently the length of a corresponding graph cut around the
object.  As a result, MAP solutions (minimum cuts) tend to
truncate fine parts of an object (such as branches of trees, or animal
hair), and to neglect carving out holes (such as the mesh grid of a
fan, or written letters on paper). This tendency is aggravated if the image
has regions of low contrast, where local information is insufficient
to determine correct object boundaries.

A solution to both of these problems is proposed in \citep{jb11}. It relies on the continuation of ``obvious'' object boundaries: one aims to reduce the cut cost if the cut consists of edges (pairs of pixels) with \emph{similar appearance}. This aim is impossible to model with a modular-cost graph cut where edges are independent. Hence, \cite{jb11} replace the graph cut by a cooperative cut that lowers the cost for sets of similar edges. Specifically, the edges in the image graph are partitioned into groups $S_i$ of similar edges, where similarity is defined via the adjacent pixels (nodes), and the cut cost is
\begin{equation}\label{eq:fgroup}
  f(C) = \sum_{i=1}^k g_i(w(C \inter S_i)),
\end{equation}
where the $g_i$ are increasing, strictly concave functions, and $w(C)
= \sum_{e \in C}w(e)$ is a sum of nonnegative weights. 

From the viewpoint of graphical models, the function $h$ induced by \eqref{eq:fgroup} is a \emph{higher-order} potential, i.e., a polynomial of degree much larger than 2.
The model \eqref{eq:fgroup} also applies to multi-label (scene labeling) problems and other computer vision tasks \citep{kohliOJ13,silberman14,taniai15}.

An alternative cooperative cut function 
has been studied to improve image segmentation results:
\begin{equation}
  \label{eq:max}
  f(C) = \max_{e \in C}w(e).
\end{equation}
Contrary to the cost function~(\ref{eq:fgroup}),
the function~(\ref{eq:max}) couples all edges in the
grid graph uniformly, without any similarity constraints or grouping.
As a result, the cost of \emph{any} long cut is discounted.
\citet{sg07} and \citet{aac09} derive this
function as the $\ell_{\infty}$ norm of the (gradient) vector of pixel
differences; this vector is the edge indicator vector $y$ in the
relaxation we define in Section~\ref{sec:relax}.
Conversely, the relaxation of the cooperative cut problem leads
to new, possibly non-uniform and group-structured regularization terms
\citep{mythesis}.

\paragraph{Label Cuts.} In computer security, \emph{attack graphs} are state graphs modeling the steps of an
intrusion. Each
transition edge is labeled by an atomic action $a$, and blocking
an action $a$ blocks the set of all associated edges $S_a \subseteq \Es$ that carry label $a$. To prevent
an intrusion, one must separate the initial state $s$ from the goal
state $t$ by blocking (cutting) appropriate edges. The cost of cutting a set of edges is
the cost of blocking the associated actions (labels), and paying for one
action $a$ accounts for all edges in $S_a$.
If each action has a cost $c(a)$, then a \emph{minimum label cut} that minimizes the submodular cost function
\begin{equation}
  f(C) = \nlsum_a c(a)\min\{1, |C \inter S_a|\}
\end{equation}
indicates the lowest-cost prevention of an intrusion \citep{jsw02}.

\paragraph{Sparse separators of Dynamic Bayesian Networks.} 
 A graphical model $\Gf
= (\Vf,\Ef)$ defines a family of probability
distributions. It has a node $\vf_i$ for each random variable $x_i$,
and any represented distribution $p(x)$ must factor with respect to
the edges of the graph as $p(x) \propto
\prod_{(\vf_i,\vf_j)\in\Ef}\psi_{ij}(x_i,x_j)$. A \emph{dynamic
  graphical model} (DGM) \citep{bi10} consists of three template parts: a prologue $\Gf^p
= (\Vf^p,\Ef^p)$, a chunk $\Gf^c = (\Vf^c,\Ef^c)$ and an epilogue
$\Gf^e = (\Vf^e,\Ef^e)$. Given a length $\tau$, an \emph{unrolling} of
the template is a model that begins with $\Gf^p$ on the left, followed
by $\tau+1$ repetitions of the ``chunk'' part $\Gf^c$ and ending in
the epilogue $\Gf^e$. 

To perform inference efficiently, a periodic section of the partially
unrolled model is identified on which an effective inference strategy
(e.g., a graph triangulation, an elimination order, or an
approximate inference method) is developed and then repeatedly used for the
complete duration of the model unrolled to any length. This periodic
section has boundaries corresponding to separators in the original
model \citep{bi10} which are called the interface separators.
Importantly, the efficiency of any inference algorithm derived within
the periodic section depends critically on properties of the
interface, since the variables within must become a clique.

In general, the computational cost of inference is lower bounded, and
the memory cost of inference is exactly given, by the size of the
joint state space of the interface variables.  A ``small'' separator
corresponds to a minimum vertex cut in the graphical model, where the
cost function measures the size of the joint state space. Vertex cuts
can be rephrased as standard edge cuts. Often, a modular cost function
suffices for good results. Sometimes, however, a more general cost
function is needed. In \citet{bilmes2003-tridbn}, for example (which
is our original motivation for studying \mcoocut),
a state space function that considers deterministic relationships
between variables is found to significantly decrease
inference costs.

An example of a function that respects determinism is the
following. In a Bayesian network possessing determinism, let $D$ be the
subset of fully deterministic nodes.  That means any $x_i \in D$ is a
deterministic function of the variables corresponding to its parent
nodes $\mathrm{par}(i)$ meaning $p(x_i|x_{\mathrm{par}(i)}) =
\mathbf{1}[x_i = g(x_{\mathrm{par}(i)})]$ for some deterministic
function $g$. Let $\mathcal{D}_i$ be the state space of variable
$x_i$. Furthermore, given a set $A$ of variables, let $A_D = \{ x_i
\in A \inter D \mid \mathrm{par}(i) \subseteq A\}$ be its subset of
fully determined variables. If the state space of a deterministic
variable is not restricted by fixing a subset of its parents, then the
function measuring the state space of a set of variables $A$ is $f(A)
= \prod_{x_i \in A\setminus A_D}|\mathcal{D}_i|$. The logarithm of
this function is a submodular function, and therefore the problem of
finding a good separator is a cooperative (vertex) cut problem. In fact, this
function is a lower bound on the computational complexity of
inference, and corresponds exactly to the memory complexity since
memory need be retained only at the boundaries between repeated
sections in a DGM.

More generally, a similar slicing mechanism applies for partitioning a
graph for use on a parallel computer --- we may seek separators that
require little information to be transferred from one processor to
another. A reasonable proxy for such ``compressibility'' is the
entropy of a set of random variables, a well-known submodular
function. The resulting optimization problem of finding a
minimum-entropy separator is a cooperative cut that is different from any known special cases.

\paragraph{Robust optimization.} Assume we are given a graph where the
weight of each edge $e \in \Es$ is noisy and distributed as $\mathcal{N}(\mu(e),
\sigma^2(e))$ for nonnegative mean weights $\mu(e)$. The noise on different edges is independent, and the
cost of a cut is the sum of edge weights of an unknown draw
from that distribution. In such a case, we might want to not only
minimize the expected cost, but also take the variance into
consideration. This is the aim in mean-risk minimization
(which is equivalent to a probability tail model or value-at-risk
model), where we aim to minimize
\begin{equation}
  f(C) = \sum_{e \in C}\mu(e) + \lambda \sqrt{\sum_{e \in C}\sigma^2(e)}.
\label{eq:mean_risk_minimization}
\end{equation}
This is a cooperative cut, and this special case admits an FPTAS \citep{nik10}.

\subsection{Special cases that lead to submodular functions $h$}
Curiously, 
some instances of cooperative cuts provably yield submodular node functions $h$ and are hence easier. In the first two examples below, $f$ is defined over edges in a complete graph and is \emph{symmetric}. Here, symmetry is meant in the sense of \citep{vondrak2011} and Proposition~\ref{prop:symm}, the function is indifferent to permutations of the arguments.

\paragraph{Higher-order potentials in computer vision.} A number of
higher-order potentials (pseudo-boolean functions) from computer vision, i.e., potential functions that introduce dependencies between more than two variables at a time, can be
reformulated as cooperative cuts. As an example, $P^n$ Potts functions
\citep{kohli2009p} 
and robust $P^n$ potentials \citep{kohli2009robust} bias image labelings to assign the same label to larger patches of pixels (of uniform appearance). The potential is low if all nodes in a given patch take the same label, and high if a large fraction deviates from the majority label.
These potential functions
 correspond to a
complete graph with a 
cooperative cut cost function
\begin{equation}
  \label{eq:pn}
  f(C) = g(|C|),
\end{equation}
for a concave function $g$. The larger the fraction of deviating nodes, the more edges are cut between labels, leading to a higher penalty. The function $g$ makes this robust by capping the maximum penalty.

\paragraph{Regularization and Total Variation.} A popular
regularization term in signal processing, and in particular for image
denoising, has been the Total Variation (TV) and its discretization
\citep{rof92}. The setup commonly includes a pixel variable (say $x_j$
or $x_{ij}$) corresponding to each pixel or node in the image graph $\Gs$,
and an objective that consists of a loss term and the regularization.
The discrete TV for variables $x_{ij}$ corresponding to pixels
$v_{ij}$ in an $M \times M$ grid with coordinates $i,j$ is given as
\begin{equation}
  \mathrm{TV}_1(x) = \sum_{i,j=1}^M \sqrt{ (x_{i+1,j}-x_{ij})^2 + (x_{i,j+1}-x_{ij})^2}.
\end{equation}
If $x$ is constrained to be a $\{0,1\}$-valued vector, then this is an
instance of cooperative cut --- the pixels valued at unity correspond
to the selected elements $X \subseteq \Vs$, and the edge submodular
function corresponds to $f(C) = \sum_{ij} \sqrt{ C \cap S_{ij} }$ for $C
\subseteq \Es$ where  $S_{ij} = \{(v_{i+1,j}v_{ij}),(v_{i,j+1},v_{ij})\} \subseteq \Es$ ranges over all relevant neighborhood pairs of edges.
Discrete versions
of other variants of total variation are also cooperative cuts.
Examples include the combinatorial total variation of \cite{cgt11}:
\begin{equation}
  \mathrm{TV}_2(x) = \sum_i \sqrt{ \sum_{(v_i, v_j) \in \Es} \nu_i^2(x_i - x_j)^2},
\end{equation}
and the submodular oscillations in \citep{cd09}, for instance,
\begin{align}
  \nonumber
  \mathrm{TV}_3(x)  & = \sum_{1\leq i,j \leq M} \max\{x_{i,j}, x_{i+1,j},   x_{i,j+1}, x_{i+1, j+1}\} \\
   & \qquad \qquad \qquad 
  - \min \{x_{i,j}, x_{i+1,j}, x_{i,j+1},   x_{i+1, j+1}\}  \\
  \label{eq:tvfour}
  & =  \sum_{1\leq i,j \leq M} \max_{\ell, r \in U_{ij} \times U_{ij} } |x_\ell - x_k|,
\end{align}
where for notational convenience we used $U_{ij} = \{(i,j), (i+1,j),
(i, j+1), (i+1,j+1)\}$. The latter term \eqref{eq:tvfour}, like $P^n$ potentials,
corresponds to a 
symmetric submodular function on a complete graph, and
both \eqref{eq:pn} and \eqref{eq:tvfour}
lead to submodular node functions $h(X)$.

\paragraph{Approximate submodular minimization.} Graph cuts have been
useful optimization tools but cannot represent any arbitrary set
function, not even all submodular functions \citep{zcj09}.  But, using
a decomposition theorem by \citet{cun82}, \emph{any} submodular
function can be phrased as a cooperative graph cut. As a result, any
fast algorithm that computes an approximate minimum cooperative cut
can be used for (faster) approximate minimization
of certain submodular functions \citep{jlb11}.

\subsection{General Cooperative Cut}
\label{sec:gener-coop-cut}

The above examples indicate that certain special cases reduce \mcoocut{} to a submodular minimization problem, or result in a simpler optimization problem than the general form of \mcoocut{} with an arbitrary non-negative submodular cost function $f$ and an arbitrary graph $\Gs$. We will discuss such examples in further detail in Section~\ref{sec:easier}.

Yet, there are many reasons for studying the optimization landscape of general \mcoocut. First, not all of the examples in Section~\ref{subsec:nonsubmod-ex} fall into one of the ``easier'' classes. Second, applications often benefit from \emph{learning} the submodular function rather than specifying it a priori. While learning a submodular function is hard in general \citep{ghim09,bh12}, learning can be practically viable for sub-classes of submodular functions \citep{linB12learning,fix13,sk12,tschiatschek14}. Applications such as those in computer vision \citep{jb11} would likely benefit from learning too, but the resulting cooperative cut problem would not necessarily fall into an easy sub-class. Moreover, applications often benefit from combining different objective terms. In computer vision, this may be a cooperative cut potential for encouraging object boundaries of homogeneous appearance, combined with terms that express the data likelihood for different object classes, terms that encourage the coherence of uniform patches of pixels, e.g. the potentials in \citep{kohli2009robust}, and possibly others.
All of these terms are cooperative cuts, but together they quickly exceed special sub-classes of the problem.

In fact, empirical results enabled by general algorithms may hint at the existence of further, easier special cases that help map the complexity landscape. The empirical results in Section~\ref{sec:expt}, for example, are based on the results in this paper and open up questions for further study.
Hence, this paper deliberately takes a general viewpoint to connect the many examples from a spectrum of areas to a common optimization problem.

\section{Complexity and lower bounds}\label{sec:complexity}
In this section, we 
address the hardness of the general
\mcoocut{} problem. Assuming that the cost function is given as an
oracle, we show a lower bound of $\Omega(\sqrt{n})$ on the 
approximation factor. 
In addition, we include a proof of NP-hardness.
NP-hardness holds even if the cost function is completely known
and polynomially computable and representable.

Our results complement known lower bounds for related
combinatorial problems having submodular cost
functions. Table~\ref{tab:lbgen} provides an overview of known results
from the literature. 
In addition, \citet{zctz09} show a lower bound for the special case of
{\sc Minimum Label Cut} via a reduction from {\sc Minimum Label
  Cover}. Their lower bound is $2^{(\log \bar{m})^{1-(\log\log
    \bar{m})^{-c}}}$ for $c<0.5$, where $\bar{m}$ is the input length
of the instance. Their proof is based on the PCP theorem. In contrast,
the proof of the lower bound in Theorem~\ref{thm:lower} is
information-theoretic.

\begin{table}
  \centering
  {\small
    \renewcommand{\arraystretch}{1.2} 
    \begin{tabular}{|p{0.31\textwidth}|c|p{0.31\textwidth}|}
      \hline
      Problem & $\;\;\;$Lower Bound$\;\;\;$ & Reference \\ \hline
      Set Cover & $\Omega(\ln|\mathcal{U}|)$
      & \cite{in09b}\\ 
      Minimum Spanning Tree & $\Omega(n)$ & \cite{gktp09}\\
      Shortest Path & $\Omega(n^{2/3})$ & \cite{gktp09} \\
      Perfect Matching & $\Omega(n)$ & \cite{gktp09}\\
      Minimum Cut & $\Omega(\sqrt{n})$ & Theorem \ref{thm:lower}\\ 
      \hline
    \end{tabular}
  }
  \caption[Hardness results for combinatorial problems with submodular
  costs]{Hardness results for combinatorial problems with submodular
    costs, where $n$ is the number of nodes, and $\mathcal{U}$ the
    universe to cover. These results assume oracle access to the cost function.}
  \label{tab:lbgen}
\end{table}

\begin{theorem}\label{thm:lower}
  No polynomial-time algorithm can solve \mcoocut{} with
  an approximation factor of $o(\sqrt{|\Vs|/\log |\Vs|})$. 
\end{theorem}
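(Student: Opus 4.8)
The plan is to prove the claim information-theoretically in the value-oracle model, using the ``hide a random solution'' technique of Svitkina--Fleischer and Goel et al.\ for submodular optimization. Suppose some polynomial-time algorithm achieves approximation factor $g(n) = o(\sqrt{n/\log n})$ for \mcoocut, making at most $n^{c}$ oracle queries. I will exhibit a single graph $\Gs$ on $\Theta(n)$ vertices and two nonnegative, monotone, submodular cost functions, indistinguishable to such an algorithm, whose optima differ by a factor $\omega(g(n))$ --- a contradiction.

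Take $\Gs$ to consist of $p$ internally vertex-disjoint $s$--$t$ paths, each of length $q$, with $p = \Theta(\sqrt{n\log n})$ and $q = \Theta(\sqrt{n/\log n})$; then $\Gs$ has $\Theta(n)$ vertices and $m = pq = \Theta(n)$ edges, and since the paths are edge-disjoint, every $(s,t)$-cut picks at least one edge per path, so $|C| \ge p$, with equality exactly for the transversals. The ``hard'' cost is the truncated cardinality $f(C) = \min\{|C|, p\}$, for which every cut costs $p$. For the cheap alternative, draw a uniformly random transversal $C^{\ast}$ (a legal cut), set $R = C^{\ast}$, fix $\beta = \Theta(\log n)$ (the constant chosen large relative to $c$), and define
\[
  h_R(C) \;=\; \min\bigl\{\, |C|,\; p,\; \beta + |C\setminus R| \,\bigr\}.
\]
A routine check shows $h_R$ is nonnegative, monotone, normalized and submodular --- it is the truncation at $p$ of the monotone submodular function $|C| - (|C\cap R| - \beta)^{+}$ --- that $h_R(C) \le f(C)$ always, and that $h_R$ and $f$ differ on $C$ only when $|C\cap R| > \beta$ and $|C\setminus R| < p - \beta$. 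Since $h_R(C^{\ast}) = \beta$ while every cut has $h_R(C) \ge \min\{p,\beta\} = \beta$, the optimum drops from $p$ to $\beta$, a ratio $p/\beta = \Theta(\sqrt{n/\log n}) = \omega(g(n))$.

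It remains to show the algorithm cannot tell $f$ from $h_R$ when $R$ is hidden. Run the algorithm against the oracle $f$; its queries $S_1,\dots,S_{n^{c}}$ and its output cut $C$ are then fixed, independent of $R$. Any set $S$ that distinguishes $h_R$ from $f$ has $|S\setminus R| < p - \beta$, hence $|S| < 2p$, hence $\mathbb{E}_R[|S\cap R|] = |S|/q \le 2p/q = \Theta(\log n)$; a Chernoff bound (the random variable $|S\cap R|$ is a sum of $p$ independent indicators, one per path) then gives $\Pr_R[\,S \text{ distinguishes}\,] \le \Pr_R[\,|S\cap R| > \beta\,] \le n^{-\Omega(1)}$, with the exponent controllable via the constant in $\beta$. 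Similarly $\Pr_R[\,h_R(C) = o(p)\,] = n^{-\omega(1)}$, since (using $|C| \ge p$) such an event would force $|C\cap R| = \Omega(p)$ whereas $\mathbb{E}_R[|C\cap R|] = O(\log n)$. A union bound over the $n^{c}+1$ fixed sets shows that, with probability $\to 1$ over the choice of $R$, no $S_i$ is distinguishing --- so the algorithm issues the same queries, receives the same answers, and returns the same cut $C$ when run against the oracle $h_R$ --- and moreover $h_R(C) \ne o(p)$. But $\mathrm{OPT}_{h_R} = \beta$ and $g(n)\,\beta = o(p)$, so on the instance $(\Gs, h_R)$ the algorithm returns a cut of cost $> g(n)\cdot \mathrm{OPT}_{h_R}$, contradicting its guarantee.

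The crux is the simultaneous calibration of $p$, $q$ and $\beta$. On the one hand $\beta$ must be a large enough multiple of $\log n$ for the per-query tail bound to beat a union bound over $n^{c}$ queries; on the other hand the path length $q$ must keep $\mathbb{E}_R[|S\cap R|] = |S|/q$ at $\Theta(\log n)$ for all candidate distinguishing sets (those with $|S| < 2p$), so that distinguishing remains exponentially unlikely --- this pins $p/q$, equivalently $p^2/n$, to $\Theta(\log n)$; and yet $p/\beta$ must stay $\Theta(\sqrt{n/\log n})$. The regime $p = \Theta(\sqrt{n\log n})$, $q = \Theta(\sqrt{n/\log n})$, $\beta = \Theta(\log n)$ is exactly where all three requirements hold at once. (The companion NP-hardness statement --- which needs an explicit, polynomially computable and representable cost function in place of the oracle --- would be proved separately by a direct reduction and plays no role in the approximation bound above.)
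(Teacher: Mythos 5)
Your construction is essentially the paper's own proof: the same graph of $\Theta(\sqrt{n\log n})$ parallel paths of length $\Theta(\sqrt{n/\log n})$, the same pair of cost functions (your $h_R(C)=\min\{|C|,\,p,\,\beta+|C\setminus R|\}$ is algebraically identical to the paper's $\min\{|Q\setminus R|+\min\{|Q\cap R|,\beta\},\ell\}$, with the roles of the names $f$ and $h$ swapped), the same $\beta=\Theta(\log n)$, and the same per-path Chernoff bound on $|S\cap R|$. The only cosmetic difference is that you inline the transcript/union-bound indistinguishability argument where the paper invokes the Svitkina--Fleischer lemma as a black box.
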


The proof relies on constructing two submodular cost functions $f$,
$h$ that are almost indistinguishable, except that they have quite
differently valued minima. In fact, with high probability they cannot
be distinguished with a polynomial number of function queries.
If the optima of $h$ and $f$ differ by a factor larger than $\alpha$,
then any solution for $f$ within a factor $\alpha$ of the optimum
would be enough evidence to discriminate between $f$ and $h$. As a result, a
polynomial-time algorithm that guarantees an approximation factor
$\alpha$ would lead to a contradiction. The proof technique follows along the lines of the proofs in \citep{ghkm,ghim09,sf08}.

One of the functions, $f$, depends on a hidden random set $R \subseteq
E$ that will be its optimal cut. We will use the following lemma that
assumes $f$ will depend on a random set $R$.

\begin{lemma}[\cite{sf08}, Lemma~2.1]\label{lem:prob}
  If for any fixed set $Q \subseteq \Es$, chosen without knowledge of $R$,
  the probability of $f(Q) \neq h(Q)$ over the random choice of $R$ is
  $m^{-\omega(1)}$,
  then any algorithm that makes a polynomial number of oracle
  queries has probability at most $m^{-\omega(1)}$ of distinguishing
  between $f$ and $h$.  \looseness-1
\end{lemma}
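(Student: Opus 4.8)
The plan is to reduce the behavior of an adaptive, possibly randomized, $\mathrm{poly}(m)$-query algorithm to a union bound over a single fixed, non-adaptive sequence of queries, namely the sequence that the algorithm would generate when run against the deterministic function $h$. The central difficulty, and the reason the hypothesis cannot be applied naively, is adaptivity: when the algorithm interacts with $f$, each query may depend on the answers to previous queries, and those answers depend on the hidden random set $R$; consequently the queries issued against $f$ are \emph{not} fixed sets chosen without knowledge of $R$, so the per-query bound does not apply to them directly. The resolution is a coupling argument that pins the reference query sequence to the $R$-independent world.

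First I would fix the algorithm's internal randomness, if any, so that the only remaining source of randomness is the choice of $R$; it suffices to prove the claim for every fixing of the coins, since the bound $m^{-\omega(1)}$ is then inherited in expectation. With the coins fixed, running the algorithm against $h$ produces a deterministic sequence of queries $Q_1, Q_2, \dots, Q_p$ with $p = \mathrm{poly}(m)$, together with the answers $h(Q_1), \dots, h(Q_p)$. Crucially, each $Q_i$ depends only on $h$ and the fixed coins, hence is chosen without knowledge of $R$, so the hypothesis yields $\Pr_R[f(Q_i) \neq h(Q_i)] = m^{-\omega(1)}$ for every $i$.

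Next I would establish, by induction on $i$, the coupling claim: on the event that $f(Q_j) = h(Q_j)$ for all $j \le i$, the algorithm run against $f$ with this particular $R$ issues exactly $Q_1, \dots, Q_i$ as its first $i$ queries and receives exactly the same answers as against $h$. The base case is vacuous; for the inductive step, if the first $i-1$ queries and answers coincide across the two worlds, then the algorithm is in an identical internal state and therefore issues the same $i$-th query $Q_i$, and the answers $f(Q_i)$ and $h(Q_i)$ coincide precisely on the assumed event. Let $\mathcal{E} = \{f(Q_j) = h(Q_j) \text{ for all } j = 1,\dots,p\}$. By the coupling claim, on $\mathcal{E}$ the entire transcript --- all queries, all answers, and hence the final output --- is identical whether the oracle is $f$ or $h$, so on $\mathcal{E}$ the algorithm cannot distinguish the two functions.

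Finally I would bound $\Pr_R[\neg\mathcal{E}]$ by a union bound: since each $Q_i$ is $R$-independent, $\Pr_R[\neg\mathcal{E}] \le \sum_{i=1}^{p} \Pr_R[f(Q_i) \neq h(Q_i)] \le p \cdot m^{-\omega(1)} = m^{-\omega(1)}$, where the last equality uses that multiplying a super-polynomially small quantity by $p = m^{O(1)}$ leaves it super-polynomially small. Since the distinguishing event is contained in $\neg\mathcal{E}$, the algorithm distinguishes $f$ from $h$ with probability at most $m^{-\omega(1)}$, as claimed. I expect the only genuinely delicate point to be the coupling induction that licenses replacing the adaptive $f$-queries by the fixed $h$-queries; everything after that is a routine union bound.
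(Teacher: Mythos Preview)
The paper does not actually prove this lemma; it is quoted verbatim from \cite{sf08} and used as a black box in the proof of Theorem~\ref{thm:lower}. Your argument---fixing the algorithm's internal coins, generating the reference query sequence by simulating against the $R$-independent function $h$, coupling the two executions by induction, and finishing with a union bound over the $\mathrm{poly}(m)$ queries---is correct and is exactly the standard proof of this statement (and the one given in the cited reference). There is nothing to compare against within this paper.
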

Consequently, the two functions $f$ and $h$ in Lemma~\ref{lem:prob} cannot be distinguished with high probability within a polynomial number of queries, i.e., within polynomial time.
Hence, it suffices to construct two functions for which Lemma~\ref{lem:prob} holds.

\begin{proof}[Theorem~\ref{thm:lower}]
  We will prove the bound in terms of the number $m = |\Es|$ of edges
  in the graph. The graph we construct has $n = m - \ell + 2$ nodes, and
  therefore the proof also shows the lower bound in terms of nodes.

  Construct a graph $\Gs = (\Vs,\Es)$ with
  $\ell$ parallel disjoint paths from $s$ to $t$, where each path has $k$
  edges. The random set $R 
  \subset \Es$ is always a cut consisting of $|R|=\ell$ edges, and contains
  one edge from each path uniformly at random. We define $\beta =
  8\ell / k < \ell$ (for $k > 8$),
  and, for any $Q \subseteq \Es$,
  \begin{align}
    h(Q) &= \min\{ |Q|, \ell\} \\ 
    f(Q) &= \min\{ |Q \setminus R| + \min\{ |Q \cap R|,\;
    \beta \}, \; \ell\}.
  \end{align}
\begin{figure}[t]
    \centering
    \includegraphics[width=0.45\textwidth]{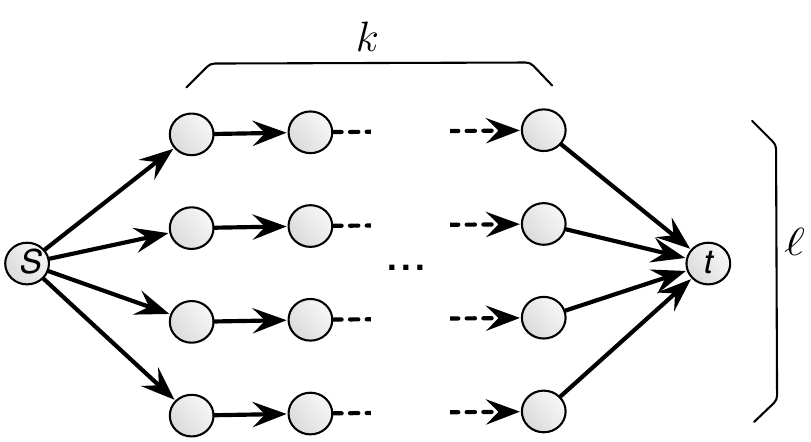} 
    \caption{Graph for the proof of Theorem \ref{thm:lower}.}
    \label{fig:lbgraph}
\end{figure}%
The functions differ only for the relatively few sets $Q$ with $|Q
\inter R| > \beta$ and $|Q \setminus R| < \ell-\beta$, with $\min_{A
  \in \Cs } h(A) = h(C) = \ell$, $\min_{A \in \Cs} f(A) = f(R) =
\beta$, where $\Cs$ is the set of cuts, and $C$ is any cut.  We must have $k\ell = m$, so define
$\epsilon$ such that $\epsilon^2 = 8/7 \log m$, and set $k =
8\sqrt{m}/\epsilon$ and $\ell = \epsilon\sqrt{m}/8$.

We compute the probability that $f$ and $h$ differ for a given query
set $Q$. Probabilities are over the random unknown $R$. Since $f \leq
h$, the probability of a difference is $\pr(f(Q) < h(Q))$.  If $|Q| \leq
\ell$, then $f(Q) < h(Q)$ only if $\beta < |Q \cap R|$, and the
probability $\pr(f(Q) < h(Q)) = \pr(|Q \cap R| > \beta)$ increases as $Q$
grows. If, on the other hand, $|Q| \geq \ell$, then 
since $h(Q) = \ell$ the probability
  \begin{equation*}
    \pr(f(Q) < h(Q)) 
    = \pr( |Q \setminus R| + \min\{ |Q \cap R|,\;
    \beta\} < \ell)
    = \pr(| Q \cap R | > \beta )
  \end{equation*}
  decreases as $Q$ grows. Hence, the probability of a difference is
  largest when $|Q|~=~\ell$.

  So let $|Q| = \ell$.  If $Q$ spreads over $b \leq k$ edges of a path
  $P$, then the probability that $Q$ includes the edge in $P \inter R$
  is $b/k$.  The expected overlap between $Q$ and $R$ is the sum of hits on all paths:
  $\mathbb{E}[~|Q \cap R|~] = |Q|/k = \ell/k$.  Since the edges in $R$
  are independent across different paths, we may bound the probability of
  a large intersection by a Chernoff bound (with $\delta = 7$ in [\citenum{mu05}]):
  \begin{align}
    \pr\big(\, f(Q) \neq h(Q)\big)\, &\leq\, \pr\big(\,|Q \cap R| \geq
    8\ell / k\;\big) \\
    &\leq\; 2^{- 7 \ell / k}    
    \; = \; 2^{-7\epsilon^2/8} = 2^{-\omega(\log m)} = m^{-\omega(1)}.
  \end{align}
  With this result, Lemma~\ref{lem:prob} applies. No polynomial-time
  algorithm can guarantee to be able to distinguish $f$ and $h$ with high
  probability. A polynomial algorithm with approximation factor better
  than the ratio of optima $h(R)/f(R)$ would discriminate the two functions and
  thus lead to a contradiction.
  As a result, the lower bound is determined by the ratio of optima of $h$ and
  $f$. The optimum of $f$ is $f(R)=\beta$, and $h$ has uniform cost
  $\ell$ for all minimal cuts. Hence, the ratio is $h(R)/f(R) = \ell /
  \beta = \sqrt{m}/\epsilon = o(\sqrt{m/\log m})$. \qed
\end{proof}

Building on the construction in the above proof with $\ell = n^{1/3}$ and a
different cut cost function, \citet{bh12} proved that if the data structure
used by an algorithm (even with an arbitrary number of queries) has
polynomial size, then this data structure cannot represent the
minimizers of their cooperative cut problem to an approximation factor
of $o(n^{1/3}/\log n)$.

In addition, we mention that a reduction from Graph Bisection serves
to prove that \mcoocut{} is NP-hard. We defer the proof to
Appendix~\ref{sec:reduct}, but point out that in the reduction, the
cost function is fully accessible and given as a polynomial-time
computable formula.
\begin{theorem}\label{thm:np}
  Minimum Cooperative $(s,t)$-Cut is NP-hard.
\end{theorem}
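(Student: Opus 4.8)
The plan is to reduce from the NP-hard \emph{Minimum Graph Bisection} problem (the paper itself signals a ``reduction from Graph Bisection''). Given a bisection instance $G = (V,E)$ with $|V| = 2p$ and a budget $B$ (WLOG $B \le |E|$), I would build $\Gs = (\Vs,\Es)$ on $\Vs = V \union \{s,t\}$ by adding, for every $v \in V$, the edges $(s,v)$ and $(v,t)$, and, for every $\{u,v\} \in E$, the two directed edges $(u,v)$ and $(v,u)$. Writing $\Es_s = \{(s,v):v\in V\}$, $\Es_t = \{(v,t):v\in V\}$, $\Es_G$ for the directed copies of $E$, and setting $M = |E|+1$, the cost function is
\[
  f(C) \;=\; 2pM\big(|C\inter\Es_s| + |C\inter\Es_t|\big)\; -\; M\,|C\inter\Es_s|\cdot|C\inter\Es_t|\; +\; |C\inter\Es_G| ,
\]
which is an explicit, polynomial-time computable formula, so the reduction uses a fully accessible cost as claimed.

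First I would check that $f$ is normalized, nonnegative, monotone nondecreasing and submodular. The product $|C\inter\Es_s|\cdot|C\inter\Es_t|$ of two nonnegative modular functions is \emph{super}modular, so its negation is submodular and the remaining terms are modular, hence $f$ is submodular; the marginal of adding an edge of $\Es_s$ is $M(2p - |C\inter\Es_t|) \ge 0$ because $|\Es_t| = |V| = 2p$, symmetrically for $\Es_t$, and $1$ for an edge of $\Es_G$, giving monotonicity; nonnegativity follows from the regrouping $f(C) = 2pM|C\inter\Es_s| + M|C\inter\Es_t|(2p - |C\inter\Es_s|) + |C\inter\Es_G|$. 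Next I would restrict attention to ``vertex-partition cuts'': for any $(s,t)$-cut $C$, if $S\ni s$ is the set of nodes reachable from $s$ in $\Gs$ with $C$ removed, then $\delta^+(S) \subseteq C$ and $f(\delta^+(S)) \le f(C)$ by monotonicity. Writing $S' = S\inter V$, a direct count gives $|\delta^+(S)\inter\Es_s| = 2p - |S'|$, $|\delta^+(S)\inter\Es_t| = |S'|$, and $|\delta^+(S)\inter\Es_G| = e(S')$, the number of edges of $G$ crossing the partition $(S', V\setminus S')$, so that
\[
  f(\delta^+(S)) \;=\; 3p^2 M \;+\; M\,(|S'|-p)^2 \;+\; e(S') .
\]

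Finally, since $M > |E| \ge e(S')$, any cut with $|S'|\neq p$ costs at least $3p^2M + M$, strictly more than the at most $3p^2M + |E|$ of any balanced cut; hence the minimum cooperative cut is attained at a balanced partition and equals $3p^2M + \mathrm{bisec}(G)$. Consequently (using $B \le |E| < M$) $G$ has a bisection with at most $B$ crossing edges iff $\Gs$ has an $(s,t)$-cut of cost at most $3p^2M + B$, which gives NP-hardness. The one genuinely delicate point — the main obstacle — is enforcing \emph{exact} balance with a function that must simultaneously be monotone, nonnegative \emph{and} submodular: naive truncated-cardinality penalties such as $\min\{|C\inter\Es_s|,p\}$ are concave in the cardinalities and bias the optimum toward the imbalanced extremes, i.e.\ the wrong direction. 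The crucial device is that the balance reward $2p|S'| - |S'|^2 = p^2 - (|S'|-p)^2$ arises as \emph{minus} a product of modular functions — exactly the supermodular primitive whose negation is submodular — after which everything else, including the degenerate cases $|S'|\in\{0,2p\}$ (which obey the same formula), is routine bookkeeping.
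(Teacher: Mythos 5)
Your reduction is correct: the auxiliary graph (terminals $s,t$ joined to every vertex of the bisection instance, original edges kept with modular weights) is exactly the skeleton the paper uses, and your cost function does what you claim. The product $|C\inter\Es_s|\cdot|C\inter\Es_t|$ of two nonnegative modular functions with disjoint supports is supermodular (its marginals $v(C)$ resp.\ $u(C)$ are nondecreasing), so subtracting it preserves submodularity; your marginal computation $M(2p-|C\inter\Es_t|)\geq 0$ gives monotonicity; the reachability-set argument correctly reduces to vertex-partition cuts; and the completed square $3p^2M + M(|S'|-p)^2 + e(S')$ with $M>|E|$ cleanly separates balanced from unbalanced cuts. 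The interesting difference is in the balance gadget. The paper constructs its balance term $f_{bal}$ as an expectation $\mathbb{E}_\sigma[h_\sigma(\phi(C))]$ of partition-matroid rank functions indexed by derangements of a bipartite auxiliary graph, which certifies monotone submodularity ``for free'' (averages of matroid ranks), and then a derangement-counting computation shows this expectation evaluates to a linear term minus a multiple of $|C_s|\,|C_t|$ (plus a small overlap correction $|C_{s\inter t}|$). You instead write the linear-minus-product function down directly and verify normalization, nonnegativity, monotonicity and submodularity by hand. So the two proofs rest on the same algebraic mechanism --- rewarding balance via the concave-in-cardinality term $-|C_s|\,|C_t|$ --- but your route is shorter and avoids the derangement machinery entirely, at the modest price of having to check the submodular-function axioms explicitly rather than inheriting them from matroid rank functions. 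Both yield a polynomial-time computable, fully explicit cost function, as Theorem~\ref{thm:np} requires.
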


\section{Relaxation and the flow dual}\label{sec:relax}
As a first step towards approximation algorithms, we formulate a
relaxation of \mcoocut{} and analyze the flow-cut gap. 
The minimum cooperative cut problem can be relaxed to a continuous convex optimization problem using the convex Lov\'asz extension $\ft$ of $f$:
\begin{align}
  \label{eq:relaxshort}
  \min_{y \in \mathbb{R}^{|\Es|},\, x \in \mathbb{R}^{|\Vs|}}\;\;& \ft(y)\\
  \nonumber
  \text{s.t.} \quad -x(u) + x(v) + y(e) &\geq 0 \quad \text{for all }
  e=(u,v) \in \Es\\ 
  \nonumber
  x(s) - x(t) &\geq 1\\
  \nonumber
  y &\geq 0
\end{align}
The dual of this problem can be derived by writing the Lov\'asz
extension as a maximum $\ft(y) = \max_{z \in \mathcal{P}(f)} z^\top y$ of linear functions. The maximum is taken over the submodular polyhedron
\begin{equation}
  \mathcal{P}(f) = \{y \mid \sum_{e \in A} y(e) \leq f(A)\; \forall A \subseteq \Es\}.
\end{equation}
The resulting dual problem 
is a flow problem with
non-local capacity constraints: 
\begin{align}
  \label{eq:relflow}
  \max_{\nu \in \R,\varphi \in \R^{|\Es|}}\quad&\; \nu&\\
  \label{eq:constr1}
  \text{s.t. }\quad\;\; \varphi(A) \triangleq \sum_{e \in A} \varphi(e) \;&\leq\; f(A) 
  \;\;\;\text{ for all } A \subseteq \Es\\ 
    \nonumber
  \sum_{e \in \delta^+ u } \varphi(e) - \sum_{e' \in \delta^-u}
  \varphi(e') \;&=\; d(u)\nu 
  \quad \text{ for all } u \in \Vs\\
  \nonumber
  \varphi &\geq 0,
\end{align}
where 
$d(u) = 1$ if $u=s$, $d(u)=-1$ if $u=t$, and $d(u) = 0$ otherwise. Constraint~\eqref{eq:constr1} demands that $\varphi$ must, in addition to satisfying the common flow conservation, reside within the submodular polyhedron $\mathcal{P}(f)$. This more restrictive constraint replaces the edge-wise capacity constraints that occur when $f$ is a sum of weights.

As an alternative to \eqref{eq:relaxshort}, the constraints
can be stated in terms of paths: a set of edges is a cut if it
intersects all $(s,t)$-paths in the graph.
\begin{align}
\label{eq:cover}
  \min\; &\;\ft(y)\\
  \nonumber
  \text{s.t. }& \nlsum_{e \in P}y(e) \geq 1 \;\;\; \text{for all }
  (s,t)\text{-paths } P\subseteq \Es\\
  \nonumber
  & y \in [0,1]^\Es.
\end{align}
We will use this form in Section~\ref{sec:randgreed}, and the relaxation (\ref{eq:relaxshort}) in Section~\ref{sec:rounding}.

\subsection{Flow-cut gap}\label{subsec:flowcutgap}
\label{sec:fcgap}

The relaxation \eqref{eq:relaxshort} of the discrete problem \eqref{eq:cc} is not tight. 
This becomes evident when analyzing the ratio $f(C^*)/\ft(y^*)$ between the optimal value of the discrete problem and the relaxation \eqref{eq:relaxshort} (i.e., the \emph{integrality gap}). This ratio is, by strong duality between Problems \eqref{eq:relaxshort} and \eqref{eq:relflow}, also the \emph{flow-cut gap} $f(C^*) / \nu^*$ of the optimal cut and maximal flow values.
\begin{lemma}\label{lem:gapbds}
  Let $\mathcal{P}$ be the set of all $(s,t)$-paths in the graph.
  The flow-cut gap $f(C^*)/\nu^*$
  can be upper and lower bounded as follows:
  \begin{equation*}
    \frac{f(C^*)}{\sum_{P \in \mathcal{P}}\min_{P' \subseteq
        P}\frac{f(P')}{|P'|}} \leq \frac{f(C^*)}{\nu^*} \leq
    \frac{f(C^*)}{\max_{P \in \mathcal{P}}\min_{P' \subseteq P} \frac{f(P')}{|P'|}}.
  \end{equation*}
\end{lemma}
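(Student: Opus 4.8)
The plan is to bound the optimal flow value $\nu^*$ from both sides using single-path flows and then invert. For the lower bound on $\nu^*$ (which gives the upper bound on the gap), I would route all flow along a single best path. Pick any $(s,t)$-path $P$ and any nonempty subset $P' \subseteq P$; I claim we can push $\nu = f(P')/|P'|$ units of flow along $P$. Indeed, define $\varphi(e) = \nu$ for $e \in P$ and $\varphi(e) = 0$ otherwise; flow conservation holds trivially. The submodular capacity constraint $\varphi(A) \le f(A)$ for all $A$ reduces, since $f$ is monotone, to checking $\varphi(A \cap P) \le f(A \cap P)$, and since $\varphi$ is modular and supported on $P$, the binding case is $A = P$ itself: we need $\nu |P| \le f(P)$. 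Actually the sharper statement uses an arbitrary $P' \subseteq P$ and scales flow so that $\varphi(P') = \nu|P'| \le f(P')$ is the tight constraint while all smaller subsets are slack — this requires $\nu \le f(P'')/|P''|$ for every $P'' \subseteq P$, hence $\nu = \min_{P'' \subseteq P} f(P'')/|P''|$. Since this works for every path $P$, we get $\nu^* \ge \max_{P \in \mathcal{P}} \min_{P' \subseteq P} f(P')/|P'|$, which inverts to the claimed upper bound on $f(C^*)/\nu^*$.

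For the upper bound on $\nu^*$ (giving the lower bound on the gap), I would decompose an optimal flow $\varphi^*$ into path flows. By the standard flow decomposition theorem, $\varphi^* = \sum_{P \in \mathcal{P}} \nu_P \chi_P$ (plus cycles, which carry no net value and can be discarded) with $\nu^* = \sum_P \nu_P$. For each path $P$ in the support, restrict the capacity constraint to subsets of $P$: for every $P' \subseteq P$, $\nu_P |P'| = \varphi^*(P') \le f(P')$ (using $\varphi^* \le $ the flow on $P$ within $P'$, and monotonicity/nonnegativity to drop contributions from edges outside $P$ — one must be slightly careful here since other paths may share edges with $P'$, so in fact $\varphi^*(P') \ge \nu_P|P'|$ and the constraint $\varphi^*(P') \le f(P')$ still forces $\nu_P \le f(P')/|P'|$). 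Hence $\nu_P \le \min_{P' \subseteq P} f(P')/|P'|$, and summing, $\nu^* \le \sum_{P \in \mathcal{P}} \min_{P' \subseteq P} f(P')/|P'|$. Inverting gives the lower bound on the gap.

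The main obstacle is the shared-edge bookkeeping in the upper-bound direction: because multiple decomposition paths can overlap, $\varphi^*(P')$ is a sum over \emph{all} paths passing through edges of $P'$, not just $\nu_P|P'|$. The clean way around this is to observe that $\varphi^*$ is nonnegative, so $\varphi^*(P') \ge \sum_{e \in P'} \nu_P = \nu_P |P'|$ always holds, and combined with $\varphi^*(P') \le f(P')$ we still extract $\nu_P \le f(P')/|P'|$ for each $P'$ individually — the overlap only helps. A secondary point to verify carefully is that in the single-path construction for the lower bound on $\nu^*$, setting $\varphi$ to be modular and supported on one path genuinely satisfies $\varphi(A) \le f(A)$ for \emph{all} $A \subseteq \Es$; this follows from monotonicity of $f$ ($f(A) \ge f(A \cap P)$) together with the fact that on $P$ the binding subset is the one achieving $\min_{P' \subseteq P} f(P')/|P'|$, after which every other subset of $P$ is automatically slack by the definition of that minimum. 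Everything else is routine inversion of inequalities and an appeal to strong duality (already established between \eqref{eq:relaxshort} and \eqref{eq:relflow}) to identify $\nu^*$ with the integrality gap denominator $\ft(y^*)$.
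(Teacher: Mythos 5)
Your proof is correct. The first half (lower-bounding $\nu^*$ by single-path flows that push $\min_{P'\subseteq P} f(P')/|P'|$ uniformly along a path $P$, with feasibility for arbitrary $A \subseteq \Es$ checked via monotonicity of $f$ and the definition of the minimum) is essentially the paper's argument for the upper bound on the gap. For the other direction, however, you take a genuinely different route. The paper upper-bounds $\nu^*$ by relaxing the capacity function to the path-separable majorant $\hat{f}(A) = \sum_{P \in \mathcal{P}} f(A \cap P) \geq f(A)$ (which holds by subadditivity of $f$), observing that any flow feasible for $f$ is feasible for $\hat f$, and that the maximum flow under the separable capacity decomposes across paths into single-path subproblems, each bounded by $\min_{P'\subseteq P} f(P')/|P'|$. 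You instead decompose the optimal flow $\varphi^*$ itself into path flows $\sum_P \nu_P \chi_P$ (discarding cycles, which is legitimate since the capacity constraints are downward closed), and extract $\nu_P \leq f(P')/|P'|$ for every $P' \subseteq P$ from $\nu_P |P'| \leq \varphi^*(P') \leq f(P')$; your explicit handling of overlapping paths — noting that shared edges only increase $\varphi^*(P')$ and so only help — is exactly the point where care is needed, and you get it right. Your flow-decomposition argument is more self-contained and avoids the paper's somewhat terse claim about the value of the maximum flow under $\hat f$ when paths share edges; the paper's relaxation argument avoids invoking flow decomposition and fits its broader theme of separable surrogate capacities. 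Both are valid, and the final remark about strong duality is not needed for the lemma as stated, which concerns $\nu^*$ directly.
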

\begin{proof}
  The Lemma straightforwardly follows from bounding the optimal flow $\nu^*$. The flow through a single path $P \in \mathcal{P}$, if all other edges $e \notin \mathcal{P}$ are empty, is restricted by the minimum average capacity for any subset of edges within the path, i.e., $\min_{P' \subseteq P} \frac{f(P')}{|P'|}$. Moreover, we obtain a family of feasible solutions as those that send nonzero flow only along one path and remain within that path's capacity. Hence, the maximum flow must be at least as big as the flow for any of those single-path solutions. This observation yields the upper bound on the ratio. 

  A similar argumentation shows the lower bound: the total joint capacity constraint is upper bounded by $\hat{f}(A) = \sum_{P \in \mathcal{P}}f(A \inter P) \geq f(A)$. Hence, $\sum_{P \in \mathcal{P}}\min_{P' \subseteq P}\frac{f(P')}{|P'|}$ is the value of the maximum flow with capacity $\hat{f}$ if each edge is only contained in one path, and is an upper bound on the flow otherwise. \qed
\end{proof}

\begin{corollary}\label{lem:fcgap}
  The flow-cut gap for \mcoocut{} can be as large as $n-1$.
\end{corollary}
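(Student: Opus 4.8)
The plan is to exhibit one explicit family of \mcoocut{} instances, parametrized by $n$, whose flow-cut gap $f(C^*)/\nu^*$ equals exactly $n-1$; the corollary then follows immediately. For the instance I would take $\Gs$ to be a single directed path $s = v_0 \to v_1 \to \dots \to v_{n-1} = t$ on $n$ nodes, hence with $m = n-1$ edges, and set the edge cost to the ``saturating'' function $f(A) = \min\{|A|,1\}$, i.e.\ $f(\emptyset)=0$ and $f(A)=1$ for every nonempty $A \subseteq \Es$. The first (small) thing to check is that this is a legitimate instance of Problem~\ref{prob:cc}: $f$ is normalized, monotone nondecreasing, and submodular, since the marginal gain of adding an edge is $1$ when the current set is empty and $0$ otherwise, which is diminishing.

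Next I would identify the two optimal values. On the discrete side, every $(s,t)$-cut must contain at least one edge of the path, and any single edge already disconnects $s$ from $t$, so $f(C^*) = 1$. On the flow side there is a unique $(s,t)$-path $P$, whose edge set is all of $\Es$, so Lemma~\ref{lem:gapbds} applies with both of its bounds collapsing to the same value: $\nu^* = \min_{P' \subseteq P} f(P')/|P'| = 1/m$, the minimum being attained at $P' = P$. (Equivalently one can argue directly: flow conservation along a path forces $\varphi$ to be constant, say equal to $\nu$, on every edge, so the submodular capacity constraint $\varphi(\Es) = m\nu \le f(\Es) = 1$ forces $\nu \le 1/m$, while $\nu = 1/m$ is feasible.) Combining the two values gives $f(C^*)/\nu^* = m = n-1$, which also shows the relaxation~\eqref{eq:relaxshort} has integrality gap $n-1$ on this instance.

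I do not anticipate a genuine obstacle: the construction is elementary, and the only steps requiring any care are verifying monotonicity and submodularity of the saturating function and pinning down the exact optimal flow value on the path. The latter is handed to us by Lemma~\ref{lem:gapbds} once we observe the graph has a single $(s,t)$-path, and the former is immediate from the marginal-gain characterization of submodularity recalled in Section~\ref{sec:prel-notat}.
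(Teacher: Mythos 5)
Your proposal is correct and is essentially the paper's own proof: the paper uses the identical single-path graph with $n-1$ edges and the cost function $f(A) = \max_{e \in A} w(e)$ with uniform weights $w(e) = \gamma$, which for nonempty $A$ equals $\gamma\min\{1,|A|\}$ --- exactly your saturating function up to the scale factor $\gamma$ (the paper itself notes this equivalence in Section~\ref{sec:specialgaps}). The paper likewise pins down $\nu^* = \gamma/(n-1)$ via the joint capacity constraint $\varphi(\Es) \leq f(\Es)$ and observes that the two bounds of Lemma~\ref{lem:gapbds} coincide here.
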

\begin{proof}
  Corollary~\ref{lem:fcgap} can be shown via an example where the upper and lower bound of Lemma~\ref{lem:gapbds} coincide.
  The worst-case example for the flow-cut gap is a simple
graph that consists of one single path from $s$ to $t$ with $n-1$ edges. 
For this graph one of the capacity constraints is that
\begin{equation}
  \label{eq:flow_constraint}
  \varphi(\Es) = \sum_{e \in \Es}\varphi(e) \leq f(\Es).
\end{equation}
Constraint~\eqref{eq:flow_constraint} is the only relevant capacity constraint if 
the capacity (and cut cost)
function is $f(A) = \max_{e \in A}w(e)$ with weights $w(e) = \gamma$
for all $e \in \Es$ and some constant $\gamma>0$ and, consequently, $f(\Es) = \gamma$. By Constraint~\eqref{eq:flow_constraint}, the maximum flow is $\nu^* =
\frac{\gamma}{n-1}$. The optimum cooperative cut $C^*$, by contrast,
consists of any single edge and has cost $f(C^*) = \gamma$. \qed
\end{proof}
  Single path graphs as used in the previous proof can provide
  worst-case examples for rounding methods too: if $f$ is such that $f(e)
  \geq f(\Es)/|\Es|$ for all edges $e$ in the path, then the solution to
  the relaxed cut problem is maximally uninformative: all entries of
  the vector $y$ are $y(e) = \frac{f(\Es)}{n-1}$.

\section{Approximation algorithms}\label{sec:algo}

We next address approximation algorithms whereby we consider two
complementary approaches.  The first approach is to substitute the
submodular cost function $f$ by a simpler function $\hat{f}$.
Appropriate candidate functions $\hat{f}$ that admit an exact cut
optimization are the approximation by \citet{ghim09}
(Section~\ref{subsec:generic}), semi-gradient based approximations
(Section~\ref{subsec:semigrad}), or approximations by making $f$
separable across local neighborhoods (Section~\ref{subsec:pmf}).

The second approach is to solve the relaxations from Section~\ref{sec:relax} and round the resulting optimal fractional solution (Section~\ref{sec:rounding}). Conceptually very close to the relaxation approach, we offer another algorithm that solves the mathematical program~(\ref{eq:cover}) via a randomized greedy algorithm (Section~\ref{sec:randgreed}).

The relaxations approaches are affected by the flow-cut gap, or, equivalently, the length of the longest path in the graph. The approximations that use a surrogate cost function are complementary and not affected by the ``length'', but by a notion of the ``width'' of the graph.

\begin{table}
  \centering
  \small
  \begin{tabular}{|p{0.33\textwidth}c|p{0.28\textwidth}c|}
    \hline
    \multicolumn{2}{|c|}{approximating $f$} &     \multicolumn{2}{c|}{relaxation} \\ \hline
    generic (\S \ref{subsec:generic}) & $O(\sqrt{m}\log m)$ &
    randomized (\S \ref{sec:randgreed}) & $|P_{\max}|$
    \\
    semigradient (\S \ref{subsec:semigrad}) & $\frac{|C^*|}{(|C^*|-1)(1-\kappa_f) + 1}$ &
    rounding I (\S \ref{sec:rounding}) & $|P_{\max}|$
    \\
    polymatroidal flow (\S \ref{subsec:pmf}) & $\min\{\Delta_s, \Delta_t\}$ &
    rounding II (\S \ref{sec:rounding}) & $|\Vs|-1$
    \\
    \hline
  \end{tabular}
  \caption{Overview of the algorithms and their approximation factors.}
  \label{tab:algos_all}
\end{table}

\subsection{Approximating the cost function}\label{sec:surrog}
We begin with algorithms that use a suitable approximation $\fa$ to
$f$, for which the problem
\begin{equation}
  \text{minimize } \fa(C) \quad \text{s.t. $C \subseteq \Es$ is a cut}
\end{equation}
is solvable exactly in polynomial time. The following lemma will be
the basis for the approximation bounds.
\begin{lemma}\label{lem:fapprox}
  Let $\Sh = \argmin_{S \in \mathcal{S}}\fa(S)$.
  If for all $S \subseteq \Es$, it holds that $f(S) \leq \fa(S)$, and
  if for the optimal solution $S^*$ to Problem~(\ref{eq:cc}), it
  holds that $\fa(S^*) \leq \alpha f(S^*)$, then $\Sh$ is an $\alpha$-approximate solution to 
  Problem~(\ref{eq:cc}):
  \begin{equation*}
    f(\Sh)\; \leq\; \alpha f(S^*).
  \end{equation*}
\end{lemma}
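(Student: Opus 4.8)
The plan is to chain the three given inequalities directly; this is a short sandwiching argument. First I would record what each hypothesis gives us at a specific point. Applying the hypothesis $f(S) \leq \fa(S)$ at the point $S = \Sh$ yields $f(\Sh) \leq \fa(\Sh)$. Next, since $\Sh$ minimizes $\fa$ over the feasible set $\mathcal{S}$ (the set of $(s,t)$-cuts) and $S^*$ is in particular a feasible cut, optimality of $\Sh$ gives $\fa(\Sh) \leq \fa(S^*)$. Finally, the second hypothesis gives $\fa(S^*) \leq \alpha f(S^*)$. Concatenating,
\begin{equation*}
  f(\Sh) \;\leq\; \fa(\Sh) \;\leq\; \fa(S^*) \;\leq\; \alpha f(S^*),
\end{equation*}
which is exactly the claimed bound.

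The only modeling point worth spelling out explicitly is that $\Sh$ is itself a feasible solution to Problem~\eqref{eq:cc}: it is drawn from $\mathcal{S}$, the set of $(s,t)$-cuts, so calling it an $\alpha$-approximate solution is meaningful. I would also note in passing that the argument uses monotonicity or submodularity of $f$ nowhere — it is purely an argument about upper-bounding surrogates and exact minimization of the surrogate — so the lemma is robust and will apply verbatim to each of the specific surrogates $\famo$, $\fai$, $\fae$ constructed in the later subsections, with only the value of $\alpha$ changing.

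There is essentially no obstacle here; the ``hard part'' is entirely deferred to the later sections, namely constructing surrogates $\fa$ that (i) are exactly minimizable over cuts — typically by being modular or otherwise reducible to \mcut{} — and (ii) simultaneously satisfy the two-sided sandwich $f(S) \leq \fa(S)$ for all $S$ and $\fa(S^*) \leq \alpha f(S^*)$ for a small $\alpha$. Establishing property (ii) for the generic Goemans et al.\ construction, for the semigradient upper bounds, and for the polymatroidal-flow separable approximation is where the real work lies; the present lemma merely packages the payoff of those constructions into a clean interface.
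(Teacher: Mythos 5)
Your proof is correct and is precisely the paper's argument: the single chain $f(\Sh) \leq \fa(\Sh) \leq \fa(S^*) \leq \alpha f(S^*)$, using the upper-bound hypothesis at $\Sh$, optimality of $\Sh$ for $\fa$ over the feasible cuts, and the $\alpha$-closeness hypothesis at $S^*$. The additional remarks about feasibility of $\Sh$ and about where the real work lies are accurate but not needed.
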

\begin{proof}
  Since $\fa(\Sh) \leq \fa(S^*)$, it follows that
  $f(\Sh) \leq \fa(\Sh) \leq \fa(S^*) \leq \alpha f(S^*)$. \qed
\end{proof}

\subsubsection{A generic approximation}\label{subsec:generic}
\citet{ghim09} define a generic approximation of a monotone submodular function\footnote{We will also call it the \emph{ellipsoidal approximation} since it is based on approximating a symmetrized version of the submodular polyhedron by an ellipsoid.}
that has the functional form $\fae(A) = \sqrt{\sum_{e \in A}w_f(e)}$. The
  weights $w_f(e)$ depend on $f$. 
  When using $\fae$, we
compute a minimum cut for the cost $\fae^2$, which is a modular sum of weights and hence results in 
a standard \mcut{} problem.
In practice, the bottleneck lies
in computing the weights $w_f$.
\citet{ghim09} show how to compute weights such that $f(A) \leq \fa(A) \leq \alpha f(A)$ with
$\alpha = O(\sqrt{m})$ for a matroid rank function, and $\alpha =
O(\sqrt{m}\log m)$ otherwise. We add that for an integer polymatroid rank function bounded by $M =
\max_{e \in \Es} f(e)$, the logarithmic factor can be replaced by a
constant to yield $\alpha = O(\sqrt{mM})$ (if one approximates the matroid
expansion\footnote{The expansion is described in Section~10.3 in
  \citep{na97}. In short, we replace each element $e$ by a set
  $\hat{e}$ of $f(e)$ parallel elements. Thereby we extend $f$ to a
  submodular function $\hat{f}$ on subsets of $\bigcup_i \hat{e}_i$. The desired
  rank function is now the convolution $r(\cdot) = \hat{f}(\cdot) \ast
  |\cdot|$ and it satisfies $f(S) = r(\bigcup_{e \in S}\hat{e})$.} of
the polymatroid instead of $f$ directly). 
Together with Lemma~\ref{lem:fapprox}, this yields the following
approximation bounds.
\begin{lemma}\label{lem:ubellipsoid}
  Let $\Ch = \argmin_{C \in \Cs}\fae(C)$ be the minimum cut for cost
  $\fae$, and $C^* = \argmin_{C \in \Cs}f(C)$. Then $f(\widehat{C}) =
  O(\sqrt{m}\log m) f(C^*)$. If $f$ is integer-valued and we
  approximate its matroid expansion, then $f(\widehat{C}) =
  O(\sqrt{mM}) f(C^*)$, where $M \leq \max_ef(e)$.
\end{lemma}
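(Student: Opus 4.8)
The plan is to combine the ellipsoidal approximation of \citet{ghim09} with the generic reduction of Lemma~\ref{lem:fapprox}, so that the bulk of the work reduces to verifying the hypotheses of that lemma. First I would recall the two properties of the ellipsoidal surrogate $\fae$ established by \citet{ghim09}: (i) $\fae$ has the modular-under-the-square-root form $\fae(A) = \sqrt{\sum_{e \in A} w_f(e)}$ with nonnegative weights $w_f$, and (ii) $f(A) \le \fae(A) \le \alpha f(A)$ for all $A \subseteq \Es$, where $\alpha = O(\sqrt{m}\log m)$ in general. Property (i) is what makes the surrogate algorithmically useful: minimizing $\fae$ over cuts is the same as minimizing $\fae^2(A) = \sum_{e \in A} w_f(e)$ over cuts, since $x \mapsto x^2$ is monotone on $\mathbb{R}_+$ and a cut has nonnegative surrogate cost; and $\sum_{e\in A} w_f(e)$ is a genuine modular edge-weight function, so $\Ch = \argmin_{C \in \Cs} \fae(C)$ is computed exactly by a single call to a standard \mcut{} solver.

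With these two facts in hand, the first claim is immediate: property (ii) gives $f(S) \le \fae(S)$ for every $S$, and in particular $\fae(S^*) \le \alpha f(S^*)$ for the optimal cooperative cut $S^*$, so Lemma~\ref{lem:fapprox} with $\fa = \fae$ yields $f(\Ch) \le \alpha f(C^*) = O(\sqrt{m}\log m)\, f(C^*)$. For the second claim I would invoke the refinement sketched in the paragraph preceding the lemma: when $f$ is an integer polymatroid rank function with $M = \max_{e}f(e)$, one applies the ellipsoidal approximation not to $f$ but to its matroid expansion $\hat f$ on the ground set $\bigcup_e \hat e$ of $\sum_e f(e) \le mM$ parallel elements, for which the matroid bound $\alpha = O(\sqrt{\#\text{elements}}) = O(\sqrt{mM})$ applies rather than the general bound with its logarithmic factor. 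Here one uses the identity $f(S) = r(\bigcup_{e\in S}\hat e)$ with $r = \hat f \ast |\cdot|$ to transfer the approximation guarantee back to $f$: the surrogate weights on the original edges are obtained by restricting the expanded surrogate appropriately, and the sandwich $f(S) \le \fae(S) \le O(\sqrt{mM}) f(S)$ survives the contraction. Lemma~\ref{lem:fapprox} then delivers $f(\Ch) = O(\sqrt{mM})\, f(C^*)$.

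The main obstacle is not in the cut-optimization step (that is a black-box reduction to \mcut) but in being careful about the expansion argument for the integer case: one must check that the approximation factor is preserved when passing from the expanded matroid to $f$ via the convolution $r = \hat f \ast |\cdot|$, i.e., that choosing a minimum $\fae$-cut on the original graph is equivalent to choosing one on the expanded instance, and that $\max_e f(e)$ rather than $\sum_e f(e)$ is the right quantity controlling the number of expanded elements up to the $\sqrt{m}$ already present. Since the number of parallel copies of edge $e$ is $f(e) \le M$, the expanded ground set has size at most $mM$, which is exactly where the $\sqrt{mM}$ comes from; the remaining bookkeeping is routine given the cited description of the expansion in Section~10.3 of \citep{na97}.
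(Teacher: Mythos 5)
Your proposal is correct and follows essentially the same route as the paper: the paper proves this lemma by combining the sandwich property $f(A) \le \fae(A) \le \alpha f(A)$ from \citet{ghim09} (with $\alpha = O(\sqrt{m}\log m)$ in general, improved to $O(\sqrt{mM})$ via the matroid expansion for integer polymatroids) with the generic reduction of Lemma~\ref{lem:fapprox}, exactly as you do. Your additional care about why minimizing $\fae$ over cuts reduces to a standard \mcut{} on $\fae^2$, and about the size $\sum_e f(e) \le mM$ of the expanded ground set, is consistent with the paper's (terser) presentation.
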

The lower bound in Theorem~\ref{thm:lower} suggests that for sparse graphs, the bound in
Lemma~\ref{lem:ubellipsoid} is tight up to logarithmic factors.

\subsubsection{Approximations via semigradients}
\label{subsec:semigrad}

For any monotone submodular function $f$ and any set $A$, there is a simple way to compute a modular upper bound $\fs$ to $f$ that agrees with $f$ at $A$. In other words, $\fs$ is a discrete \emph{supergradient} of $f$ at $A$. We define $\fs$ as \citep{jb11,iyer13} 
\begin{align}
  \fs(B; A) = f(A) + \sum_{e \in B \setminus A} f(e \mid A) - \sum_{e \in A \setminus B} f(e \mid \Es \setminus e).
\end{align}
\begin{lemma}\label{lem:curv}
  Let $\widehat{C} \in \argmin_{C \in \Cs}\fs(C; \emptyset)$. Then 
  \begin{equation*}
    f(\widehat{C}) \leq \frac{|C^*|}{(|C^*|-1)(1-\kappa_f) + 1} f(C^*),
  \end{equation*}
  where $\kappa_f = \max_{e}\, \big(1- \frac{f(e \mid \Es \setminus e)}{f(e)}\big)$ is the curvature of $f$.
\end{lemma}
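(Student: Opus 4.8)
The plan is to invoke Lemma~\ref{lem:fapprox} with the choice $\fa = \fs(\cdot\,;\emptyset)$, so the task reduces to (i) verifying that $\fs(\cdot\,;\emptyset)$ is an upper bound on $f$, and (ii) bounding the ratio $\fs(S^*;\emptyset)/f(S^*)$ for any fixed set $S^* = C^*$, where the bound should come out as $\frac{|C^*|}{(|C^*|-1)(1-\kappa_f)+1}$. First I would recall the supergradient formula at $A=\emptyset$, which collapses to $\fs(B;\emptyset) = \sum_{e\in B} f(e\mid \Es\setminus e)$, a modular function with the per-element weights $f(e\mid\Es\setminus e)$. That this dominates $f$ is the standard fact that for a monotone submodular $f$ and any $B=\{e_1,\dots,e_k\}$, one has $f(B)\le \sum_{i} f(e_i\mid B\setminus\{e_i\}) \le \sum_i f(e_i\mid \Es\setminus e_i)$, where the first inequality telescopes a decomposition of $f(B)$ and the second uses submodularity (marginals shrink as the conditioning set grows, and $B\setminus\{e_i\}\subseteq \Es\setminus e_i$). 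So hypothesis (i) of Lemma~\ref{lem:fapprox} holds.

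Next I would control the ratio on the optimal cut $C^*$. Write $c=|C^*|$ and use the curvature definition $\kappa_f = \max_e\bigl(1 - \tfrac{f(e\mid\Es\setminus e)}{f(e)}\bigr)$, which gives $f(e\mid\Es\setminus e)\le (1-\kappa_f) f(e)$ for every $e$, and hence $\fs(C^*;\emptyset) = \sum_{e\in C^*} f(e\mid\Es\setminus e) \le (1-\kappa_f)\sum_{e\in C^*} f(e)$. To finish I need a lower bound on $f(C^*)$ in terms of $\sum_{e\in C^*} f(e)$. The key inequality here is the curvature-based bound $f(C^*) \ge \bigl((1-\kappa_f)(c-1)+1\bigr)\cdot \tfrac{1}{c}\sum_{e\in C^*} f(e)$ — equivalently $\sum_{e\in C^*} f(e) \le \tfrac{c}{(1-\kappa_f)(c-1)+1} f(C^*)$. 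Combining the two displays yields $\fs(C^*;\emptyset) \le (1-\kappa_f)\cdot\tfrac{c}{(1-\kappa_f)(c-1)+1} f(C^*)$; but in fact one wants the cleaner bound $\fs(C^*;\emptyset)\le \tfrac{c}{(1-\kappa_f)(c-1)+1} f(C^*)$ directly, so I would instead prove that $\fs(C^*;\emptyset)\le \tfrac{c}{(1-\kappa_f)(c-1)+1}\,f(C^*)$ in one step by expanding $f(C^*)$ via a worst-case telescoping. Concretely, order $C^*=\{e_1,\dots,e_c\}$ and write $f(C^*)=\sum_{i=1}^c f(e_i\mid\{e_1,\dots,e_{i-1}\})$; each marginal $f(e_i\mid\{e_1,\dots,e_{i-1}\})$ is sandwiched between $f(e_i\mid\Es\setminus e_i)$ and $f(e_i)$, and using $f(e_i\mid\Es\setminus e_i)\ge (1-\kappa_f) f(e_i)$ together with the fact that at most one term (namely $i=1$) enjoys the full value $f(e_1)$ while the rest are only guaranteed to be $\ge (1-\kappa_f) f(e_i)$, one gets $f(C^*)\ge (1-\kappa_f)\sum_i f(e_i) + \kappa_f \max_i f(e_i)\ge \bigl((1-\kappa_f) + \tfrac{\kappa_f}{c}\bigr)\sum_i f(e_i)$, and $(1-\kappa_f)+\kappa_f/c = \tfrac{(1-\kappa_f)(c-1)+1}{c}$. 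Rearranging gives exactly $\sum_i f(e_i)\le \tfrac{c}{(1-\kappa_f)(c-1)+1} f(C^*)$, and since $\fs(C^*;\emptyset)\le \sum_i f(e_i)$ (because $f(e_i\mid\Es\setminus e_i)\le f(e_i)$ by monotonicity/submodularity), the desired bound $\fs(C^*;\emptyset)\le \alpha f(C^*)$ with $\alpha = \tfrac{|C^*|}{(|C^*|-1)(1-\kappa_f)+1}$ follows. Lemma~\ref{lem:fapprox} then immediately yields $f(\widehat C)\le \alpha f(C^*)$.

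The main obstacle is getting the curvature lower bound on $f(C^*)$ sharp: the naive bound $f(C^*)\ge (1-\kappa_f)\sum_{e} f(e)$ loses a factor and does not reproduce the claimed denominator. The fix — noting that the first element in any ordering contributes its full singleton value $f(e_1)$, not just $(1-\kappa_f) f(e_1)$, and then optimizing which element plays that role (or just using $\max_i f(e_i)\ge \tfrac1c\sum_i f(e_i)$) — is exactly the argument used in the classical curvature analysis of greedy submodular maximization and of submodular cost problems, e.g.\ \citep{iyer13}. I would also double-check the edge case $|C^*|=1$, where the bound degenerates to $f(\widehat C)\le f(C^*)$ (i.e.\ the algorithm is exact), which is consistent since a single-edge cut is found exactly by minimizing any modular surrogate that agrees with $f$ on singletons up to the fact that $f(e\mid\Es\setminus e)\le f(e)$ — one should verify monotonicity of the surrogate doesn't spoil optimality here, but since $\fs(\cdot;\emptyset)$ is modular with nonnegative weights, the exact-cut subroutine is valid.
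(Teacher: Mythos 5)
The paper itself does not prove this lemma (it defers to \citet{iyer13}), so I am judging your argument on its own terms. Your overall strategy --- invoke Lemma~\ref{lem:fapprox} with $\fa = \fs(\cdot\,;\emptyset)$ and establish $\sum_{e\in C^*} f(e) \le \frac{|C^*|}{(|C^*|-1)(1-\kappa_f)+1}\,f(C^*)$ via the ``first element contributes its full singleton value'' telescoping --- is the right one, and that curvature computation is carried out correctly. However, there is a concrete error at the start: plugging $A=\emptyset$ into the definition of $\fs$ gives $\fs(B;\emptyset) = f(\emptyset) + \sum_{e\in B} f(e\mid\emptyset) = \sum_{e\in B} f(e)$ (the third sum is empty and $f(e\mid\emptyset)=f(e)$ by normalization), not $\sum_{e\in B} f(e\mid \Es\setminus e)$ as you claim. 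This matters because the chain you use to verify hypothesis (i) of Lemma~\ref{lem:fapprox} is reversed in both places: since marginals shrink as the conditioning set grows and $\{e_1,\dots,e_{i-1}\}\subseteq B\setminus\{e_i\} \subseteq \Es\setminus\{e_i\}$, telescoping actually gives $f(B) \ge \sum_i f(e_i\mid B\setminus\{e_i\}) \ge \sum_i f(e_i\mid\Es\setminus e_i)$. In other words, the modular function with weights $f(e\mid\Es\setminus e)$ is a \emph{sub}gradient (lower bound) of $f$, not a supergradient; if the surrogate being minimized really were that function, the domination hypothesis $f\le\fa$ of Lemma~\ref{lem:fapprox} would fail and the argument would collapse. (A two-element example with $f(A)=\min\{|A|,1\}$ has $\sum_e f(e\mid\Es\setminus e)=0<1=f(\Es)$.)

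Fortunately the error is self-correcting: with the true surrogate $\fs(B;\emptyset)=\sum_{e\in B}f(e)$, hypothesis (i) is just subadditivity of a nonnegative monotone submodular function, and hypothesis (ii) is exactly the bound $\sum_{e\in C^*}f(e)\le \frac{|C^*|}{(|C^*|-1)(1-\kappa_f)+1}f(C^*)$ that you prove in the second half (your closing step ``$\fs(C^*;\emptyset)\le\sum_i f(e_i)$'' becomes an equality, and the detour in your middle paragraph about a spurious extra factor of $(1-\kappa_f)$ disappears). So the proof is salvageable essentially verbatim once you correct the formula for $\fs(\cdot\,;\emptyset)$ and the direction of those two inequalities; as written, though, the verification of the key hypothesis rests on a false claim.
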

Lemma~\ref{lem:curv} was shown in \citep{iyer13}. 
As $m$ (and
correspondingly $|C^*|$) gets large, the bound eventually no longer
depends on $m$ and instead only on the curvature of $f$.  
In practice, results are best when 
the supergradient is used in an iterative algorithm: starting with $C_0 =
\emptyset$, one computes $C_t \in \argmin_{C \in \Cs}\fs(C; C_{t-1})$ until
the solution no longer changes between iterations. The minimum cut for
the cost function $\fs(C; A)$ can be computed as a minimum cut with
edge weights
\begin{equation}
  w(e) =
  \begin{cases}
    f(e \mid \Es \setminus e) & \text{ if } e \in A\\
    f(e \mid A) & \text{ if } e \notin A.
  \end{cases}
\end{equation}
Consequently, the semigradient approximation yields a very easy and practical algorithm
that iteratively uses standard minimum cut as a subroutine.
This algorithm was used e.g.\ in \citep{jb11}, and the visual results in \citep{kohliOJ13} show that it typically yields very good solutions in practice on 
certain problem instances where the optimum solution can be computed
exactly.

\subsubsection{Approximations by introducing separation}\label{subsec:pmf}

The approximations in Section~\ref{subsec:generic} and
\ref{subsec:semigrad} are indifferent to the structure of the
graph, while following approximation is not.  One may say that
Problem~(\ref{prob:cc}) is hard because $f$ introduces non-local
dependencies between edges that might be anywhere in the graph.
Indeed, the problem is easier if dependencies between edges are
restricted to local neighborhoods defined by the graph, for example,
edges that might be incident to the same vertex.

\begin{figure}
  \centering
  \begin{minipage}{0.45\linewidth}
    \begin{center}
      \includegraphics[width=0.95\textwidth]{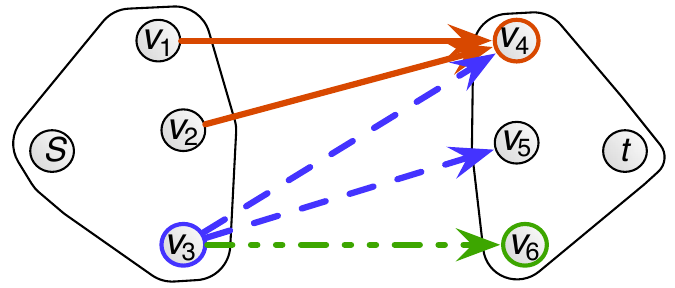}
    \end{center}
  \end{minipage}
  \begin{minipage}{0.45\linewidth}
    \begin{align*}
      \fap(C) = &f(\{(v_1,v_4),(v_2,v_4)\})\\
      &+ f(\{(v_3,v_4),(v_3,v_5)\})\\
      &+ f(\{(v_3,v_6)\})
    \end{align*}
  \end{minipage}
  \caption[Approximation of the cut cost via a partition.]{Approximation of a cut cost. Red edges are in $C^{\Pi}_{v_4}$ (head), blue dashed edges in $C^{\Pi}_{v_3}$ (tail), and the green dash-dotted edge in $C^{\Pi}_{v_6}$ (head).
}
  \label{fig:lawler}
\end{figure}

Hence, we define an approximation $\fap$ that is globally
separable but locally exact. To measure the cost of
an edge set $C \subseteq \Es$, we partition $C$ into groups $\Pi(C)=\{C^\Pi_v\}_{v \in
   V}$, where the edges in set $C^\Pi_v$ must be incident
 to node $v$ ($C_v^\Pi$ may be empty). That is, we assign each edge either to its
 head or to its tail node in any partition, as illustrated in Figure~\ref{fig:lawler}. Let $\mathcal{P}_C$
 be the family of all such partitions (which vary over the head or tail
 assignment of each edge). We define an approximation 
\begin{equation}
  \label{eq:pmf}
  \fap(C) = \min_{\Pi(C) \in \mathcal{P}_C}\nlsum_{v \in V}f(C^\Pi_v)
\end{equation}
that (once the partition is fixed) decomposes across different node
incidence edge sets, but is accurate within a group $C^\Pi_v$.  Thanks
to the subadditivity of $f$, the function $\fap$ is an upper bound on
$f$. It is a convolution of submodular functions and always is the tightest approximation that is a direct sum over
any partition in $\mathcal{P}_C$.  Perhaps surprisingly, even though the
approximation~\eqref{eq:pmf} looks difficult to compute and is in
general not even a submodular function (an example is in
Appendix~\ref{app:convolution}), it is possible to solve a
minimum cut with cost $\fap$ exactly. To do so, we exploit its duality
to a generalized maximum flow problem, namely polymatroidal network flows.

\paragraph{Polymatroidal network flows.}
Polymatroidal network flows \citep{lm82,h82} generalize the capacity constraint of
traditional flow problems. They retain the constraint of flow conservation (%
a function $\varphi: \Es \to \mathbb{R}_+$ is
a flow if the inflow at each node $v \in \Vs \setminus\{s,t\}$ equals the
outflow).
The edge-wise capacity constraint 
$\varphi(e) \leq \mathrm{cap}(e)$ for all $e \in \Es$, given a capacity function
$\mathrm{cap}: \Es \to \mathbb{R}_+$
is replaced by
local submodular capacities over
\emph{sets} of edges incident at each node $v$:
$\mathrm{cap}_v^{\text{in}}$ for incoming edges, and 
$\mathrm{cap}_v^{\text{out}}$ for outgoing edges. 
The capacity constraints at each $v \in \Vs$ are
  \begin{align*}
    \varphi(A) &\leq \mathrm{cap}_v^{\text{in}}(A) \quad\;\, \text{ for all } A \subseteq \delta^-(v) \; \text{(incoming edges), and}\\
    \varphi(A) &\leq \mathrm{cap}_v^{\text{out}}(A) \quad \text{ for all } A \subseteq \delta^+(v) \; \text{(outgoing edges)}.
  \end{align*}
Each edge $(u,v)$ belongs to two incidence sets, $\delta^+u$ and $\delta^-v$.
A maximum flow with such constraints can be found in time $O(m^4\tau)$ by the layered augmenting path algorithm by
\citet{ttt86}, where $\tau$ is the time to minimize a submodular function on any set $\delta^+v$, $\delta^-v$ for any $v$. Hence, the 
incidence sets are in general much smaller than $\Es$.

A special polymatroidal maximum flow turns out to be dual to the cut
problem we are interested in. To see this, we will use the restriction
$\restriction{f}{A}$ of the function $f$ to a subset $A$. 
For ease of reading we drop the explicit restriction notation later. We assume throughout that the desired cut is minimal\footnote{A cut $C \subseteq \Es$ is \emph{minimal} if no proper subset $B \subset C$ is a cut.}, since additional
edges can only increase its cost.
\begin{lemma}\label{lem:dualpoly}
  Minimum $(s,t)$-cut with cost function $\fap$ is dual to a
  polymatroidal network flow with capacities
  $\mathrm{cap}_v^{\text{in}}=\restriction{f}{\delta^-v}$ and
  $\mathrm{cap}_v^{\text{out}}=\restriction{f}{\delta^+v}$ at each
  node $v \in \Vs$.
\end{lemma}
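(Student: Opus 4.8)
The plan is to exhibit an explicit LP/duality pair and verify that the primal is exactly the minimum-cut problem with cost $\fap$ while the dual is exactly the polymatroidal flow with the stated capacities. First I would write the minimum cut with cost $\fap$ in the ``path-covering'' form of \eqref{eq:cover}, but applied to $\fap$: choose $y \in [0,1]^\Es$ minimizing $\widetilde{\fap}(y)$ subject to $\sum_{e \in P} y(e) \ge 1$ for every $(s,t)$-path $P$. The key structural fact is that, because $\fap$ is a \emph{direct sum} over the node-incidence partition once the head/tail assignment is fixed, its Lov\'asz extension decomposes too: for any fixed assignment of each edge to one of its endpoints, $\widetilde{\fap}$ splits as $\sum_{v} \widetilde{f}\!\restriction{}{\,\delta^{\pm}v}$ restricted to the edges assigned to $v$, and minimizing over assignments is exactly the convolution defining $\fap$. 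I would then argue that it suffices to handle each incidence set separately in the dual, since the cut cost ``sees'' each edge only through the one incidence set it is charged to.

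Next I would take the dual of this convex program the same way the excerpt dualizes \eqref{eq:relaxshort}: write each local Lov\'asz extension as $\widetilde{f}\!\restriction{}{\,\delta^{\pm}v}(y) = \max_{z \in \mathcal{P}(f\!\restriction{}{\delta^{\pm}v})} z^\top y$, introduce flow variables $\varphi$ dual to the path-covering (equivalently, edge balance) constraints, and a scalar $\nu$ for the $s$--$t$ demand. The local-decomposition structure of $\fap$ is precisely what turns the single global submodular-polyhedron constraint $\varphi \in \mathcal{P}(f)$ of \eqref{eq:constr1} into the family of \emph{local} constraints $\varphi(A) \le f(A)$ for $A \subseteq \delta^-(v)$ and $A \subseteq \delta^+(v)$ at each node — i.e., $\mathrm{cap}_v^{\text{in}} = \restriction{f}{\delta^-v}$ and $\mathrm{cap}_v^{\text{out}} = \restriction{f}{\delta^+v}$. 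Flow conservation comes out of the node-balance constraints exactly as in the classical case. I would then invoke strong LP duality (the program is convex with a bounded feasible region after the usual normalization), so the optimal cut value for $\fap$ equals the optimal polymatroidal flow value, which is the assertion of Lemma~\ref{lem:dualpoly}. For the primal integrality — that the fractional optimum of the relaxed $\fap$-cut is attained by an integral cut — I would appeal to the fact that $\fap$ is a convolution of submodular functions and that its associated polyhedron has the relevant integrality / the Lov\'asz extension of each local piece is tight at $0/1$ vectors, so a threshold-rounding of $y$ preserves the value; alternatively one cites the Megiddo/Lawler–Martel min-cut max-flow theorem for polymatroidal networks \citep{lm82,ttt86} directly, which already gives the integral min-cut–max-flow equality, and then only needs identifying that integral min-cut cost as $\fap(C)$.

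The main obstacle I anticipate is the bookkeeping that makes the head/tail assignment in the definition of $\fap$ line up with the two-incidence-sets-per-edge structure of polymatroidal flow: each edge $(u,v)$ appears in both $\delta^+(u)$ and $\delta^-(v)$, and one must check that charging a cut edge to exactly one endpoint (as in $\Pi(C)$) corresponds on the flow side to the edge's capacity being governed by the \emph{binding} one of its two local constraints, with no double counting. Concretely, the delicate step is showing that $\min_{\Pi}\sum_v f(C^\Pi_v)$ — rather than some larger ``sum over both endpoints'' quantity — is what the flow dual certifies; this is where the convolution $r(\cdot) = \hat{f}(\cdot) \ast |\cdot|$ intuition and the subadditivity of $f$ (Proposition~\ref{prop:properties_h}(2)) are used, ensuring $\fap \ge f$ and that the decomposition is the tightest direct-sum upper bound. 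Once that correspondence is pinned down, the duality itself is a routine transcription of the derivation of \eqref{eq:relflow} from \eqref{eq:relaxshort} with the global polyhedron replaced by its local factors.
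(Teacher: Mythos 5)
There is a genuine gap, on two levels. Your primary route --- dualizing a Lov\'asz-extension relaxation of the $\fap$-cut problem --- does not go through as described: $\fap$ is a convolution of submodular functions and is in general \emph{not} submodular (the paper gives an explicit counterexample in Appendix~\ref{app:convolution}), so its Lov\'asz extension need not be convex and the representation $\widetilde{\fap}(y)=\max_{z\in\mathcal{P}(\fap)}z^\top y$ that your dualization relies on is unavailable. Splitting into one convex program per head/tail assignment and then minimizing over assignments does not rescue this, because a minimum of maxima is not a maximum of minima; what you would in effect be re-proving is the polymatroidal max-flow/min-cut theorem itself, including its integrality, which is a substantive result and not a routine transcription of the derivation of \eqref{eq:relflow}. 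The paper's proof avoids all of this by taking that theorem as given (Lov\'asz, Lawler--Martel): the dual of the polymatroidal flow with the stated capacities is the minimum cut under the convolution cost $(\mathrm{cap}^{\text{in}}\ast\mathrm{cap}^{\text{out}})(C)$, and the only thing left to prove is that this convolution equals $\fap(C)$.

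That identification is exactly the step you flag as ``delicate'' but do not carry out, and the tool you reach for (subadditivity) gives only one direction. A choice of $A\subseteq C$ in the convolution is a head/tail assignment $\Pi_A$ with $C_v^{\Pi_A}=(A\cap\delta^-v)\cup\bigl((C\setminus A)\cap\delta^+v\bigr)$, and subadditivity yields $f(C_v^{\Pi_A})\le f(A\cap\delta^-v)+f\bigl((C\setminus A)\cap\delta^+v\bigr)$, hence only $\fap(C)\le(\mathrm{cap}^{\text{in}}\ast\mathrm{cap}^{\text{out}})(C)$. The missing idea is \emph{minimality} of the cut: for a minimal directed $(s,t)$-cut every cut edge goes from the $s$-side to the $t$-side, so a node on the $s$-side has no incoming cut edges and a node on the $t$-side has no outgoing ones; therefore each $C_v^{\Pi}$ lies entirely in $\delta^+v$ or entirely in $\delta^-v$, one of the two terms is $f(\emptyset)=0$, and the subadditivity inequality becomes an equality, so that $\min_{\Pi}\sum_v f(C_v^{\Pi})$ coincides with the convolution. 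This observation is the whole content of the paper's proof; without it the two quantities in the lemma need not coincide on non-minimal cuts.
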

The proof is provided in Appendix~\ref{app:dualPMF}. It uses,
with some additional considerations, the
dual problem to a polymatroidal maxflow, which can be stated as
follows. Let $\mathrm{cap}^{\text{in}}: 2^\Es \to \mathbb{R}_+$ be the
joint incoming capacity function, i.e., $\mathrm{cap}^{\text{in}}(C) = \sum_{v \in
  V}\mathrm{cap}_v^{\text{in}}(C \inter \delta^-v)$, and let equivalently
$\mathrm{cap}^{\text{out}}$ be the corresponding joint outgoing capacity. The dual of
the polymatroidal maximum flow is a minimum cut problem whose cost is a
convolution of edge capacities \citep{lo83}:
\begin{equation}\label{eq:capacity_conv}
  \mathrm{cap}(C) =
  (\mathrm{cap}^{\text{in}} \ast \mathrm{cap}^{\text{out}})(C)
  \;\triangleq\; \min_{A \subseteq C} \Bigl[ \mathrm{cap}^{\text{in}}(A) +
  \mathrm{cap}^{\text{out}}(C \setminus A)\Bigr].
\end{equation}  
This convolution is in general not a submodular function.
Lemma~\ref{lem:dualpoly} implies that we can solve the approximate \mcoocut{} via its dual flow problem. The primal cut solution will be given by a set of full edges, i.e., edges whose joint flow equals their joint capacity.

We can now state the resulting approximation bound for \mcoocut. Let $C^*$ be the optimal cut for cost $f$. We define
$\Delta_s$ to be the tail nodes of the edges in $C^*$: 
$\Delta_s = \{v \mid
\exists (v,u)\in C^*\}$, and
similarly, $\Delta_t= \{v \mid \exists (u,v)\in C^*\}$.
The sets $\Delta_s, \Delta_t$ provide a measure of the ``width'' of the graph.
\begin{theorem}\label{thm:approxPMF}
  Let $\Ch$ be the minimum cut for cost $\fap$, and $C^*$ the
  optimal cut for cost $f$. Then
  \begin{equation*}
    f(\Ch)\; \leq\;
    \min\{|\Delta_s|,|\Delta_t|\}\, f(C^*)\; \leq\; \frac{|\Vs|}{2} f(C^*).
  \end{equation*}
\end{theorem}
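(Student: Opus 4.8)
The plan is to bound $f(\Ch)$ by exhibiting a \emph{single} feasible partition of the optimal cut $C^*$ and showing that $\fap$ evaluated with that partition is at most $\min\{|\Delta_s|,|\Delta_t|\}\,f(C^*)$. Since $\Ch$ minimizes $\fap$ over all cuts, we have $f(\Ch)\le\fap(\Ch)\le\fap(C^*)$ (the first inequality because $\fap\ge f$ by subadditivity, as noted after \eqref{eq:pmf}; the second because $C^*$ is a cut). So it remains to upper bound $\fap(C^*)$.

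First I would use the definition $\fap(C^*)=\min_{\Pi(C^*)\in\mathcal{P}_{C^*}}\sum_{v\in V}f(C^\Pi_v)$ and choose the partition that assigns \emph{every} edge of $C^*$ to its tail node. Then the only nonempty groups $C^\Pi_v$ are those with $v\in\Delta_s$, so $\fap(C^*)\le\sum_{v\in\Delta_s}f(C^*\cap\delta^+(v))$. Now by monotonicity of $f$, each term satisfies $f(C^*\cap\delta^+(v))\le f(C^*)$, giving $\fap(C^*)\le|\Delta_s|\,f(C^*)$. Symmetrically, assigning every edge to its head node gives $\fap(C^*)\le|\Delta_t|\,f(C^*)$. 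Taking the better of the two bounds yields $f(\Ch)\le\min\{|\Delta_s|,|\Delta_t|\}\,f(C^*)$.

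For the second inequality, $\min\{|\Delta_s|,|\Delta_t|\}\le\frac12(|\Delta_s|+|\Delta_t|)$, and since $\Delta_s$ and $\Delta_t$ are both subsets of $\Vs$ one would want $|\Delta_s|+|\Delta_t|\le|\Vs|$. This holds because we may assume $C^*$ is minimal (as stated before Lemma~\ref{lem:dualpoly}): in a minimal $(s,t)$-cut every edge $(u,v)\in C^*$ lies on some $s$--$t$ path, so after cutting, $u$ is reachable from $s$ and $v$ reaches $t$; hence $\Delta_s$ lies in the $s$-side and $\Delta_t$ in the $t$-side of the residual graph, and these two node sets are disjoint, so $|\Delta_s|+|\Delta_t|\le|\Vs|$.

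The only real subtlety — and the step I would be most careful about — is the disjointness of $\Delta_s$ and $\Delta_t$, which genuinely uses minimality of $C^*$; without it a node could be both a tail and a head of cut edges and the factor could degrade. (If one does not want to invoke minimality, the weaker bound $\min\{|\Delta_s|,|\Delta_t|\}\le|\Vs|-1$ is immediate since each of $\Delta_s,\Delta_t$ omits at least one terminal, and for the $|\Vs|/2$ form one simply notes $\min\{a,b\}\le(a+b)/2\le|\Vs|/2$ once disjointness is established.) Everything else is routine: the key inequalities are $\fap\ge f$, optimality of $\Ch$ for $\fap$, and monotonicity of $f$.
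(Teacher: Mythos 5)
Your proof is correct and follows essentially the same route as the paper's: both bound $\fap(C^*)$ by the all-tail (resp.\ all-head) assignment of the edges of $C^*$, obtaining $\fap(C^*) \le \sum_{v\in\Delta_s} f(C^*\cap\delta^+(v)) \le |\Delta_s|\, f(C^*)$ by monotonicity (the paper merely reaches this via the convolution formula~\eqref{eq:capacity_conv} and Lemma~\ref{lem:dualpoly}, whereas you argue directly from the definition~\eqref{eq:pmf}). Your explicit justification of $\min\{|\Delta_s|,|\Delta_t|\}\le|\Vs|/2$ via disjointness of $\Delta_s$ and $\Delta_t$ for a minimal cut is correct and fills in a step the paper leaves implicit.
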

\begin{proof}
  To apply Lemma \ref{lem:fapprox}, we need to show that $f(C) \leq
  \fap(C)$ for all $C \subseteq \Es$, and find an $\alpha$ such that
  $\fap(C^*) \leq \alpha f(C^*)$. The first
  condition follows from the subadditivity of $f$. 

  To bound
  $\alpha$, we use Lemma~\ref{lem:dualpoly} and Equation~\ref{eq:capacity_conv}:
  \begin{align}
    \fap(C^*) &= (\mathrm{cap}^{\text{in}} \ast
    \mathrm{cap}^{\text{out}})(C^*)\\
    \label{eq:bdIa}
    &\leq \min \{\mathrm{cap}^{\text{in}}(C^*),\;
    \mathrm{cap}^{\text{out}}(C^*)\} \\ 
    &\leq \min\Big\{\nlsum_{v \in \Delta_s} f(C^* \inter \delta^+ v),\;\;
    \nlsum_{v \in \Delta_t} f(C^* \inter \delta^- v)\Big\}\\
    &\leq \min\Big\{ |\Delta_s| \max_{v \in \Delta_s} f(C^* \inter
    \delta^+ v),\;\; |\Delta_t|
    \max_{v \in \Delta_t} f(C^* \inter \delta^- v)\Big\}\\
    \label{eq:bdIb}
    &\leq \min\big\{\, |\Delta_s|,\; |\Delta_t|\,\big\}\; f(C^*).
  \end{align}
  Thus, Lemma \ref{lem:fapprox} implies an approximation bound
  $\alpha \leq \min\big\{\, |\Delta_s|,\; |\Delta_t|\,\big\} \leq
  |\Vs|/2$. \qed 
\end{proof}

  \citet{iyercurv13} show that the bound in Theorem~\ref{thm:approxPMF} can be tightened to $\frac{|\Vs|}{2 + (|\Vs|-2)(1-\kappa_f)}$ by taking into account the curvature $\kappa_f$ of $f$.

\subsection{Relaxations}
\label{sec:relaxations}

An alternative approach to approximating the edge weight function $f$ is to relax the cut constraints via the formulations \eqref{eq:cover} and \eqref{eq:relaxshort}. We analyze two algorithms: the first, a randomized algorithm, maintains a discrete solution, while the second is a simple rounding method.  Both cases remove the constraint that the cut must be minimal: any set $B$ is feasible that has a \emph{subset} $C \subseteq B$ that is a cut. Relaxing the minimality constraint makes the feasible set \emph{up-monotone} (equivalently up-closed). This is not major problem, however, since any superset of a cut can easily be pruned to a minimal cut while only, if anything, improving the solution due to the monotonicity of $f$.

\subsubsection{Randomized greedy covering}
\label{sec:randgreed}

The constraints in the path-based relaxation (\ref{eq:cover}) suggest that 
a minimum $(s,t)$-cut problem is also a min-cost cover problem:
a cut must
intersect or ``cover'' each $(s,t)$-path in the
graph. The covering formulation of the constraints in \eqref{eq:cover} clearly show the relaxation of the minimality constraint.
Algorithm~\ref{alg:greedy} solves a discrete variant of
the formulation (\ref{eq:cover}) and maintains a discrete $y \in \{0,1\}$, i.e., $y$ is eventually the \emph{indicator vector} of a cut.

Since a graph can have exponentially many $(s,t)$-paths, there can be exponentially many constraints. But all that is needed in the algorithm is to find a violated constraint, and this
is possible by
computing the shortest path $P_{\min}$, using $y$ as the (additive) edge lengths. If
$P_{\min}$'s weight is at least one, then $y$ is feasible. Otherwise, $P_{\min}$ defines a violated constraint in formulation \eqref{eq:cover}.

\begin{algorithm}[t]
  \begin{algorithmic}
    \STATE{\bf Input:} graph $\Gs=(\Vs,\Es)$, terminal nodes $s,t \in \Vs$, cost function $f: 2^\Es \to \mathbb{R}_+$
    \STATE $C = \emptyset$, $y = 0$
    \WHILE{$\sum_{e \in P_{\min}}y(e)<1$ for the current shortest path $P_{\min}$}
    \STATE choose $\beta$ within the interval $\beta \in (0, \min_{e \in P_{\min}}f(e|C)]$
    \FOR{$e$ in $P_{\min}$}
    \STATE with probability $\beta/f(e|C)$, set $C = C \union \{e\}$, $y(e)=1$.
    \ENDFOR
    \ENDWHILE
    \STATE prune $C$ to $C'$ and return $C'$
  \end{algorithmic}
  \caption{Greedy randomized path cover}\label{alg:greedy}
\end{algorithm}

Owing to the form of the constraints, we can adapt a randomized greedy
cover algorithm 
\citep{ky09} to Problem (\ref{eq:cover}) and obtain
Algorithm~\ref{alg:greedy}.  In each step, we compute the shortest
path with weights $y$ to find a possibly uncovered path. Ties are
resolved arbitrarily.  To cover the path, we randomly pick edges from
$P_{\min}$. The probability of picking edge $e$ is inversely
proportional to the marginal cost $f(e|C)$ of adding $e$ to the
current selection of edges\footnote{If $ \min_{e \in
    P_{\min}}f(e|C)=0$, then we greedily pick all such edges with
  zero marginal cost, because they do not increase the cost. Otherwise
  we sample as indicated in the algorithm.}.  We must also specify an
appropriate $\beta$. With the maximal allowed $\beta = \min_{e \in P_{\min}}f(e|C)$,
the cheapest edges are selected deterministically, and others
randomly.  
In that case, $C$ grows by at least one edge 
in each iteration, and the
algorithm terminates after at most $m$ iterations.

If the algorithm returns a set $C$ that is feasible but not a
\emph{minimal} cut, it is easy to prune it to a minimal cut
$C' \subseteq C$ without any additional approximation error, since
monotonicity of $f$ implies that $f(C') \leq f(C)$.  Such pruning can
for example be done via breadth-first search. Let $\Vs_s$ be the set
of nodes reachable from $s$ after the edges in $C$ have been
removed. Then we set $C' = \delta^+(\Vs_s)$. The set $C'$ must be a
subset of $C$,  since if there was an edge
$(u,v) \in C'\setminus C$, then $v$ would also be in $\Vs_s$, and then
$(u,v)$ cannot be in $C'$, a contradiction.

The approximation bound for Algorithm~\ref{alg:greedy} is the length of the
longest path, like that of the rounding
methods in Section~\ref{sec:rounding}. This is not a coincidence, since both algorithms essentially use the same relaxation.
\begin{lemma}\label{lem:greedy}
  In expectation (over the probability of sampling edges),
  Algorithm~\ref{alg:greedy} returns a solution $\Ch'$ with
  $\mathbb{E}[f(\Ch')] \leq |P_{\max}|f(C^*)$, where $P_{\max}$ is the
  longest simple $(s,t)$-path in $\Gs$.  
\end{lemma}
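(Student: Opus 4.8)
The plan is to adapt the randomized greedy-covering analysis of \citep{ky09} to the path-covering formulation~\eqref{eq:cover}, in which every constraint (one per $(s,t)$-path $P$) involves at most $|P_{\max}|$ edge variables. First, note that the pruning step is free: $\Ch' \subseteq \Ch$ and $f$ is monotone, so $f(\Ch') \le f(\Ch)$, and it suffices to bound $\mathbb{E}[f(\Ch)]$ for the set $\Ch$ accumulated by the while-loop. Index iterations by $j$, let $\Ch_0 = \emptyset$ and $\Ch_j$ be the set after iteration $j$, let $P_j$ and $\beta_j$ be the shortest path and the value chosen in iteration $j$, set $D_j = \Ch_j \setminus \Ch_{j-1}$, and condition on the history $\mathcal{H}_j = (\Ch_{j-1}, P_j, \beta_j)$. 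The one structural fact used repeatedly is that when a path $P_j$ is selected, $\sum_{e \in P_j} y(e) < 1$ with $y \in \{0,1\}$ forces $y(e) = 0$, i.e.\ $e \notin \Ch_{j-1}$, for every $e \in P_j$; hence $P_j \cap \Ch_{j-1} = \emptyset$, and since the optimal cut $C^*$ meets the $(s,t)$-path $P_j$, it contains some edge $h^* \in C^* \cap P_j$ with $h^* \notin \Ch_{j-1}$.

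Step 1 bounds the cost of the output. Since $f(\emptyset)=0$, $f(\Ch) = \sum_j \bigl(f(\Ch_j) - f(\Ch_{j-1})\bigr)$. Within iteration $j$, expand $f(\Ch_j) - f(\Ch_{j-1})$ by the chain rule along the order in which the edges of $D_j$ are added: the edge $e$ contributes $f(e \mid C_{<e})$, where $C_{<e}$ is the current set when $e$ is processed, and $e$ is added with probability $\beta_j / f(e \mid C_{<e})$ (capped at one; zero-marginal edges, handled by the footnote, contribute nothing). Thus, conditional on $\mathcal{H}_j$ and the earlier outcomes, the expected contribution of $e$ is at most $\beta_j$ — the marginal cost cancels the sampling probability — so $\mathbb{E}[f(\Ch_j) - f(\Ch_{j-1}) \mid \mathcal{H}_j] \le \beta_j |P_j| \le \beta_j |P_{\max}|$, and the tower rule gives $\mathbb{E}[f(\Ch)] \le |P_{\max}|\,\mathbb{E}\bigl[\sum_j \beta_j\bigr]$.

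Step 2, the crux, bounds $\mathbb{E}[\sum_j \beta_j]$ by $f(C^*)$; charging each $\beta_j$ naively to an edge of $C^* \cap P_j$ only gives the too-weak $\sum_{e \in C^*} f(e)$. I would instead use the potential $\Phi_j \triangleq f(C^* \cup \Ch_j) - f(\Ch_j)$, which is nonnegative by monotonicity, satisfies $\Phi_0 = f(C^*)$, and is nonincreasing. The goal is $\mathbb{E}[\Phi_{j-1} - \Phi_j \mid \mathcal{H}_j] \ge \beta_j$. Writing $A = D_j \cap C^*$, $B = D_j \setminus C^*$ and expanding both $f(D_j \mid \Ch_{j-1})$ and $f(D_j \mid C^* \cup \Ch_{j-1})$ by the chain rule with the edges of $A$ taken first, one obtains $\Phi_{j-1} - \Phi_j = f(A \mid \Ch_{j-1}) + \bigl(f(B \mid \Ch_{j-1} \cup A) - f(B \mid C^* \cup \Ch_{j-1})\bigr) \ge f(A \mid \Ch_{j-1})$, since $A \subseteq C^*$ makes the parenthesized difference nonnegative by submodularity. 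With $h^*$ as above, monotonicity gives $f(A \mid \Ch_{j-1}) \ge f(h^* \mid \Ch_{j-1})\,\mathbf{1}[h^* \in D_j]$, and the algorithm adds $h^*$ with probability at least $\beta_j / f(h^* \mid \Ch_{j-1})$ (the marginal can only shrink within the iteration), so the conditional expectation is at least $\beta_j$. Telescoping over the almost-surely finite run, $\mathbb{E}[\sum_j \beta_j] \le \Phi_0 - \lim_j \mathbb{E}[\Phi_j] \le f(C^*)$; combined with Step~1 this proves the lemma.

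The main obstacle is precisely Step~2: obtaining $f(C^*)$, and not $\sum_{e \in C^*} f(e)$, on the right-hand side. Element-by-element marginal charging is lossy for genuinely submodular $f$, and the remedy is the potential $f(C^* \mid \Ch_j)$ together with the observation that exactly when a path's $C^*$-edge is sampled does the potential drop by at least that edge's current marginal — the same quantity that upper bounds $\beta_j$. The remaining points are routine: the degenerate case $\min_{e \in P_j} f(e \mid \Ch_{j-1}) = 0$ (the footnote's rule then makes $\beta_j$ effectively $0$, so such iterations contribute nothing to either bound) and almost-sure termination of the while-loop.
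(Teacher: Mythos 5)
Your proof is correct, but it takes a different route from the paper in one important sense: the paper's entire argument is two lines long and consists of citing Theorem~7 of \citet{ky09} as a black box (the approximation factor of the randomized greedy cover equals the maximum number of variables per constraint, here $|P_{\max}|$), whereas you reconstruct that theorem's proof from scratch, specialized to the path-cover formulation~\eqref{eq:cover}. Your two-step structure — expected cost increase per iteration at most $\beta_j|P_{\max}|$ via the cancellation of the marginal against the sampling probability, and $\mathbb{E}[\sum_j \beta_j] \le f(C^*)$ via the potential $\Phi_j = f(C^*\cup\Ch_j)-f(\Ch_j)$ — is exactly the mechanism underlying the cited result, and you correctly identify and resolve the one non-obvious point, namely that naive per-edge charging would only yield $\sum_{e\in C^*}f(e)$ and that the submodular potential is what recovers $f(C^*)$. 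The supporting details all check out: the binary $y$ forces $P_j\cap\Ch_{j-1}=\emptyset$ so a fresh edge $h^*\in C^*\cap P_j$ exists; the decomposition $\Phi_{j-1}-\Phi_j \ge f(A\mid\Ch_{j-1}) \ge f(h^*\mid\Ch_{j-1})\mathbf{1}[h^*\in D_j]$ is valid by submodularity and monotonicity; the within-iteration shrinkage of marginals only increases the acceptance probability of $h^*$, and the capping at probability one only helps both bounds. What the paper's approach buys is brevity; what yours buys is a self-contained argument that makes the dependence on $|P_{\max}|$ and on genuine submodularity (rather than modularity of $f$ on $C^*$) transparent, at the cost of having to handle the zero-marginal degenerate case and almost-sure termination explicitly, which you do.
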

\begin{proof} Let $\Ch$ be the cut before pruning.
  Since $f$ is nondecreasing, it holds that
  $f(\Ch') \leq f(\Ch)$.
  By Theorem 7 in \citep{ky09}, a greedy randomized procedure like
  Algorithm~\ref{alg:greedy} yields in expectation an
  $\alpha$-approximation for a cover, where 
  $\alpha$ is the maximum number of variables in any constraint. Here, $\alpha$ is the maximum number of edges in
  any simple path, i.e., the length of the
  longest path. This implies that $\mathbb{E}[f(\Ch')] \leq
  \mathbb{E}[f(\Ch)] \leq |P_{\max}|f(C^*)$. \qed
\end{proof}

Indeed, randomization is important. Consider a 
deterministic algorithm that always
picks the edge with minimum marginal cost in the next path to cover. The solution $\Ch_d$ returned by this algorithm can be much worse. As an example, consider a graph consisting of a clique $\mathcal{V}$ of $n$ nodes, with nodes $s$ and $t$. Let $S \subseteq \Vs$ be a set of size $n/2$. Node $s$ is connected to all nodes in $S$, and 
node $t$ is connected to the clique only by a distinct node $v' \in \Vs \setminus S$ via edge $(v',t)$. Let the cost function be a sum of edge weights, $f(C) = \sum_{e \in C}w(e)$. Edge $(v',t)$ has weight $\gamma > 0$, all edges in $\delta^+(S)$ have weight $\gamma(1 - \epsilon)$ for a small $\epsilon > 0$, and all remaining edges have weight $\gamma(1 - \epsilon/2)$. The deterministic algorithm will return $\Ch_d = \delta^+(S)$ as the solution, with cost $\frac{n^2\gamma}{4}(1-\epsilon)$, which is by a factor of $|\Ch_d|(1-\epsilon) = \frac{n^2}{4}(1-\epsilon)$ worse than the optimal cut, $f(\{(v',t)\}) = \gamma$. Hence, for the deterministic variant of Algorithm~\ref{alg:greedy}, we can only show the following approximation bound:
\begin{lemma}\label{lem:ghbound}
  For the solution $\Ch_d$ returned by the greedy deterministic
  heuristic, it holds that $f(\Ch_d) \leq |\Ch_d| f(C^*)$. This approximation factor cannot be improved in general.
\end{lemma}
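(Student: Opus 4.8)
The plan is to bound, for each edge the heuristic selects, the marginal cost it incurs by $f(C^*)$, and then sum these bounds over all selected edges. Write $e_1, \dots, e_k$ for the edges chosen by the deterministic heuristic in the order they are added, so $\Ch_d = \{e_1, \dots, e_k\}$ and $|\Ch_d| = k$, and set $C_0 = \emptyset$ and $C_j = \{e_1, \dots, e_j\}$. Since $f(\emptyset) = 0$, the cost telescopes:
\begin{equation*}
 f(\Ch_d) = f(C_k) = \sum_{j=1}^k \bigl( f(C_j) - f(C_{j-1}) \bigr) = \sum_{j=1}^k f(e_j \mid C_{j-1}),
\end{equation*}
so it suffices to prove $f(e_j \mid C_{j-1}) \le f(C^*)$ for every $j$. (If one additionally prunes $\Ch_d$ to a minimal cut, monotonicity of $f$ only decreases its value, so the bound is preserved.)

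The crux is the per-step bound. When $e_j$ was added, it was a minimum-marginal-cost edge along a shortest currently-uncovered $(s,t)$-path $P$; since $P$ is uncovered we have $P \cap C_{j-1} = \emptyset$, and by the greedy rule $f(e_j \mid C_{j-1}) = \min_{e \in P} f(e \mid C_{j-1})$. The optimal cut $C^*$ intersects every $(s,t)$-path, so fix some $e^* \in C^* \cap P$; note $e^* \notin C_{j-1}$. Then, successively, the greedy choice gives $f(e_j \mid C_{j-1}) \le f(e^* \mid C_{j-1})$; submodularity (diminishing marginal costs, applied with $\emptyset \subseteq C_{j-1}$ and $e^* \notin C_{j-1}$) gives $f(e^* \mid C_{j-1}) \le f(e^* \mid \emptyset) = f(\{e^*\})$; and monotonicity together with $e^* \in C^*$ gives $f(\{e^*\}) \le f(C^*)$. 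Chaining these yields $f(e_j \mid C_{j-1}) \le f(C^*)$, hence $f(\Ch_d) = \sum_{j=1}^k f(e_j \mid C_{j-1}) \le k\, f(C^*) = |\Ch_d|\, f(C^*)$.

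For the second assertion — that the factor $|\Ch_d|$ cannot be improved in general — I would invoke the clique instance built in the paragraph immediately preceding the lemma, where $f$ is the given sum of edge weights. One checks that every still-uncovered $(s,t)$-path must traverse the edge by which it leaves $S \cup \{s\}$, that this edge is the (uniquely) cheapest edge on the path, and hence the heuristic keeps selecting edges of $\delta^+(S)$ until $\delta^+(S) \subseteq C$, at which point $s$ and $t$ are already disconnected. Thus $\Ch_d = \delta^+(S)$ with $|\Ch_d| = n^2/4$ and $f(\Ch_d) = \tfrac{n^2}{4}\gamma(1-\epsilon)$, while $C^* = \{(v',t)\}$ has $f(C^*) = \gamma$, so $f(\Ch_d)/f(C^*) = (1-\epsilon)\,|\Ch_d|$; since $\epsilon > 0$ is arbitrary, no constant-factor improvement of the bound is possible.

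There is no genuinely hard step here: the entire content is the three-inequality chain in the second paragraph. The only places that need a moment's care are (i) using the termination condition of the while loop to conclude $P \cap C_{j-1} = \emptyset$ — which is what licenses applying submodularity in the required direction, namely that the marginal cost only shrinks as the conditioning set grows — and (ii) the routine verification of the heuristic's behaviour on the extremal clique instance.
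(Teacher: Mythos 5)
Your proof is correct and follows essentially the same route as the paper: charge each selected edge against $f(C^*)$ via the greedy rule on its covering path, sum over the $|\Ch_d|$ edges, and invoke the clique instance preceding the lemma for tightness. The only (cosmetic) difference is that you decompose $f(\Ch_d)$ exactly as a telescoping sum of marginal costs and push each marginal through submodularity, whereas the paper first relaxes to $\sum_{e \in \Ch_d} f(e)$ by subadditivity; your version is if anything slightly cleaner, since it works directly with the quantities $f(e \mid C)$ that the heuristic actually compares.
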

\begin{proof}
  To each edge $e \in \Ch_d$ assign the path $P(e)$ which it was chosen
  to cover. By the nature of the algorithm, it must hold that $f(e)
  \leq f(C^* \inter P(e))$, because otherwise an edge in $C^* \inter
  P(e)$ would have been chosen. Since $C^*$ is a cut, the set $C^* \inter
  P(e)$ must be non-empty. These observations imply that
  \begin{align*}
    f(\Ch_d)\, \leq\, \sum_{e \in \Ch}f(e)\, \leq\, \sum_{e \in
      \Ch_d}f(C^* \inter P(e))\, \leq\, |\Ch_d|\max_{e \in \Ch_d}f(C^*
    \inter P(e))\, \leq\, |\Ch_d|f(C^*).
  \end{align*}
  Tightness follows from the worst-case example described above. \qed
\end{proof}

\subsubsection{Rounding}\label{sec:rounding}
Our last approach is to solve the convex program~\eqref{eq:relaxshort} and round the continuous to a discrete solution.
We describe two types of rounding, each of which achieves a worst-case
approximation factor of $n-1$. This factor equals the general flow-cut
gap in Lemma~\ref{lem:fcgap}. 
Let $x^*, y^*$ be the optimal solution to the
relaxation \eqref{eq:relaxshort} (equivalently, to \eqref{eq:cover}). We assume w.l.o.g.\
that $x^* \in [0,1]^n$, $y^* \in [0,1]^m$.

\paragraph{Rounding by thresholding edge lengths.}
The first technique uses the edge weights $y^*$. We pick a threshold
$\theta$ and include all edges $e$ whose entry $y^*(e)$ is larger than
$\theta$. Algorithm~\ref{alg:round} shows how to select $\theta$,
namely the largest edge length
that when treated as a threshold yields a cut. 

\begin{algorithm}[t]
  \caption{Rounding procedure given $y^*$}\label{alg:round}
  \begin{algorithmic}
    \STATE order $\Es$ such that $y^*(e_1) \geq y^*(e_2) \geq \ldots
    \geq y^*(e_m)$ 
    \FOR{$i=1, \ldots, m$}
    \STATE let $C_i = \{e_j\;|\; y^*(e_j) \geq y^*(e_i)\}$
    \IF{$C_i$ is a cut}
    \STATE prune $C_i$ to $\widehat{C}$ and return $\widehat{C}$
    \ENDIF
    \ENDFOR
  \end{algorithmic}
\end{algorithm}

\begin{lemma}\label{lem:ubrelax}
  Let $\Ch$ be the rounded solution returned by
  Algorithm~\ref{alg:round}, $\theta$ the threshold at the
  last iteration $i$, and $C^*$ the optimal
  cut. Then
  \begin{equation*}
    f(\Ch)\; \leq\; \frac{1}{\theta} f(C^*)\; \leq\; |P_{\max}|f(C^*)\; \leq\; (n-1) f(C^*),
  \end{equation*}
  where $P_{\max}$ is the longest simple path in the graph.
\end{lemma}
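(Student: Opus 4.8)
The plan is to chain together the three stated inequalities, with the middle claim $f(\Ch)\le\frac1\theta f(C^*)$ carrying essentially all of the work; the outer bounds $\frac1\theta\le|P_{\max}|\le n-1$ are then easy bookkeeping.

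First I would control the threshold $\theta$ from below. Every $(s,t)$-path $P$ satisfies the covering constraint $\sum_{e\in P}y^*(e)\ge 1$ from \eqref{eq:cover} (which \eqref{eq:relaxshort} enforces), and $|P|\le|P_{\max}|$, so $P$ contains an edge with $y^*(e)\ge 1/|P|\ge 1/|P_{\max}|$. Hence $D=\{e\in\Es: y^*(e)\ge 1/|P_{\max}|\}$ meets every $(s,t)$-path and is an $(s,t)$-cut. In the sorted order used by Algorithm~\ref{alg:round}, $D$ equals the candidate set $C_{i_0}$ for $i_0=\max\{i: y^*(e_i)\ge 1/|P_{\max}|\}$; since the sets $C_1\subseteq C_2\subseteq\cdots$ are nondecreasing in $i$ and any superset of a cut is a cut, the algorithm terminates at an index no later than $i_0$. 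Therefore the returned threshold obeys $\theta=y^*(e_i)\ge y^*(e_{i_0})\ge 1/|P_{\max}|>0$, so $\frac1\theta\le|P_{\max}|$; and $|P_{\max}|\le n-1$ because a simple path visits at most $n$ nodes.

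Next I would prove $f(\Ch)\le\frac1\theta f(C^*)$. Let $C_i=\{e: y^*(e)\ge\theta\}$ be the cut at the terminating iteration; since $\Ch\subseteq C_i$ and $f$ is monotone, $f(\Ch)\le f(C_i)$. Decompose $y^*=\sum_j\lambda_j\chi_{B_j}$ into its chain of level sets $B_1\subsetneq B_2\subsetneq\cdots$, so $\ft(y^*)=\sum_j\lambda_j f(B_j)$; because $\theta>0$ is a value attained by $y^*$, the set $C_i$ equals some $B_{j_0}$, and evaluating $y^*$ at an edge of $B_{j_0}\setminus B_{j_0-1}$ shows $\sum_{j\ge j_0}\lambda_j=\theta$. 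Dropping the nonnegative terms $j<j_0$ (using $f\ge 0$) and bounding $f(B_j)\ge f(B_{j_0})$ for $j\ge j_0$ (monotonicity) gives $\ft(y^*)\ge\sum_{j\ge j_0}\lambda_j f(B_j)\ge\theta f(B_{j_0})=\theta f(C_i)$, i.e.\ $f(C_i)\le\frac1\theta\ft(y^*)$. (Equivalently: $\chi_{C_i}\le\frac1\theta y^*$ pointwise, and $\ft$ is monotone and positively homogeneous.) Finally $\ft(y^*)\le f(C^*)$, since taking $y=\chi_{C^*}$ together with $x$ the indicator of the source side of $C^*$ is feasible for \eqref{eq:relaxshort} with objective $\ft(\chi_{C^*})=f(C^*)$. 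Combining everything, $f(\Ch)\le f(C_i)\le\frac1\theta\ft(y^*)\le\frac1\theta f(C^*)\le|P_{\max}|f(C^*)\le(n-1)f(C^*)$.

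I expect the main obstacle to be the estimate $f(C_i)\le\frac1\theta\ft(y^*)$: this is the only step where the structure of the Lov\'asz extension genuinely enters, and it requires either the level-set bookkeeping above (locating $C_i$ among the $B_j$ and tracking that the mass of $y^*$ above $\theta$ is exactly $\theta$) or a correct appeal to monotonicity and positive homogeneity of $\ft$. The remaining pieces — the path-constraint argument bounding $\theta$, feasibility of $\chi_{C^*}$ in the relaxation, and $|P_{\max}|\le n-1$ — are routine.
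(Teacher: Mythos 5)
Your proposal is correct and follows essentially the same route as the paper: the chain $f(\Ch)\le f(C_i)=\ft(\chi_{C_i})\le\ft(\theta^{-1}y^*)=\theta^{-1}\ft(y^*)\le\theta^{-1}\ft(\chi_{C^*})=\theta^{-1}f(C^*)$ via monotonicity and positive homogeneity of $\ft$, plus the observation that the covering constraints force $\theta\ge|P_{\max}|^{-1}$. Your handling of the threshold bound (showing $\{e:y^*(e)\ge|P_{\max}|^{-1}\}$ is itself a cut, so the algorithm stops no later) is in fact a slightly more careful rendering of the paper's ``worst case'' remark, but it is the same idea.
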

\begin{proof}
The proof is analogous to that for covering problems \citep{in09b}.
In the worst case,
$y^*$ is uniformly distributed along the longest path, i.e., $y^*(e) =
|P_{\max}|^{-1}$ for all $e \in P_{\max}$ as $y^*$ must sum to at
least one along each path. Then
$\theta$ must be at least $|P_{\max}|^{-1}$ to include at least one of the edges
in $P_{\max}$. 
Since $\ft$ is nondecreasing like $f$ and also positively homogeneous,
it holds that 
\begin{align*}
  f(\widehat{C}) \leq f(C_i) = \ft(\chi_{C_i})\; \leq \ft(\theta^{-1}y^*)
  = \theta^{-1}\ft(y^*) \leq \theta^{-1}\ft(\chi_{C^*}) = \theta^{-1}f(C^*).
\end{align*}
The first inequality follows from monotonicity of $f$ and the fact
that $\widehat{C} \subseteq C_i$. Similarly, the relation between
$\ft(\chi_{C_i})$ and $\ft(\theta^{-1}y^*)$ holds because $\ft$ is
nondecreasing: by construction, $y^*(e) \geq \theta\chi_{C_i}(e)$ for
all $e \in \Es$, and hence $\chi_{C_i}(e) \leq
\theta^{-1}y^*(e)$. Finally, we use the optimality of $y^*$ to relate
the cost to $f(C^*)$; the vector $\chi_{C^*}$ is also feasible, but $y^*$ optimal. 
The lemma follows since $\theta^{-1} \leq |P_{\max}|$. \qed
\end{proof}

\paragraph{Rounding by node distances.} Alternatively,
we can use $x^*$ to obtain a discrete solution. We pick a threshold
$\theta$ uniformly at random from $[0,1]$ (or find the best one), and
choose all nodes $u$ with $x^*(u) \geq \theta$ 
(call this $\Vs_\theta$). This induces the cut $C_\theta = \delta(\Vs_\theta)$.
Since the node labels $x^*$ can also be
considered as distances from $s$, we refer to this rounding methods as
\emph{distance rounding}.

\begin{lemma}\label{lem:roundapprox}
  The worst-case approximation factor for a solution $C_\theta$ obtained
  with distance rounding is
  $\mathbb{E}_\theta[f(C_\theta)] \leq 
  (n-1)\ft(y^*) \leq (n-1)f(C^*)$.
\end{lemma}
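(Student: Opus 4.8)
The plan is to exploit the nested structure of the level sets of $x^*$ and to ``charge'' one full copy of $\ft(y^*)$ to each of the (at most $n-1$) distinct threshold cuts that can arise. First I would normalize: since $x^* \in [0,1]^n$ and $x^*(s) - x^*(t) \geq 1$, we may assume $x^*(s) = 1$ and $x^*(t) = 0$. List the distinct coordinate values of $x^*$ as $1 = a_0 > a_1 > \cdots > a_r = 0$; since each $a_j$ is attained at some node, $r+1 \leq n$, i.e.\ $r \leq n-1$. For $\theta$ in the half-open interval $(a_{j+1},a_j]$ the set $\Vs_\theta = \{u : x^*(u) \geq \theta\}$ equals $T_j \triangleq \{u : x^*(u) \geq a_j\}$, and these are nested, $T_0 \subset T_1 \subset \cdots \subset T_{r-1}$, each containing $s$ but not $t$, so $C_\theta = \delta^+(T_j)$ is an $(s,t)$-cut (the lemma writes $\delta(\Vs_\theta)$; the extra backward edges only raise $f$ and can be dropped by monotonicity, or one works in the bidirected graph, so this is a minor bookkeeping point). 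Since $\theta$ is uniform on $[0,1]$ this yields the exact identity $\mathbb{E}_\theta[f(C_\theta)] = \sum_{j=0}^{r-1}(a_j - a_{j+1})\,f(\delta^+(T_j))$.

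The crux is the per-index bound $(a_j - a_{j+1})\,f(\delta^+(T_j)) \leq \ft(y^*)$. To see it, take any $e = (u,v) \in \delta^+(T_j)$: then $x^*(u) \geq a_j$ while $x^*(v) < a_j$, hence $x^*(v) \leq a_{j+1}$, so the relaxation constraint $-x^*(u) + x^*(v) + y^*(e) \geq 0$ forces $y^*(e) \geq x^*(u) - x^*(v) \geq a_j - a_{j+1}$. Thus $y^* \geq (a_j - a_{j+1})\,\chi_{\delta^+(T_j)}$ coordinatewise, and since $\ft$ is nondecreasing (because $f$ is monotone) and positively homogeneous, $\ft(y^*) \geq (a_j - a_{j+1})\,\ft(\chi_{\delta^+(T_j)}) = (a_j - a_{j+1})\,f(\delta^+(T_j))$. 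Summing over the $r \leq n-1$ values of $j$ gives $\mathbb{E}_\theta[f(C_\theta)] \leq r\,\ft(y^*) \leq (n-1)\,\ft(y^*)$.

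Finally, to pass from $\ft(y^*)$ to $f(C^*)$ I would note that the pair $(x,y) = (\chi_X, \chi_{C^*})$, where $X$ is the set of nodes on the source side of the optimal cut $C^*$, is feasible for \eqref{eq:relaxshort}: checking the four cases for the endpoints of an edge $(u,v)$ relative to $C^*$ shows $-x(u) + x(v) + y(e) \geq 0$ in each case, and $x(s) - x(t) = 1$. Optimality of $y^*$ then gives $\ft(y^*) \leq \ft(\chi_{C^*}) = f(C^*)$, which combined with the previous step proves the claim. The only step requiring an idea is the middle one: recognizing that each of the $\leq n-1$ nested threshold cuts can be billed a separate $\ft(y^*)$ because $y^*$ dominates a scaled indicator of that cut, so the loss factor is exactly the number of distinct node-potential values, which is at most $n-1$; everything else is routine, and it is reassuring that $n-1$ matches the flow--cut gap lower bound of Corollary~\ref{lem:fcgap}.
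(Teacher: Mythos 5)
Your proof is correct, but it takes a genuinely different route from the paper's. The paper decomposes \emph{over edges}: it partitions $\Es$ into the $n-1$ incidence sets $\delta^+(v)$, uses subadditivity to get $f(C_\theta) \leq \sum_v f(C_\theta \inter \delta^+(v))$, computes $\mathbb{E}_\theta[f(C_\theta \inter \delta^+(v))]$ exactly as the restricted Lov\'asz extension $\ft(y^*(\delta^+(v)))$, and then bounds each of the $n-1$ local terms by the global $\ft(y^*)$. You instead decompose \emph{over the threshold axis}: the at most $n-1$ distinct values of $x^*$ induce at most $n-1$ nested level cuts $\delta^+(T_j)$, and each is charged a full $\ft(y^*)$ via the coordinatewise domination $y^* \geq (a_j - a_{j+1})\chi_{\delta^+(T_j)}$ together with monotonicity and positive homogeneity of $\ft$ --- essentially the same mechanism as the threshold-rounding argument of Lemma~\ref{lem:ubrelax}, applied level by level. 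Both arguments are sound and both handle the $\delta$ versus $\delta^+$ bookkeeping the same way (only forward edges matter for a directed cut, and pruning/monotonicity covers the rest). The difference shows up in the refined, data-dependent factors each yields: the paper's is $\sum_v \ft(y^*(\delta^+(v)))/\ft(y^*)$, which equals $1$ when $f$ separates across incidence sets (this is exactly what is exploited in Section~\ref{sec:separability}); yours is $\sum_j (a_j - a_{j+1}) f(\delta^+(T_j))/\ft(y^*)$, bounded by the number of distinct node potentials, which is tighter when $x^*$ takes few values or a single level cut dominates. So the two proofs expose complementary structure --- ``width'' of the fractional solution across incidence sets versus ``depth'' of the potential function --- and it would be legitimate to state both refined bounds.
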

\begin{proof}
  
  To upper bound the quantity $\mathbb{E}_\theta[f(C_\theta)]$, we
  partition the set 
  of edges into $(n-1)$ sets $\delta^+(v)$, that is, each set
  corresponds to the outgoing edges of a node $v \in \Vs$. We sort the
  edges in each $\delta^+(v)$ in nondecreasing order by their values $y^*(e)$. Consider
  one specific incidence set $\delta^+(u)$ with edges $e_{u,1}, \ldots,
  e_{u,h}$ and $y^*(e_{u,1}) \leq y^*(e_{u,2}) \leq \ldots \leq
  y^*(e_{u,h})$. Edge $e_{u,i}$ is in the cut if $\theta \in
  [x^*(u),x^*(u)+ y^*(e_{u,i}))$. Therefore, it holds for each node $u$ that
  \begin{align}
    \mathbb{E}_\theta[f(C_\theta \inter \delta^+(u))] &=
    \int_0^1f(C_\theta \inter \delta^+(u)) d\theta\\
    &= \sum_{j=1}^h(y^*(e_{u,j}) - y^*(e_{u,j-1}))f(\{e_{u,j}, \ldots e_{u,h} \})  \\
    &= \ft(y^*({\delta^+(u)})),
  \end{align}
  where we define $y^*(e_{u,0}) = 0$ for convenience, and assume that $f(\emptyset) =0$.
  This implies that
  \begin{align}
    \mathbb{E}_\theta[f(C_\theta)] &\leq \mathbb{E}_\theta[\sum_{v \in
      \Vs}f(C_\theta \inter \delta^+(v))]\\
    &= \sum_{v \in
      \Vs}\ft(y^*({\delta^+(v)})) \leq (n-1)\ft(y^*) \leq (n-1)f(C^*).
  \end{align} \qed
\end{proof}

A more precise approximation factor is $\frac{\sum_v
  \ft(y^*({\delta^+(v)}))}{f(y^*)}$.

\section{Special cases}
\label{sec:easier}

The complexity of \mcoocut{} is not always as bad as 
the worst-case bound in Theorem~\ref{thm:lower}. While it is useful to
consider this most general case (see
Section~\ref{sec:gener-coop-cut}), we next discuss properties of the
submodular cost function and the graph structure that lead to better
approximation factors.  Our discussion is not specific to cooperative
cuts; it is rather a survey of properties that make a number of
submodular optimization problems easier.

\subsection{Separability and sums with bounded support}\label{sec:separability}
An important factor for tractability and approximations is the
separability of the cost function, that is, whether there are
\emph{separators} of $f$ whose structure aligns with the graph. 
\begin{defn}[Separator of $f$]
  A set $S \subseteq \Es$ is called a \emph{separator} of $f:
  2^\Es\to\R$ if for all $B \subseteq \Es$, it holds that $f(B) = f(B
  \inter S) + f(B \setminus S)$. The set of separators of $f$ is
  closed under union and intersection.
\end{defn}

The structure of the separators strongly affects the
complexity 
of \mcoocut{}. 
First and obviously, the extreme case that $f$ is a modular function
(and each $e \in \Es$ is a separator)
can be solved exactly. Second, if the separators of $f$ form a partition
$\Es = 
\bigcup_v E^+_v \union \bigcup_v E^-_v$ that aligns with node neighborhoods such that 
$E_v^+ \subseteq \delta^+(v)$, and $E_v^-\subseteq \delta^-(v)$, then both
$\fap$ and distance rounding solve the problem exactly. No change in the algorithm is needed, i.e., the exact partition need not be known.
In that case, the flow-cut gap is zero, as becomes obvious from the proof of
Lemma~\ref{lem:roundapprox}, since $(\sum_v \ft(y^*_{E_v}))/\ft(y^*) =
1$. 
These separators respect the graph structure and rule out any non-local edge interactions.

\paragraph{Sums of functions with bounded support.}
A generalization of the case of separators are functions that are a sum $f(A) = \sum_{i}f_i(A \inter B_i)$ of functions $f_i$, each of which has bounded support $B_i$. The $B_i$ can be overlapping. In this case, the approximation bounds improve for many of the algorithms in Section~\ref{sec:surrog} that rely on a surrogate function, and for the greedy approximation in Lemma~\ref{lem:ghbound}. Those bounds, summarized in Table~\ref{tab:decomp}, can be shown by approximating each $f_i$ separately by $\hat{f}_i$ with approximation factor $\alpha_i$ that now depends on $|B_i|$, and using $f(A) \leq \max_j \alpha_j \sum_i \fa_i(A)$. This separate approximation is implicit in all those algorithms except the approximation from Section~\ref{subsec:generic}. In those implicit cases, no changes need to be made in the implementation and the partition need not be known. For the generic approximation in Section~\ref{subsec:generic}, one can approximate each $f_i$ explicitly and separately, if the partition is known. Optimizing the resulting sum $\sum_i \fa_i$ or its square is however no longer a minimum cut problem. It admits an FPTAS \citep{nik10,kohliOJ13}.

\begin{table}
  \centering
  \small
  \begin{tabular}{|lc|}
    \hline
    & \\[-8pt]
    generic (\S \ref{subsec:generic}) & $O(\max_i\, \sqrt{|B_i|}\log |B_i|)$ \\[5pt]
    semigradient (\S \ref{subsec:semigrad}) & $\max_i \frac{|C^* \inter B_i|}{(|C^* \inter B_i|-1)(1-\kappa_{f_i}) + 1}$ \\[5pt]
    polymatroidal flow (\S \ref{subsec:pmf}) & $\max_i \min\{\Delta_s \inter B_i, \Delta_t \inter B_i\}$ \\[3pt] 
    deterministic greedy (\S \ref{sec:randgreed}) & $\max_i |\Ch_d \inter B_i|$ \\[2pt] \hline
  \end{tabular}
  \caption{Improved approximation bounds for functions of the form $f(A) = \sum_{i=1}^k f_i(A \inter B_i)$. The bounds are now determined by the largest support $\max_i |B_i|$, but not by $k$.}
  \label{tab:decomp}
\end{table}

For the relaxations, it is not immediately clear that the decomposition always leads to improvements. Consider for example a function $f(A) = f_1(A \inter B_1) + f_2(A \inter B_2)$, where $f_1(B_1) = f_2(B_2)$, $P \triangleq P_{\max} = B_1 \union B_2$ and $|B_1| = |B_2| = |P_{\max}/2|$. Then $\ft(\frac{1}{|P|}\chi_{P}) = \ft(\frac{1}{|B_1|}\chi_{B_1})$. In that case, the proof of Lemma~\ref{lem:ubrelax} may still require $\theta^{-1} = |P_{\max}|$.

\subsection{Symmetry and ``unstructured'' functions}
\label{sec:symmetry}

Going one step further, one may consider sums of that do not necessarily have bounded support but are of a simpler form. 
An important such class are functions $f_i(A)
= g(\sum_{e \in A}w_i(e)) = g(w_i(A))$ for nonnegative weights $w_i(e)$ and a
nondecreasing concave function $g$. We refer to the submodular
functions $g(w(A))$ as \emph{unstructured}, because they only consider
a sum of weights, but otherwise do not make any distinction between
edges (unlike, e.g., graphic matroid rank functions).  One
may classify such functions into a hierarchy, where $\mathcal{F}(k)$
contains all functions $f(A) = \sum_{j=1}^kg_j(w_j(A))$ with at most
$k$ such components. 
The functions $\mathcal{F}(k)$ are special cases of low-rank quasi-concave functions, where $k$ is the rank of the function.

If $k=1$, then it suffices to minimize $w_1(C)$ directly and the problem reduces to \mcut. For $k > 1$, several combinatorial problems admit an FPTAS with running time exponential in $k$ \citep{gr08,ms12}. This holds for cooperative cuts too \citep{kohliOJ13}. A special case for $k=2$ is the \emph{mean-risk} objective $f(A) = w_1(A) +
\sqrt{w_2(A)}$ \citep{nik10}.
\citet{gtw10} show
that these functions can yield better 
bounds in combinatorial multi-agent problems than general polymatroid
rank functions, if each agent has a cost function in
$\mathcal{F}(1)$. 

Even for general, unconstrained submodular 
minimization\footnote{For unconstrained submodular function
  minimization we drop the constraint that the functions $g_j$ are
  nondecreasing.}, the class $\mathcal{F}(k)$ admits specialized
improved optimization algorithms
\citep{kohli2009p,sk10,kol12,jegbs13}. The complexity of those faster
specialized algorithms depends on the rank $k$ as well.
An interesting question arising from the above observations is whether $\mathcal{F}(k)$ contains \emph{all} submodular functions if $k$ is large enough?
The answer is no: even if $k$ is allowed to be
exponentially large in the ground set size, this class is a strict sub-class of all submodular
functions. If the addition of auxiliary variables
is allowed, this class coincides with the class of graph-representable functions
in the sense of \citep{zcj09}: any graph cut function $h: 2^\Vs
\to \R_+$ is in
$\mathcal{F}(|\Es|)$, and any function in $\mathcal{F}(k)$ can be
represented as a graph cut function in an extended auxiliary graph
\citep{jlb11}. However, not all submodular functions can be represented in this way \citep{zcj09}.

The parameter $k$ is a measure of complexity.
If $k$ is not
fixed, then \mcoocut{} is NP-hard; for example, the reduction in
Section~\ref{sec:reduct} uses such functions. Even more, unrestricted
$k$ may induce
large lower bounds, as has been proved for \emph{label
  cost} functions of the form $f(A) = \sum_{j=1}^k w_j\min\{1, |A
\inter B_j|\}$ \citep{zctz09}.  

A subclass of unstructured submodular functions are the aforementioned
\emph{permutation symmetric} submodular functions\footnote{These are distinct
from the other previously-used notion of
symmetric submodular functions 
\citet{q95}
where, for all $A \subseteq \Es$,
$f(A) = f(\Es \setminus A)$.}
that are indifferent to any permutation of the ground set: $f(A)
= f(\sigma(A))$ for all permutations $\sigma$ (possibly within a group).
This symmetry makes submodular optimization problems easier, as shown in Proposition~\ref{prop:symm} and work on submodular maximization  
\citep{vondrak2011} and partitioning problems \citep{ene13}.

\subsection{Symmetry and graph structure}\label{subsec:symm_and_graph}

Proposition~\ref{prop:symm} shows that symmetry and the graph structure can work together to make the cut problem easier, in fact, a submodular minimization problem on graph nodes. Section~\ref{sec:motiv-spec-cases} outlines some examples that come from applications.

\subsection{Curvature}
The \emph{curvature} $\kappa_f \in [0,1]$ of a submodular function $f$ is defined as
\begin{equation}
  \label{eq:1}
  \kappa_f = \max_{e \in \Es}\;\; 1 - \frac{f(e \mid \Es \setminus e)}{f(e)},
\end{equation}
and characterizes the deviation from being a modular function. 
Curvature is known to affect the approximation bounds for submodular maximization \citep{confcorn84,vo08}, and also for submodular minimization problems, approximating and learning submodular functions \citep{iyercurv13}. The lower the curvature, the better the approximation factors.
For \mcoocut{} and many other combinatorial minimization problems with submodular costs, the approximation factor is affected as follows. If $\alpha_n$ is the worst-case factor (e.g., for the semigradient approximation), then the tightened factor is $\frac{\alpha_n}{(\alpha_n-1)(1-\kappa_f)+1}$. The lower bounds can be tightened accordingly.

\subsection{Flow-cut gaps revisited}\label{sec:specialgaps} 
The above properties that facilitate \mcoocut{} reduce the flow-cut gaps in some cases.
The proof of Lemma~\ref{lem:fcgap} illustrates that
the flow-cut gap is intricately linked
to the edge cooperation (non-separability) along paths in the graph. 
Therefore, the separability
described in Section~\ref{sec:separability} affects the flow-cut
gap if it breaks up cooperation along paths: the gap depends only on the longest cooperating path within
any separator of $f$, and this can be much smaller than $n$. 
If, however, an instance of \mcoocut{} is better solvable because the
cost function is a member of $\mathcal{F}(\ell)$ for small constant $\ell$,
then the gap may still be as large as in Lemma~\ref{lem:fcgap}. In fact, the example in Lemma~\ref{lem:fcgap} belongs to $\mathcal{F}(1)$: it is equivalent to the function $f(A) =  \gamma\min\{1,|A|\}$.

Two variants of a final example may serve to better understand the
flow-cut (and integrality) gap. The first has a large gap,
but the
rounding methods still find an optimal solution. The second has a
gap of one, but the rounding methods may return solutions with a large
approximation factor.
Consider a graph with $m$ edges consisting of $m/k$ disjoint
paths of length $k$ each (as in Figure~\ref{fig:lbgraph}), with a
 cost function $f(C) = \max_{e \in C}w(e)$. The edges are
partitioned into a cut $B \subset \Es$ with $|B|=m/k$ and the
remaining edges $\Es \setminus B$. Let $w(e) = \gamma$ for $e \notin
B$ and $w(e) = \beta$ for $e \in B$. 

For the first variant, let $\beta = \gamma$; so that for $k=1$, we
obtain the graph in Lemma~\ref{lem:fcgap}. With $\beta=\gamma$ (for any
$k$), any minimal cut is optimal, and all rounding methods find
an optimal solution. The maximum flow in Problem~\eqref{eq:relflow} is $\nu^* = \gamma/k$
($\gamma / k$ flow on one path or $\gamma / m$ flow on each edge in
$m/k$ paths in parallel). Hence, the flow-cut gap is $\gamma / (\gamma/k) =
k$ despite the optimality of the rounded (and pruned) solutions.

For the second variant, let $\beta = \gamma/k$. The
maximum flow remains $\nu^* = \gamma/k$, and the optimal cut is $B$
with $f(B) = \gamma/k$, so $f(C^*) = \nu^*$. An optimal solution $y^*$
to Program~(\ref{eq:relaxshort}) is the uniform vector $y^* = (\gamma/m)
\mathbf{1}_m$. Despite the zero gap, for such $y^*$ the rounding methods return an arbitrary cut, which
can be by a factor $k$ worse than the optimal solution $B$. 
In contrast, the approximation
algorithms in Sections~\ref{subsec:semigrad}, \ref{subsec:pmf} based on substitute cost functions do return an optimal solution.

\section{Experiments}\label{sec:expt}
We provide a summary of benchmark experiments that compare the
proposed algorithms empirically. 
We use two types of data sets. The first is a collection of average-case submodular cost functions on two types of graph structures, clustered graphs and regular grids. The second consists of a few difficult examples that show the limits of some of the proposed methods.

The task is to find a minimum cooperative cut in an undirected graph\footnote{An undirected graph can easily be turned into a directed one by replacing each edge by two opposing directed ones that have the same cost. A cut will always only include one of those edges}. This problem can be solved directly or via $n-1$ minimum $(s,t)$-cuts. Most of the algorithms solve the $(s,t)$ version. The above approximation bounds still apply, as the minimum cut is the minimum $(s,t)$-cut for at least one pair of source and sink. We observe that, in general, the algorithms perform well, typically much better than their theoretical worst-case bounds. Which algorithm is best depends on the cost function and graph at hand.

\paragraph{Algorithms and baselines.} Apart from the algorithms discussed in this article, we test some baseline heuristics.
First, to test the benefit of the more sophisticated approximations $\fae$ and $\fap$ we define the simple approximation
\begin{equation}
  \label{eq:fadd}
  \hat{f}_{\mathrm{add}}(C) = \sum_{e \in C}f(e).
\end{equation}
The first baseline (MC) simply returns the minimum cut with respect to $\famo$. The second baseline (MB) computes the minimum cut basis $\Cs = \{C_1, \ldots, C_{n-1}\}$ with respect to $\famo$ and then selects $\Ch = \argmin_{C \in \Cs} f(C)$. The minimum cut basis can be computed via a Gomory-Hu tree \citep{bhms07}.
As a last baseline, we apply an algorithm (QU) by \citet{q95} to $h(X) = f(\delta(X))$. This algorithm minimizes symmetric submodular functions in $O(n^3)$ time. However, $h$ not always submodular (e.g., see Props.~\ref{prop:submdoular_iff_modular}, \ref{prop:properties_h},
and~\ref{prop:symm}), and therefore this algorithm cannot provide any approximation guarantees in general. In fact, we will see in Section~\ref{sec:wc} that it can perform arbitrarily poorly.

Of the algorithms described in this article, EA denotes the generic (ellipsoid-based) approximation of Section~\ref{subsec:generic}. The iterative semigradient approximation from Section~\ref{subsec:semigrad} is initialized 
with a random cut basis (RI) or a minimum-weight cut basis (MI).
PF is the approximation via polymatroidal network flows (Section~\ref{subsec:pmf}). These three approaches approximate the cost functions. In addition, we use algorithms that solve relaxations of Problems \eqref{eq:cover} and \eqref{eq:relaxshort}: CR solves the convex relaxation using Matlab's \texttt{fmincon}, and applies Algorithm~\ref{alg:round} for rounding. DB implements the distance rounding by thresholding $x^*$.
Finally, we test the randomized greedy algorithm from Section~\ref{sec:randgreed} with the maximum possible $\beta = \beta_{\max}$ (GM) and an almost maximal $\beta = 0.9\beta_{\max}$ (GA). GH denotes the deterministic greedy heuristic.
All algorithms were implemented in Matlab, with the help of a graph cut toolbox \citep{bagon2006,bk04} and the SFM toolbox \citep{kTool}.

\subsection{Average-case}
The average-case benchmark data has two components: graphs and cost functions. We first describe the graphs, then the functions.

\paragraph{Grid graphs.} The benchmark contains three variants of regular grid graphs of
degree four or six. Type I is a plane grid with horizontal and
vertical edges displayed as solid edges in Figure~\ref{fig:gridgraph}. Type II
is similar, but has additional diagonal edges (dashed in
Figure~\ref{fig:gridgraph}). Type III is a cube 
with plane square grids on four faces (sparing the top and bottom
faces). Different from Type I, the nodes in the top row are connected
to their counterparts on the opposite side of the cube. The
connections of the bottom nodes are analogous.

\paragraph{Clustered graphs.} The clustered graphs consist of a number of
cliques that are connected to each other by few edges, as depicted in
Figure~\ref{fig:clustgraph}.

\begin{figure}
  \centering
  \subfigure[Grids I and II]{\label{fig:gridgraph}
    \includegraphics[width=0.26\textwidth]{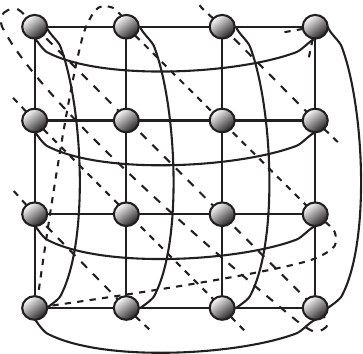} 
  }\hspace{50pt}
  \subfigure[Clustered graph]{\label{fig:clustgraph}
    \includegraphics[width=0.3\textwidth]{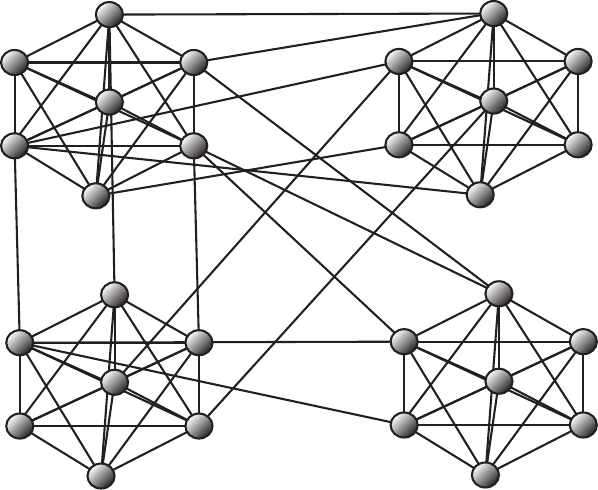} 
  }
  \caption[Examples of benchmark graphs.]{Examples of the test graph structures. The grid (a) was used with and
    without the dashed diagonal edges, and
    also with a variation of the connections in the first and last
    row. The clustered graphs were similar to the example shown in (b).}
  \label{fig:testGraphs}
\end{figure}

\paragraph{Cost functions.}
The benchmark
includes four families of functions. The first group (\emph{Matrix rank I,II}, \emph{Labels I, II}) consists of matroid rank
functions or sums of three such functions. The functions used here
are either based on matrix rank or ranks of partition matroids.
We summarize those functions as \emph{rank-like} costs.

The second group (\emph{Unstructured I, II}) contains two variants of unstructured functions $g(w(C))$, where $g$ is either a logarithm or a square root.
These functions are designed
to favor a certain random optimal cut. The construction ensures that
the minimum cut will not be one that separates out a single node, but
one that cuts several edges.

The third family (\emph{Bestcut I, II}) is constructed to make a cut optimal
that has many edges and that is therefore different from the cut that
uses fewest edges. For such a cut, we expect $\famo$ to yield
relatively poor solutions. 

The fourth set of functions (\emph{Truncated rank}) is inspired by the difficult truncated
functions that can be used to establish lower bounds on approximation
factors. These functions ``hide'' an optimal set, and interactions are
only visible when guessing a large enough part of this hidden set.
The following is a detailed description of all cost functions:

\begin{description}
\item[Matrix rank I.] Each element $e \in
    \Es$ indexes a column in matrix $\mathbf{X} \in \R^{d \times m}$. The cost of $A
    \subseteq \Es$ is the rank of
    the sub-matrix $\mathbf{X}_A$ of the columns indexed by the $e \in
    A$: $f_{\text{mrI}}(A) = \mathrm{rank}(\mathbf{X}_A)$. The matrix $\mathbf{X}$
    is of the form $[\; \mathbf{I'}\;\; \mathbf{R}\;]$, where
    $\mathbf{R} \in \{0,1\}^{d \times (m-d)}$ is a random binary
    matrix with $d = 0.9\sqrt{m}$, and $\mathbf{I}'$ is a
    column-wise permutation of the identity matrix. 
\item[Matrix rank II.] The function $f_{\text{mrII}}(A) = 0.33\sum_{i=1}^3 f^{(i)}_{\text{mrI}}(A)$
    sums up three functions $f^{(i)}_{\text{mrI}}$ of type \emph{matrix rank I} with
    different random matrices $\mathbf{X}$. 
\item[Labels I.] This class consists of functions of the form $f_{\ell\text{I}}(A) = | \bigcup_{e \in A} \ell(e)|$. Each
    element $e$ is assigned a random label $\ell(e)$ from a set of
    $0.8\sqrt{m}$ possible labels. The cost counts the number of
    labels in $A$. 
\item[Labels II.] These functions $f_{\ell\text{II}}(A) = 0.33\sum_{i=1}^3 f^{(i)}_{\ell\text{I}}(A)$ are the sum of three functions of type \emph{labels I} with different
    random labels.
\item[Unstructured I.] These are functions $f_{\text{dpI}}(A) = \log{\sum_{e \in A}
      w(e)}$, where weights $w(e)$ are chosen randomly as follows. Sample
    a set $X \subset V$ with $|X| = 0.4n$, and set $w(e) = 1.001$ for all
    $e \in \delta X$. Then randomly assign some ``heavy'' weights
    in $[n/2, n^2/4]$ to some edges not in $\delta X$, so that each
    node is incident to one or two heavy edges. The remaining edges
    get random (mostly integer) weights between $1.001$ and $n^2/4-n+1$. 
\item[Unstructured II.] These are functions $f_{\text{dpII}}(A) = \sqrt{\sum_{e \in
    A}w(e)}$ with weights assigned as for \emph{unstructured function II}.
\item[Bestcut I.] We randomly pick a connected subset $X^* \subseteq \Vs$ of
    size $0.4n$ and define the cost  
    $f_{\text{bcI}}(A) = \mathbf{1}[|A \inter \delta
    X^*| \geq 1] + \sum_{e \in A \setminus \delta X^*} w(e)$. 
    The edges in $\Es \setminus \delta X^*$ are assigned random
    weights $w(e) \in [1.5,2]$. If there is still a cut $C \neq \delta
    X^*$ with cost one or 
    lower, we correct $w$ by increasing the weight of one $e \in C$ to
    $w(e)=2$. The optimal cut is then $\delta X^*$, but it is usually not the one with fewest edges. 
\item[Bestcut II.] Similar to \emph{bestcut I} ($\delta X^*$ is again
    optimal), but with submodularity on all 
    edges: $\Es$ is partitioned into three sets, $E = (\delta X^*) \union B
    \union C$. Then $f_{\text{bcII}}(A) = \mathbf{1}[|A \inter \delta
    X^*| \geq 1] + \sum_{e \in A \inter(B \union C)} w(e) + \max_{e \in A
      \inter B} w(e) + \max_{e \in A \inter C} w(e)$. The weights of
    two edges in $B$ and two edges in $C$ are set to
    $w(e) \in (2.1,2.2)$.
\item[Truncated rank.] This function is similar to the truncated rank in
    the proof of the lower bound (Theorem~\ref{thm:lower}). Sample a connected $X \subseteq \Vs$
    with $|X| = 0.3|\Vs|$ and set $R = \delta X$. The cost is $f_{\text{tr}}(A)
    = \min\{|A \inter \overline{R}| + \min\{ |A \inter R|, \lambda_1\}, \;
    \lambda_2\}$ for $\lambda_1 = \sqrt{|R|}$ and $\lambda_2 = 2|R|$. Here,
    $R$ is not necessarily the optimal cut.
\end{description}

To
estimate the approximation factor on one problem instance (one graph
and one cost function), we divide by the cost of the best solution
found by any of the algorithms, unless the optimal solution is
known (this is the case for \emph{Bestcut I} and \emph{II}).

\subsubsection{Results}
Figure~\ref{fig:res1} shows average empirical approximation factors
and also the worst observed factors.
The first observation is that all algorithms remain well below their
theoretical approximation bounds\footnote{Most of the bounds proved
  above are absolute, and not asymptotic. The only exception is
  $\fae$. For simplicity, it is here treated as an absolute
  bound.}. That means the theoretical bounds are really 
worst-case results. For several instances we obtain optimal solutions.

\begin{figure}
  \centering
  {\small
  \begin{tabular}{c@{\hspace{-5pt}}c}
     {\bf grid graphs} & {\bf clustered graphs} \\[3pt]
     \multicolumn{2}{c}{rank-like cost functions (average over 61 (left), 80 (right) instances)}\\[-1pt]
     \includegraphics[width=0.49\textwidth]{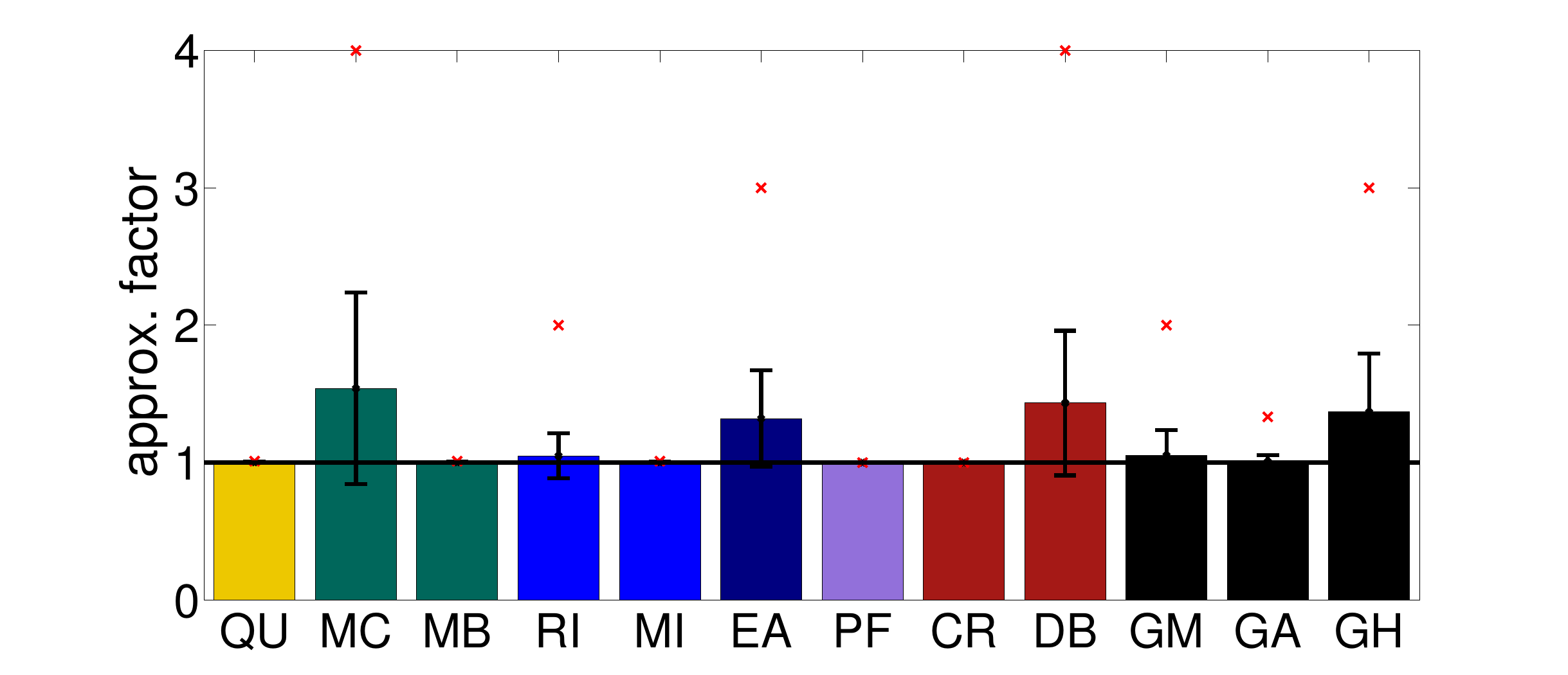} & 
     \includegraphics[width=0.49\textwidth]{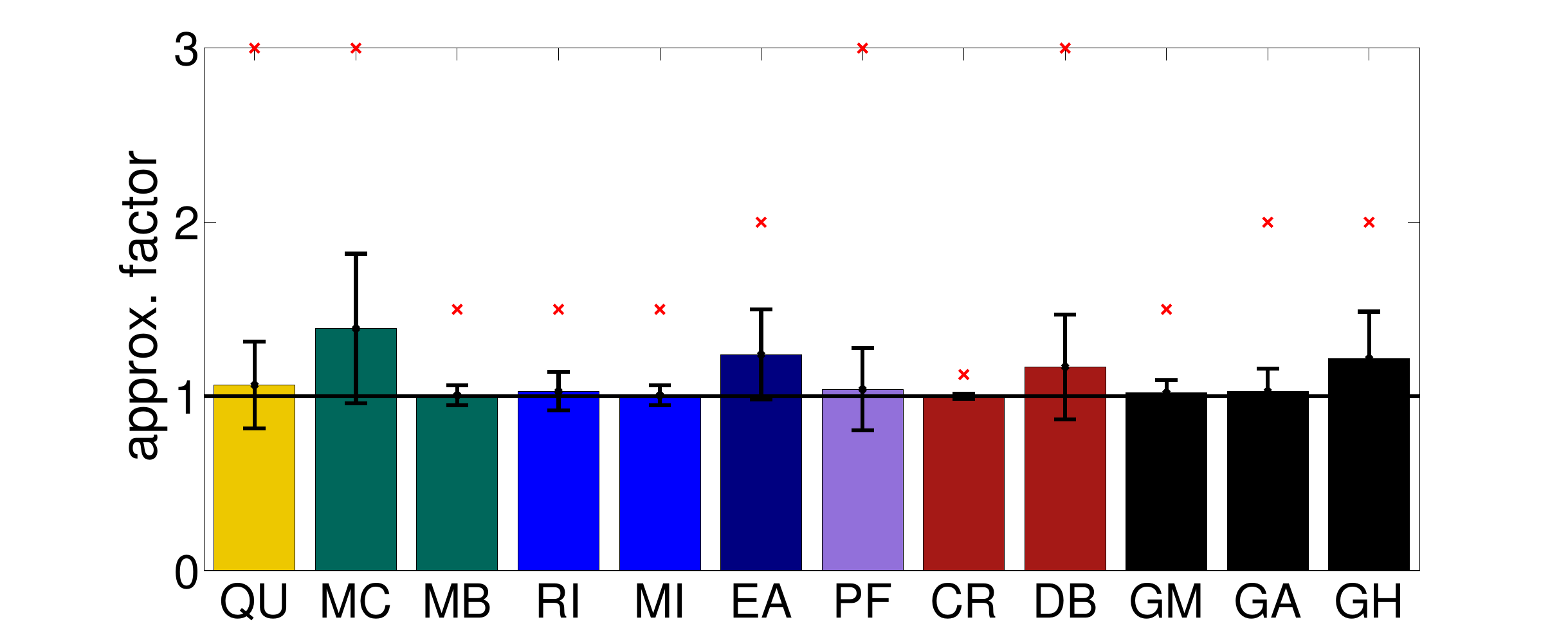} \\[6pt]
     \multicolumn{2}{c}{unstructured functions (average over 30 (left), 40 (right) instances)}\\[-1pt]
     \includegraphics[width=0.49\textwidth]{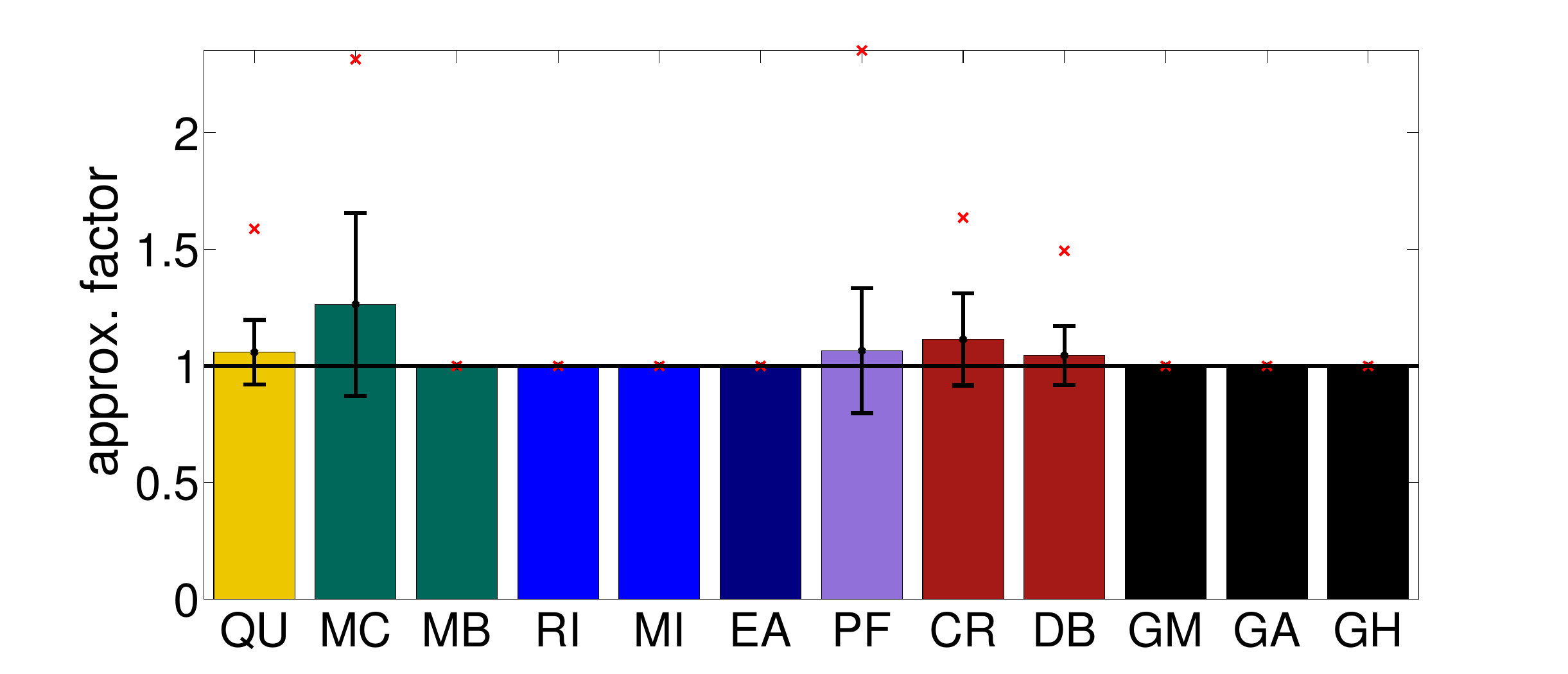} & 
     \includegraphics[width=0.49\textwidth]{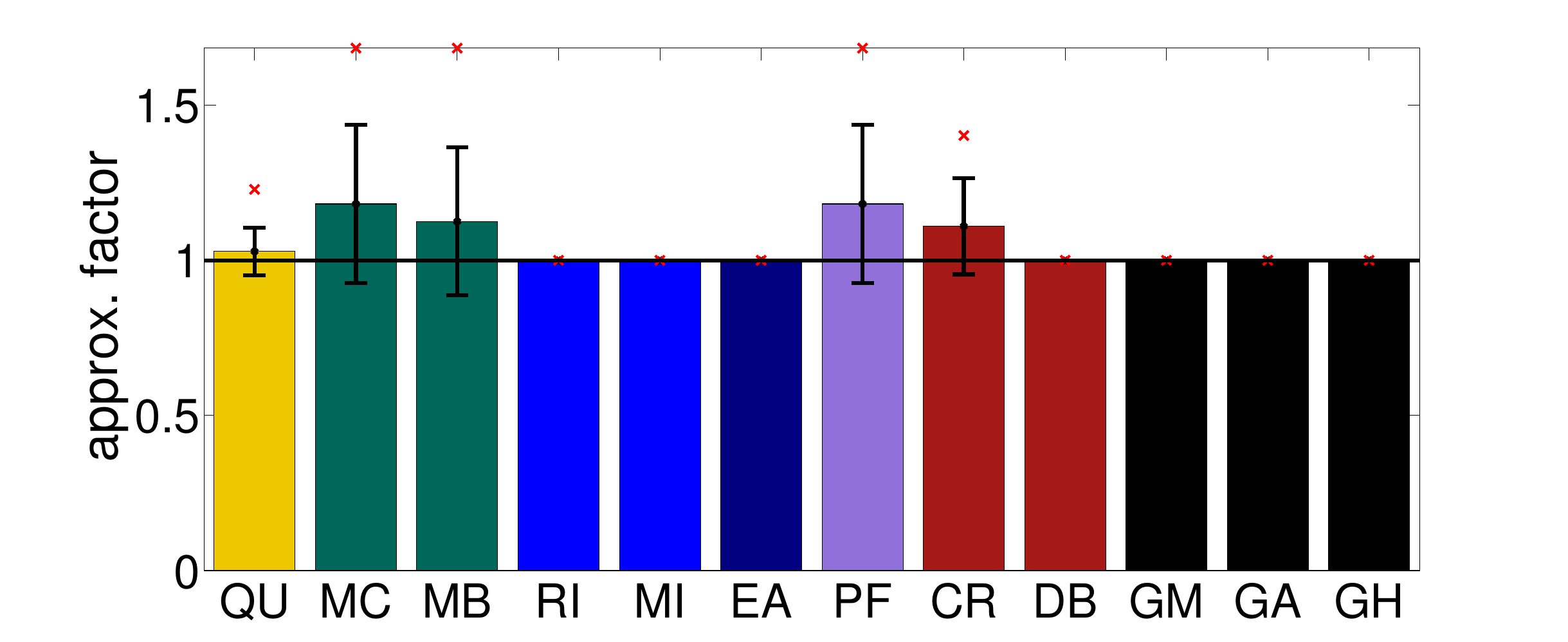} \\[-1pt]
     \multicolumn{2}{c}{bestcut functions (average over 15 (left), 20 (right) instances)}\\[-1pt]
     \includegraphics[width=0.49\textwidth]{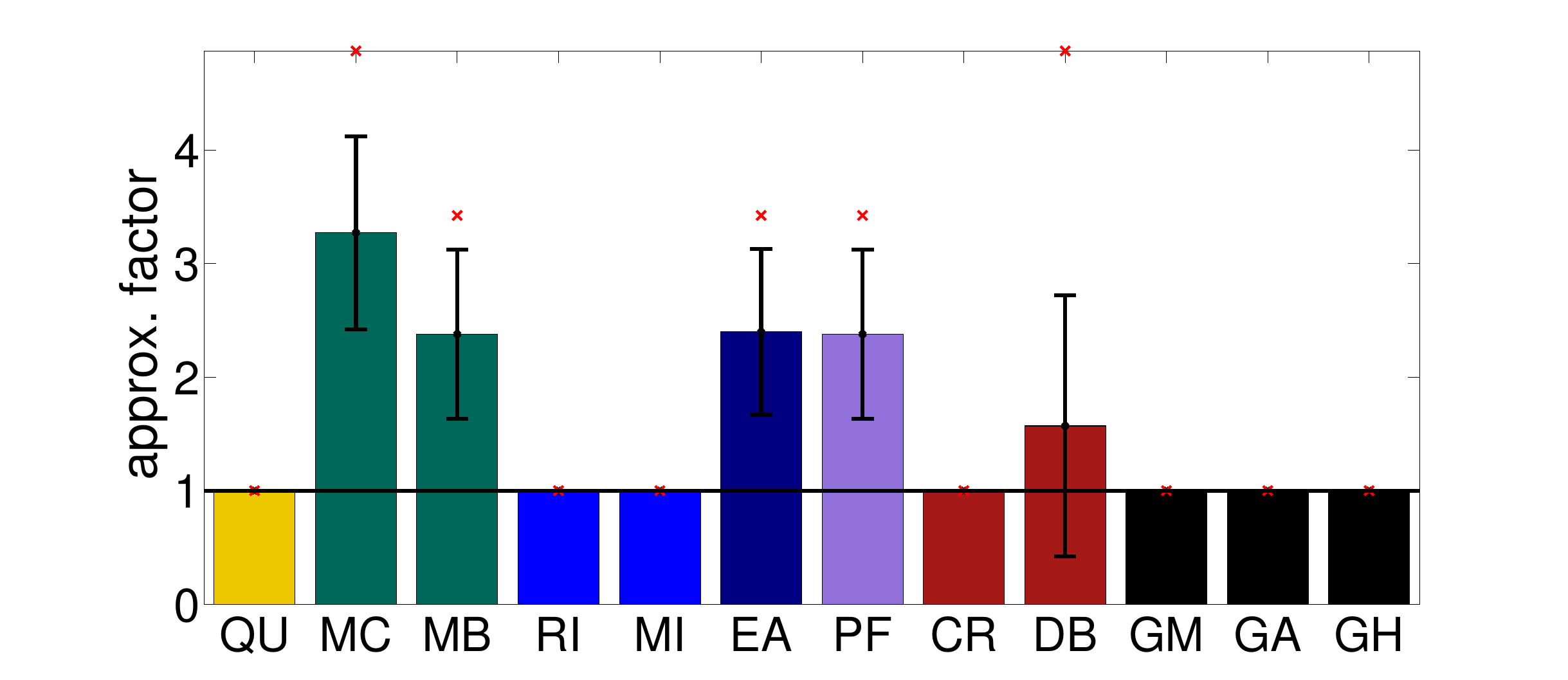} & 
     \includegraphics[width=0.49\textwidth]{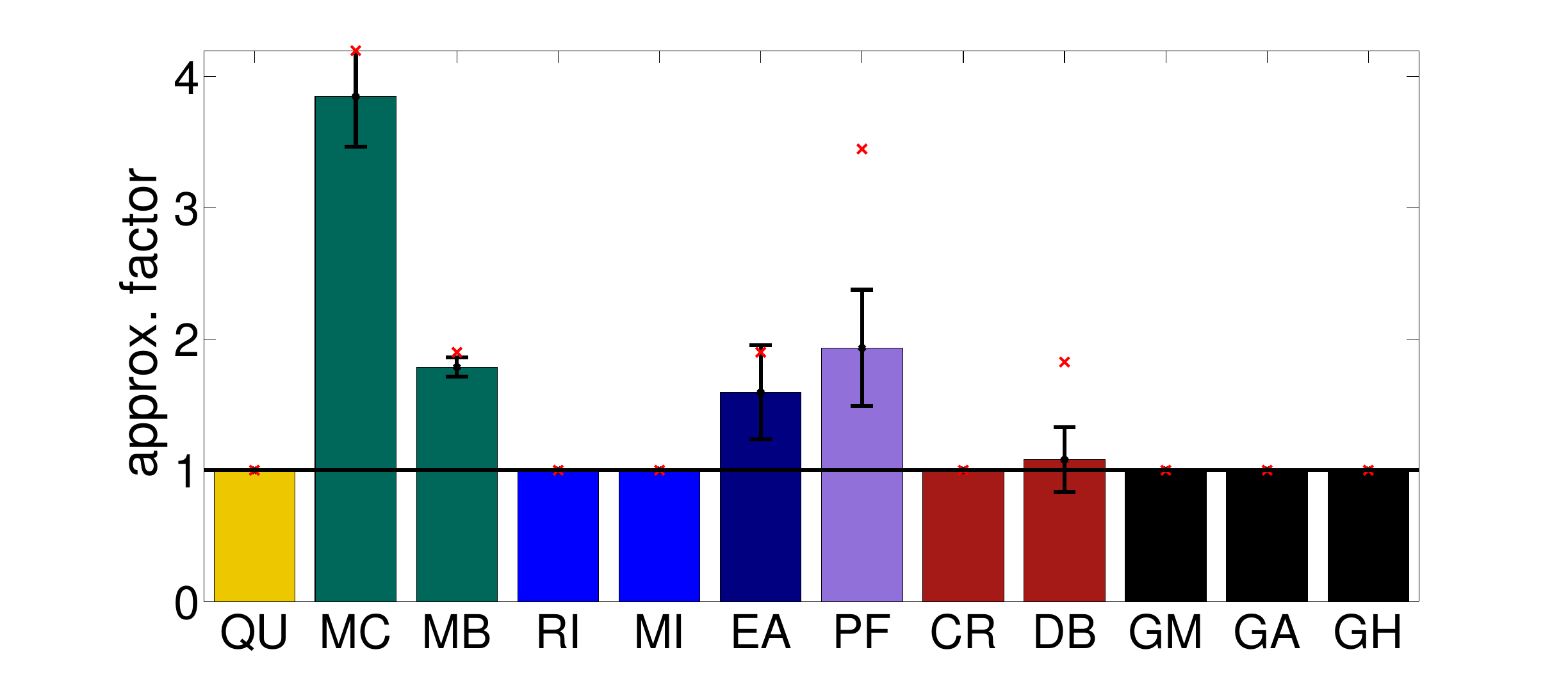} \\[-1pt]
     \includegraphics[width=0.49\textwidth]{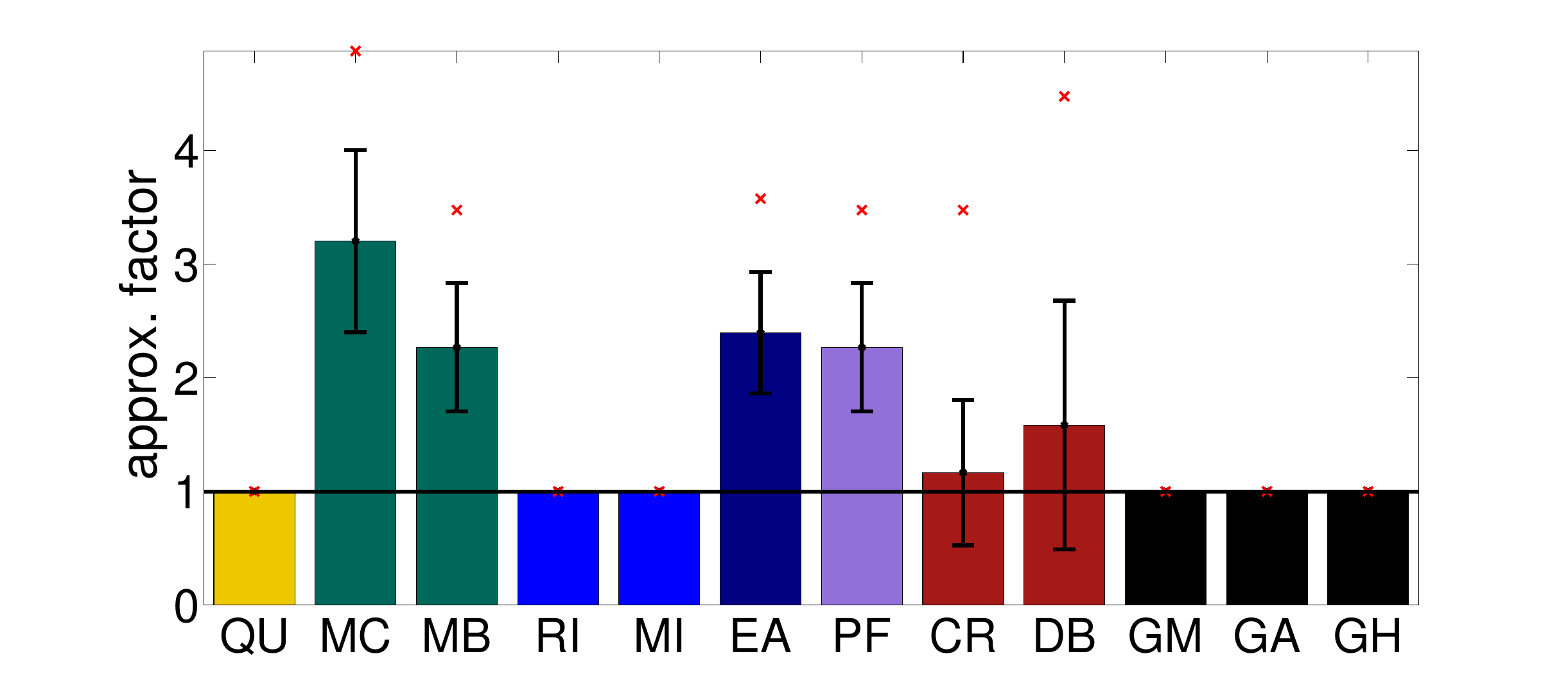} & 
     \includegraphics[width=0.49\textwidth]{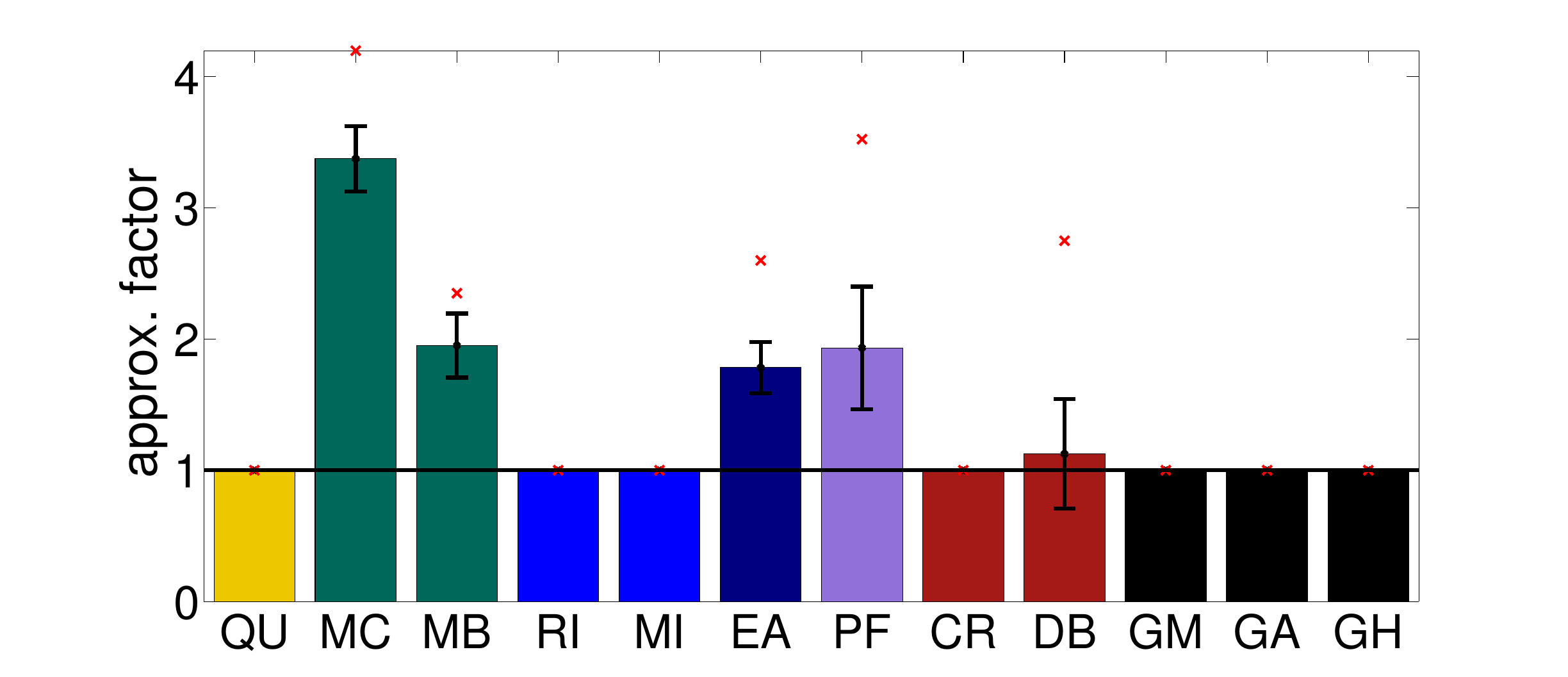} \\[6pt]
     \multicolumn{2}{c}{truncated rank (average over 15 (left), 20 (right) instances)}\\[-1pt]
     \includegraphics[width=0.49\textwidth]{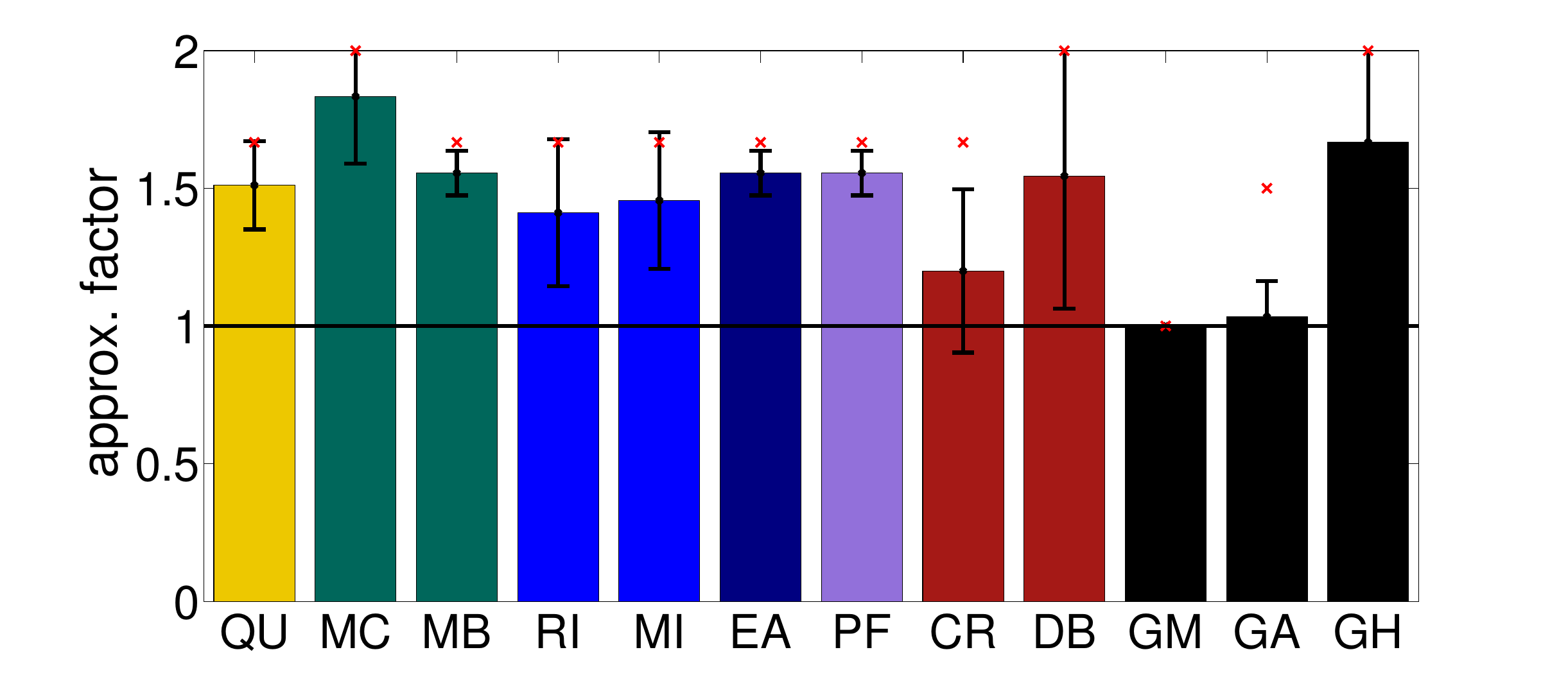} & 
     \includegraphics[width=0.49\textwidth]{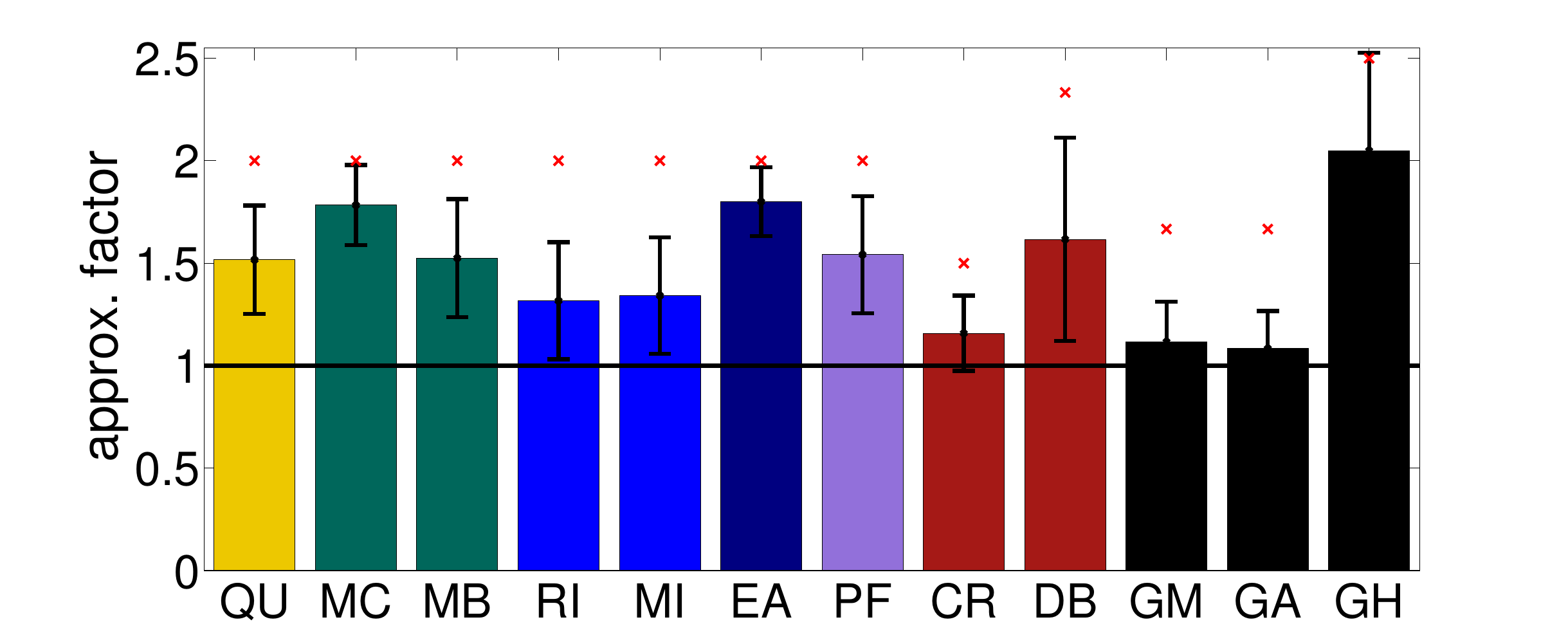} \\
   \end{tabular}
  }
  \caption[Results for the average-case experiments.] {Results for average-case experiments. The bars show the mean empirical approximation
     factors, and red crosses mark the maximum observed empirical approximation
     factor. The left column refers to grid graphs, the right column to clustered graphs. The first three algorithms (bars) are baselines, the next four approximate $f$, the next four solve a relaxation, and the last is the deterministic greedy heuristic.} 
  \label{fig:res1}
\end{figure}

The general performance depends much on the actual problem instance; the truncated rank functions with hidden structure are, as may be expected, the most difficult.
The simple benchmarks relying on $\famo$ perform worse than the more sophisticated algorithms. Queyranne's algorithm performs surprisingly well here.

\subsection{Difficult instances}\label{sec:wc}
Lastly, we show two difficult instances. More examples may be found in~\cite[Ch.~4]{mythesis}. The example demonstrates the drawbacks of using approximations like $\famo$ and Queyranne's algorithm.

Our instance is a graph with $n=10$ modes, shown in Figure~\ref{fig:wc2}. The graph edges are partitioned into $n/2$ sets, indicated by colors. The black set $\Es_k$ makes up the cut with the maximum number of edges. The remaining edge sets are constructed as
\begin{equation}
\Es_i = \big\{(v_i, v_j) \in \Es \;|\; i < j \leq n/2\big\} \union \big\{(v_{n/2+i},
v_j) \in \Es \;|\; n/2 + i < j \leq n\big\} 
\end{equation}
for each $1 \leq i < n/2$. In Figure~\ref{fig:wc2}, set $\Es_1$ is red,
set $\Es_2$ is blue, and so on. The cost function is
\begin{equation}
  f_{a}(A) = \mathbf{1}\big[ |A \inter \Es_k| \geq 1\big] +
  \sum_{i=1}^{n/2-1} b \cdot \mathbf{1}\big[ |A \inter \Es_i| \geq 1\big] + \epsilon |A \inter \Es_k|,
\end{equation}
with $b = n/2$. The function $\mathbf{1}[\cdot]$ denotes
the indicator function. 
The cost of the optimal solution is $f(C^*) = f(\Es_k) = 1 +  \frac{n^2}{4}\epsilon \approx 1$. The second-best solution is the cut $\delta(v_1)$ with cost $f(\delta v_1) = 1 + \frac{n^2}{4}\epsilon  + b \approx 1 + \frac{n}{2} = 6$, i.e., it is by a factor of almost $b = n/2$ worse than the optimal solution. Finally, MC finds the solution $\delta(v_n)$ with $f(\delta v_n) = 1 + \frac{n^2}{4} \epsilon + b(\frac{n}{2}-1) \approx \frac{n^2}{4} = 21$.

Variant (b) uses the cost function
\begin{equation}
  f_{b}(A) = \mathbf{1}[ |A \inter \Es_k| \geq 1] +
  \sum_{i=1}^{n/2-1} b \cdot \mathbf{1}[ |A \inter \Es_i| \geq 1]
\end{equation}
with a large constant $b=n^2=100$. For any $b > n/2$, any solution other than $C^*$ is more than $n^2/4 = |C^*| > n$
times worse than the optimal solution. Hence,
thanks to the upper bounds on their approximation factors, 
all algorithms except for QU find the optimal
solution. The result of the latter
depends on how it selects a minimizer of $f(B \union e) -
f(e)$ in the search for a pendent pair; this quantity often has
several minimizers here. Variant (b) uses a specific adversarial permutation of node labels, for which QU always returns the same solution $\delta v_1$ with cost $b+1$, no
matter how large $b$ is: its solution can become
arbitrarily poor.

\begin{figure}[t]
  \centering
  \begin{tabular}{@{\hspace{-2pt}}p{15pt}@{\hspace{-3pt}}c@{\hspace{-3pt}}p{15pt}@{\hspace{-3pt}}c}
    \multicolumn{4}{c}{\includegraphics[width=0.3\textwidth]{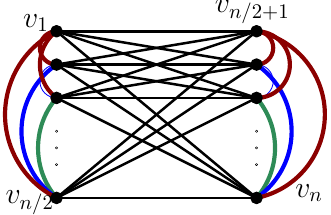}} 
    \\
    (a)& \multirow{2}{*}{\includegraphics[width=0.48\textwidth]{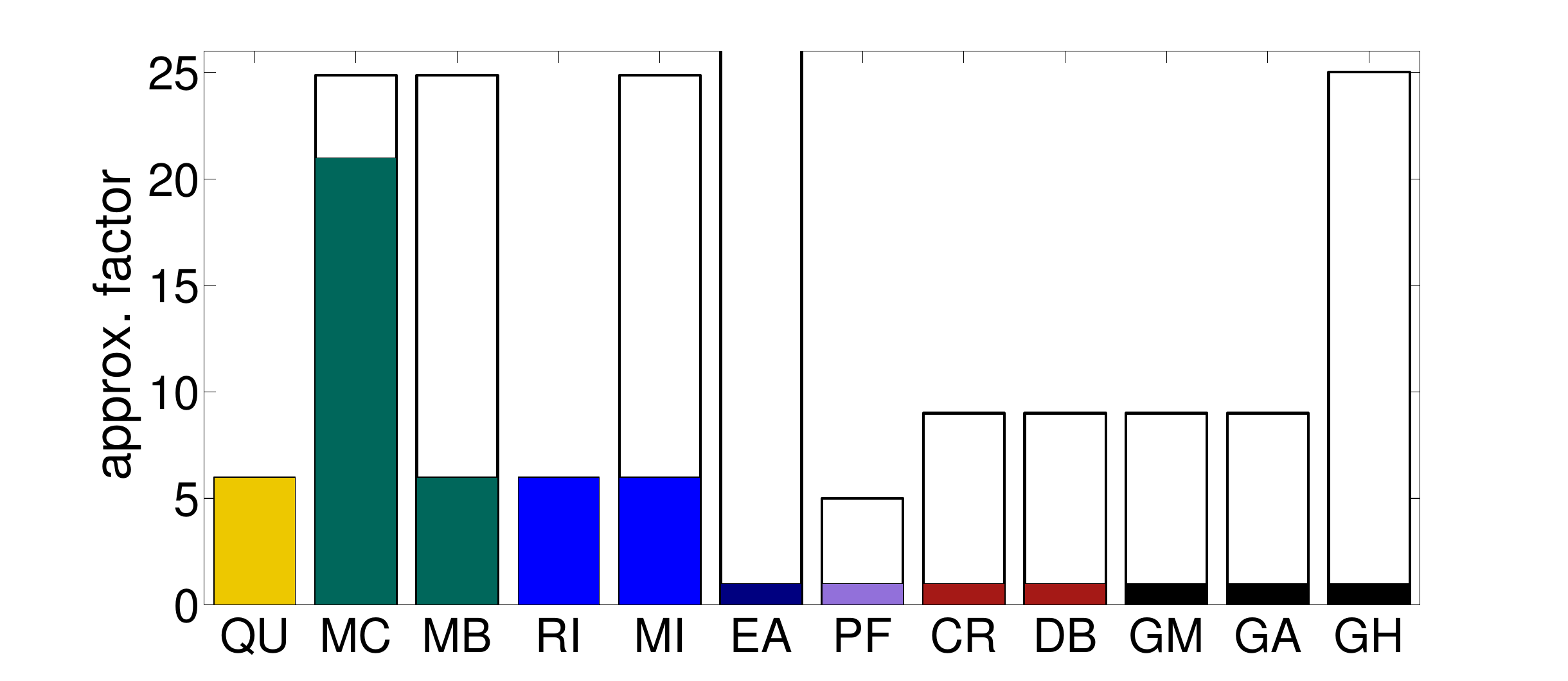}}
     & 
     (b) &
     \multirow{2}{*}{\includegraphics[width=0.48\textwidth]{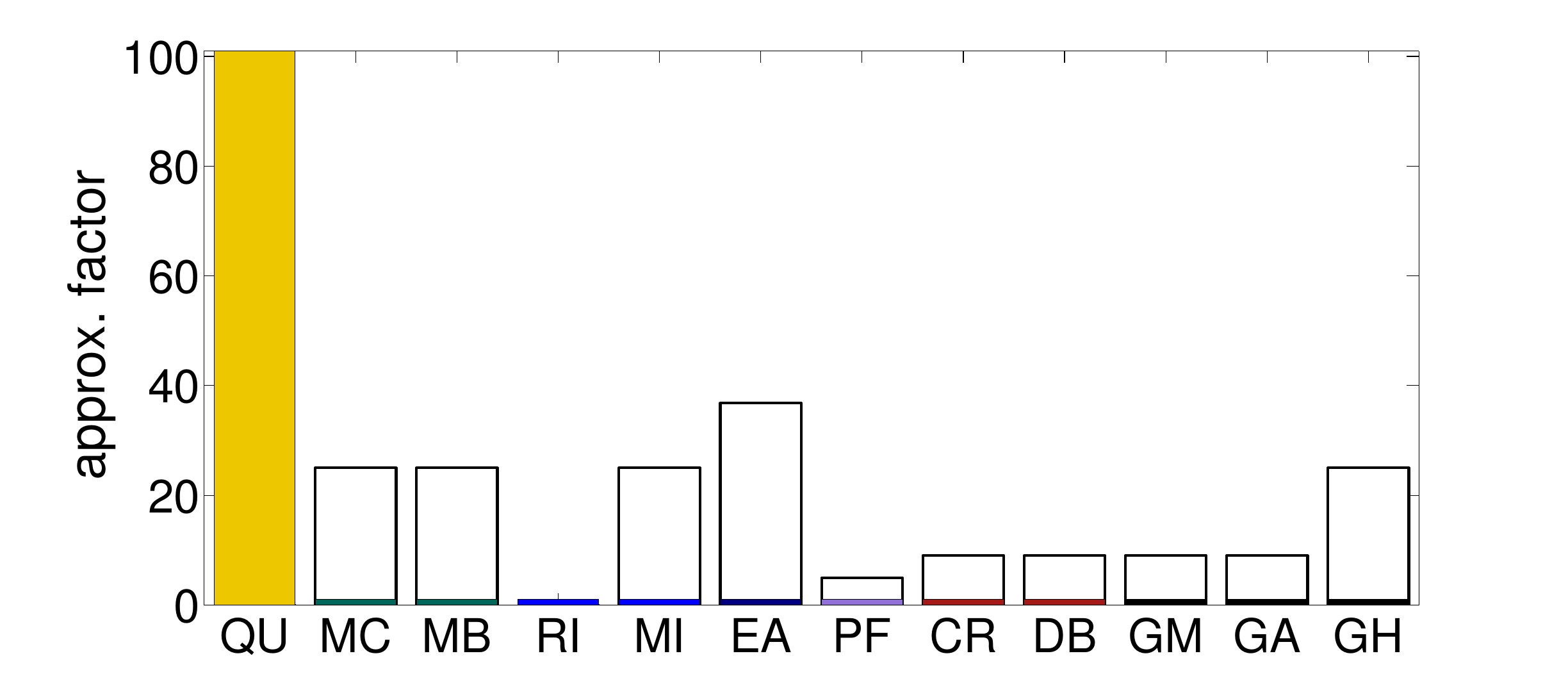}}
     \\ 
      & & & \\
    & & & \\
    & & & \\
    & & & \\[20pt]
  \end{tabular}

  \caption[Difficult graph and empirical approximation factors.]{Difficult instance and empirical approximation factors with $n=10$
    nodes. White bars illustrate theoretical approximation bounds
    where applicable.
    In
    (b), the second-best cut $\delta v_1$ has cost $f_b(\delta v_1) =
    b+1 =101 \gg
    \max\{|C^*|, n, \sqrt{m}\log m\}$.}
  \label{fig:wc2}
\end{figure}

\section{Discussion}
In this work, we have defined and analyzed the \mcoocut{} problem, that is, a
minimum $(s,t)$-cut problem with a submodular cost function on graph
edges. This problem unifies a number of non-additive graph cut problems
in the literature that have arisen in different application areas.

We showed an information-theoretic lower bound of
$\Omega(\sqrt{n})$ for the general \mcoocut{} problem if the function
is given as an oracle, and 
NP-hardness even if the cost function is fully known and polynomially
representable. We propose and compare complementary approximation algorithms  that
either rely on representing the cost function by a simpler function,
or on solving a relaxation of the mathematical program. The latter are
closely tied to the longest path of cooperating edges in the graph,
as is the flow-cut gap. We also show that the flow-cut gap may
be as large as $n-1$, and therefore larger than the best approximation factor possible. 

The lower bound and analysis of the integrality gap use a particular
graph structure, a graph with parallel disjoint paths of equal
length. 
Taken all proposed algorithms
together, all instances of \mcoocut{} on graphs with parallel paths of the same
length can be solved within an approximation bound at most $\sqrt{n}$.
This leaves the question whether there is an instance that makes
\emph{all} approximations worse than $\sqrt{n}$.

Section~\ref{sec:easier} outlined properties of submodular functions
that facilitate submodular minimization under combinatorial
constraints, and also submodular minimization in general. Apart from
separability, we defined the hierarchy of function classes
$\mathcal{F}(k)$. 
The $\mathcal{F}(k)$ are related to
graph-representability and might therefore build a bridge between
recent results about limitations of representing submodular functions
as graph cuts \citep{zcj09} (and, even stricter, the limitations of
polynomial representability) and the results discussed in
Section~\ref{sec:symmetry} that provide improved algorithms whose
complexity depends on $k$.

\subsection{Open problems}
This paper is part of a growing collection of work that studies submodular cost functions in combinatorial optimization problems over cuts, trees, matchings, and so on. Such problems are not only of theoretical interest: they occur in a spectrum of problems in computer vision \citep{jb11,shelhamer14,heng15,taniai15} and machine learning \citep{iyer13,iyer13nips,khalil2014scalable}. In several cases, the functions used do not directly fall into any of the ``easier'' sub-classes (e.g., the entropy cuts outlined in Section~\ref{sec:motiv-spec-cases},
and also see the discussion in Section~\ref{sec:gener-coop-cut}). At the same time, the empirical results in this paper and others \citep{iyer13} suggest that in many cases, the results of approximate algorithms can still be good, even though in the worst case they are not. Section~\ref{sec:easier} outlines beneficial properties. Is there a more precise quantification of the complexity of these problems? Do there exist other properties that lead to better algorithms? One direction that is less explored is the interaction of the graph structure with the cost function.

Specific to this work, cut functions induce a function on nodes. Propositions~\ref{prop:properties_h} and \ref{prop:symm} imply that the node function can be submodular, but in very many cases it is not. Yet, the results for Queyranne's algorithm in Section~\ref{sec:expt} suggest that often the function $h$ may remain close to submodular. This could be the case, for example, if the graph is almost complete and $f$ symmetric, or if the symmetry of $f$ is more restricted. A deeper study of the functions $h$ induced by cooperative cuts could reveal insights about a refined complexity of the problem, and explain the good empirical results. One particular interesting example was the polymatroidal flow case where the function $f$ defined in Eqn.~\eqref{eq:pmf} was not necessarily submodular (see Proposition~\ref{prop:conv}), but where the resulting $h$ (also not necessarily submodular) could be optimized exactly in polynomial time.  This suggests an interesting future direction, namely to fully characterize a class of functions $f$ for which polytime algorithms can be obtained to solve minimum ``interacting cut'' problems (i.e., cut problems where the edges may interact but not necessarily in a purely submodular or supermodular fashion).  The dual polymatroidal flow case shows one instance of interacting cut that can be solved exactly.

Finally, finding optimal bounds and algorithms for related cut problems with submodular edge weights is an open problem. Appendix~\ref{app:sparsestcut} outlines some initial results for cooperative multi-way and sparsest cut.

\bibliographystyle{abbrvnat}
\bibliography{refs}

\appendix

\section{Cooperative multi-cut and sparsest cut}\label{app:sparsestcut}
An extension of \mcoocut{} is the problem of cooperative multi-way cut and
sparsest cut. Using the approximation $\fae$ from Section~\ref{subsec:pmf}, we
can transform any multi-way or sparsest cut problem with a submodular
cost function on edges (instead of a sum of edge weights) into a cut
problem whose cut cost is a convolution of local submodular
functions. The relaxation of this cut problem is dual to the
polymatroidal flow problems considered by \citet{ckrv12}.
Combining their results with ours, we get the following Lemma.
\begin{lemma}
  Let $\alpha$ be the approximation factor for solving a sparsest cut
  / multi-way cut in a polymatroidal network. If we solve a
  cooperative sparsest cut / multi-way cut by first approximating the
  cut cost $f$ by a function $\fae$ and, on this instance, using the
  method with factor $\alpha$, we get an $O(\alpha n)$-approximation
  for cooperative sparsest cut / multi-way cut.
\end{lemma}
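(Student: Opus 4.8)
The plan is to chain two facts: first, the ellipsoidal approximation $\fae$ of Section~\ref{subsec:generic} replaces the submodular edge cost $f$ by a function of the form $\fae(A) = \sqrt{\sum_{e\in A} w_f(e)}$ satisfying $f(A) \le \fae(A) \le O(\sqrt{m}\,) f(A)$ (or $O(\sqrt{m}\log m) f(A)$ in general); second, squaring $\fae$ gives a \emph{modular} cost $\fae^2(A) = \sum_{e\in A} w_f(e)$, so a cooperative sparsest/multi-way cut instance with cost $\fae$ is, up to the square, an ordinary edge-weighted sparsest/multi-way cut. However, to get a \emph{polymatroidal} flow instance rather than a plain one, I would instead pass through the locally-separable surrogate $\fap$ of Section~\ref{subsec:pmf} (the text's reference to ``$\fae$ from Section~\ref{subsec:pmf}'' indicates $\fap$ is intended): by Lemma~\ref{lem:dualpoly} and Equation~\eqref{eq:capacity_conv}, minimum cut with cost $\fap$ is dual to a polymatroidal network flow, and by the argument of Theorem~\ref{thm:approxPMF} one has $f(C) \le \fap(C) \le \min\{|\Delta_s|,|\Delta_t|\} f(C) \le \tfrac{n}{2} f(C)$ on the relevant cut sets. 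So replacing $f$ by $\fap$ loses a factor $O(n)$ but lands us in the polymatroidal-flow world.

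The key steps, in order: (1) Given a cooperative sparsest/multi-way cut instance $(\Gs, f)$, form the surrogate instance $(\Gs, \fap)$ with $\fap$ as in~\eqref{eq:pmf}; by subadditivity $f \le \fap$, and along any cut $C^*$ that is a candidate optimum, $\fap(C^*) \le \min\{|\Delta_s|,|\Delta_t|\} f(C^*) = O(n) f(C^*)$, exactly as in the proof of Theorem~\ref{thm:approxPMF}. (2) Observe that $\fap$ decomposes (once the head/tail partition is chosen) into local submodular functions on node-incidence sets, so a cut problem with cost $\fap$ is a cut problem in a polymatroidal network with capacities $\restriction{f}{\delta^-v}$, $\restriction{f}{\delta^+v}$, and its relaxation is dual to the polymatroidal sparsest/multi-way flow problems of \citet{ckrv12}. (3) Apply their algorithm with factor $\alpha$ to this polymatroidal instance to obtain a cut $\Ch$ with $\fap(\Ch) \le \alpha \cdot \mathrm{OPT}_{\fap}$, where $\mathrm{OPT}_{\fap}$ is the optimal $\fap$-cost (for sparsest cut, the ``cost'' is the appropriate ratio, and one must check the ratio version of the approximation-transfer lemma). (4) Chain the inequalities in the style of Lemma~\ref{lem:fapprox}: $f(\Ch) \le \fap(\Ch) \le \alpha \cdot \mathrm{OPT}_{\fap} \le \alpha \cdot \fap(C^*) \le \alpha \cdot O(n) \cdot f(C^*) = O(\alpha n) f(C^*)$, giving the claimed $O(\alpha n)$-approximation.

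The main obstacle I anticipate is step~(3)--(4) for \emph{sparsest} cut specifically: Lemma~\ref{lem:fapprox} is stated for the minimization-of-a-single-cost setting, whereas sparsest cut minimizes a ratio (cut cost divided by demand separated, or by the product of side sizes). One must verify that sandwiching $f \le \fap \le O(n) f$ \emph{pointwise on cut sets} translates into an $O(n)$ distortion of the sparsity ratio — this works because the denominator (demand/separation term) depends only on which vertices are separated, not on the edge cost, so it is identical for $f$ and $\fap$; hence the ratio is distorted by exactly the factor by which the numerator is distorted, i.e.\ $O(n)$. A secondary subtlety is that $\fap$ is not submodular (noted after~\eqref{eq:pmf} and in Appendix~\ref{app:convolution}), so one cannot invoke generic submodular-cut machinery on it directly; but this is precisely why we route through the polymatroidal-flow duality of \citet{ckrv12}, which handles the convolution structure natively. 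Finally, one should state which notion of ``$\alpha$'' is inherited — it is whatever approximation guarantee \citet{ckrv12} prove for polymatroidal sparsest/multi-way cut — so the Lemma is best read as a black-box reduction, and the proof need only make the factor-of-$n$ bookkeeping explicit.
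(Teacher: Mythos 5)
Your proposal is correct and follows exactly the route the paper intends: the paper gives no explicit proof of this lemma, but the preceding paragraph describes precisely your reduction --- replace $f$ by the locally separable surrogate of Section~\ref{subsec:pmf} (you rightly note that ``$\fae$'' there is a slip for $\fap$), losing an $O(n)$ factor as in Theorem~\ref{thm:approxPMF}, then invoke the $\alpha$-approximation of \citet{ckrv12} on the resulting polymatroidal instance and chain the inequalities as in Lemma~\ref{lem:fapprox}. Your added remark that the sparsest-cut denominator is cost-independent, so the pointwise sandwich on cut costs transfers directly to the ratio objective, is a worthwhile detail the paper leaves implicit.
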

Using Theorems~6 and 8 in \citep{ckrv12}, we obtain the following
bounds: 
\begin{corollary}
  There is an $O(n\log k)$ approximation for cooperative sparsest cut
  in undirected graphs that is dual to a maximum multicommodity flow
  problem with $k$ pairs, and an $O(n\log k)$ approximation for
  cooperative multi-way cut.
\end{corollary}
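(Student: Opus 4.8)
The plan is to obtain the Corollary purely by composition: instantiate the preceding Lemma with the polymatroidal-network approximation factors established by \citet{ckrv12}, so that essentially no new estimates are needed. First I would recall why the Lemma itself holds. Given a cooperative sparsest-cut or multi-way-cut instance with monotone submodular edge cost $f$, replace $f$ by the convolution approximation $\fap$ of Section~\ref{subsec:pmf}. Subadditivity of $f$ gives $f(C)\le\fap(C)$ for every edge set $C$, while the inequality chain in the proof of Theorem~\ref{thm:approxPMF} gives $\fap(C)\le\min\{|\Delta_s(C)|,|\Delta_t(C)|\}\,f(C)\le(|\Vs|/2)\,f(C)$ for every cut $C$. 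Since the set of terminal pairs that a given cut separates is a purely combinatorial quantity, independent of the cost, this substitution is ratio-preserving for sparsest cut (the separated-demand term in the denominator is untouched) and value-preserving for multi-way cut, and it inflates the relevant optimum by at most a factor $O(n)$. By Lemma~\ref{lem:dualpoly} (applied to each terminal pair), a minimum cut under $\fap$ is exactly a cut in the polymatroidal network with local capacities $\restriction{f}{\delta^-v}$ and $\restriction{f}{\delta^+v}$; hence an $\alpha$-approximation algorithm for sparsest / multi-way cut in polymatroidal networks, run on this network, returns $\Ch$ with $f(\Ch)\le\fap(\Ch)\le\alpha\,\fap(C^*)\le O(\alpha n)\,f(C^*)$ (and likewise for the sparsity ratio) --- which is the Lemma.

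With the Lemma in hand, the Corollary follows by plugging in $\alpha=O(\log k)$. For undirected sparsest cut in a polymatroidal network with $k$ source--sink pairs, Theorem~6 of \citep{ckrv12} gives an $O(\log k)$-approximation whose relaxation is dual to a maximum multicommodity flow; composing with the $O(n)$ loss yields the claimed $O(n\log k)$ bound, and the flow duality is inherited because the relaxation of cooperative sparsest cut, after the $\fap$ substitution, is precisely the polymatroidal-network sparsest-cut relaxation whose dual is the polymatroidal multicommodity flow. For cooperative multi-way cut, Theorem~8 of \citep{ckrv12} supplies an $O(\log k)$-approximation for multi-way cut in polymatroidal networks, and the same composition gives $O(n\log k)$.

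The step I expect to require the most care --- the main obstacle --- is checking that the hypotheses of Theorems~6 and~8 of \citep{ckrv12} are genuinely met by the network produced by the $\fap$ reduction. I would verify three things: (i) the local capacity functions $\restriction{f}{\delta^-v}$ and $\restriction{f}{\delta^+v}$ are normalized, monotone and submodular, which is immediate since these properties are preserved under restriction; (ii) after reducing the undirected input graph to a bidirected one, the antiparallel copies of each edge carry a capacity structure for which the resulting polymatroidal network is of the ``undirected'' type that Theorem~6 assumes; and (iii) the cut cost that \citet{ckrv12} optimize --- the convolution $\mathrm{cap}^{\mathrm{in}}\ast\mathrm{cap}^{\mathrm{out}}$ dual to the polymatroidal flow --- coincides with $\fap$, which is exactly Lemma~\ref{lem:dualpoly} generalized from one $(s,t)$ pair to all terminal pairs. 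Once these are confirmed, the corollary is a one-line composition with no further calculation.
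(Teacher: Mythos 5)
Your proposal is correct and follows exactly the paper's route: the corollary is obtained by instantiating the preceding Lemma (the $O(n)$ loss from substituting $f$ by the convolution approximation $\fap$, justified via Lemma~\ref{lem:dualpoly} and the inequality chain of Theorem~\ref{thm:approxPMF}) with the $O(\log k)$ factors from Theorems~6 and~8 of \citet{ckrv12}. The paper gives no more detail than this one-line composition, so your additional verification of the hypotheses is welcome diligence but not a departure from the intended argument.
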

We leave it as an open problem whether these bounds are optimal.

\section{Proof of Proposition~\ref{prop:properties_h}}\label{app:properties}
The first part of Proposition \ref{prop:properties_h} is proven by Figure~\ref{fig:notsubmod}. Here, we show the second part that the function $h(X) = f(\delta^+(X))$ is subadditive if $f$ is nondecreasing and submodular.
Let $X,Y \subseteq \Vs$.
Then it holds that
\begin{align}
  h(X) + h(Y) &= f( \delta^+(X)) + f(\delta^+(Y))\\
  \label{eq:subadd1}
  &\geq f( \delta^+(X) \union \delta^+(Y)) + f( \delta^+(X) \inter \delta^+(Y))\\
  \label{eq:subadd2}
  &\geq f( \delta^+(X) \union \delta^+(Y))\\
  &\geq f( \delta^+(X \union Y))\\
  &= h(X \union Y).
\end{align}
In Inequality~\eqref{eq:subadd1}, we used that $f$ is submodular, and in Inequality~\eqref{eq:subadd2}, we used that $f$ is nonnegative.

\section{Reduction from {\sc Graph Bisection} to \mcoocut{}}
\label{sec:reduct}

In this section, we prove Theorem~\ref{thm:np} via a reduction from
{\sc Graph Bisection}, which is known to be NP-hard \citep{gjs76}.
\begin{defn}[{\sc Graph Bisection}]
  Given a graph $\Gs_B = (\Vs_B,\Es_B)$ with edge weights $w_B: \Es_B \to \mathbb{R}_+$,
  find a partition $V_1 \dot\union V_2 = \Vs_B$ with $|V_1| = |V_2| = |\Vs_B|/2$
  with minimum cut weight $w(\delta(V_1))$. 
\end{defn}

\begin{proof}
  To reduce {\sc Graph Bisection} to \mcoocut{},
  we construct an auxiliary graph $\Gs = (\Vs_B \union \{s,t\}, \Es_B \union \Es_s \union \Es_t)$ that contains an unchanged copy of $\Gs_B$ and two
  additional terminal nodes $s,t$. The submodular weights on the edges
  adjacent to the terminal nodes will express the balance constraint
  $|V_1| = |V_2| = |\Vs_B|/2$. 
  \begin{figure*}[t]
    \subfigure[graph $\Gs$ with $\Es_s$ (blue), 
    $\Es_t$ (red) and $\Gs_B$ (black)]{  \label{fig:redgrapha}
      \begin{minipage}{0.41\textwidth}
        \vspace{-55pt}
        \begin{center}
          \includegraphics[width=0.66\textwidth]{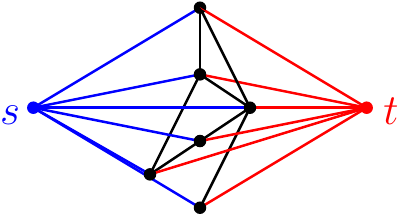}
        \end{center}
      \end{minipage}
    }\hfill 
    \subfigure[graph $\Hs_\sigma$]{ \label{fig:redgraphb}
      \includegraphics[width=0.38\textwidth]{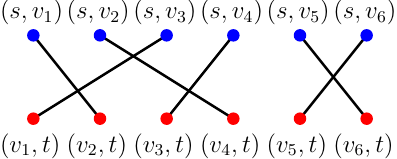}
    }\\ 
    \subfigure[$h_\sigma(\phi(C))=5$ connected components]{
      \includegraphics[width=0.42\textwidth]{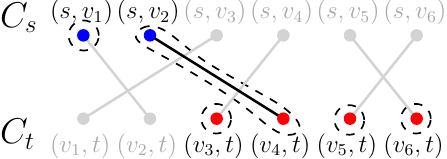}
    }\hfill 
    \subfigure[balanced cut $C$: $h_\sigma(\phi(C))=3$ connected components]{
      \includegraphics[width=0.42\textwidth]{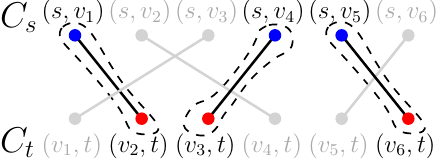}
    }
    \caption[Graph for the reduction and examples for the definition
    of $f_{bal}$ via ranks.]{Graph for the reduction and examples for the definition
      of $f_{bal}$ via ranks $h_\sigma$, with $n_B=6$. In (c),
      $C_s=\{(s,v_1), (s,v_2)\}$ and $C_t=\{(v_3,t), (v_4,t),
      (v_5,t),(v_6,t)\}$; in (d), $C_s=\{(s,v_1), (s,v_4),(s,v_5)\}$
      and $C_t=\{(v_2,t), (v_3,t),(v_6,t)\}$. Connected components
      are indicated by dashed lines.} 
    \label{fig:redgraph}
  \end{figure*}
   In $\Gs$, we retain 
   the modular
   costs $w$ on $\Es_B$ and connect $s,t$ to
   every vertex in $\Gs_B$ with corresponding new edge sets $\Es_s$ and
   $\Es_t$, as illustrated in Figure~\ref{fig:redgrapha}.
   The cost of a cut in $\Gs$ is measured by the
   submodular function
   \begin{equation}
     f(C) = \sum_{e \in C \inter \Es_B}w(e) +  \beta f_{bal}(C \inter (\Es_s
     \union \Es_t)),
   \end{equation}
   where $\beta$ is an appropriately large constant, and $f_{bal}$ will be defined later. 
   Obviously, any minimal $(s,t)$-cut $C$ must include $n_B=|\Vs_B|$ edges
   from $\Es_s \union \Es_t$, and partitions $\Vs_B$. 
   Moreover, the cardinality of $C_s = C \inter
   \Es_s$ is the number of nodes in $\Vs_B$ assigned to $t$. Hence,
   in an equipartition, $|C_s| = |C_t| = n_B/2$, where $C_t = C \inter \Es_t$.

   It remains to define $f_{bal}$ as a nondecreasing submodular function that implements the equipartition constraint. The function will be an expectation over matroid rank functions  $h_\sigma$.
   Let $\Hs_\sigma = (\Es_s, \Es_t, \Fs_{\sigma})$ be a bipartite graph with \emph{nodes} $\Es_s \union \Es_t$ whose edges $\Fs_{\sigma}$ form a derangement 
   between $\Es_s$ and $\Es_t$, as
   illustrated in Fig.\ \ref{fig:redgraphb}.

   Let $\phi(C_s \union C_t)$ be the image of $C_s \union C_t$ in the set of nodes of $\Hs_\sigma$. The function
   $h_{\sigma}: 2^{\phi(\Es_s \union \Es_t)} \to \mathbb{N}_0$ counts the number of connected
   components in the subgraph induced by the nodes $\phi(C_s \union
   C_t)$, and is the rank of a partition matroid. Figure~\ref{fig:redgraph} shows some examples. 

   Let $\mathfrak{S}$ be the set of all derangements $\sigma$ of $n_B$
   elements, i.e., all possible edge configurations in
   $\Hs_\sigma$. We define $f_{bal}$ to be the expectation (under uniform draws of $\sigma$)
   \begin{equation}\label{eq:fb2}
     f_{bal}(C) = \mathbb{E}_\sigma[h_\sigma(\phi(C))] =
     |\mathfrak{S}|^{-1}\sum\nolimits_{\sigma \in \mathfrak{S}} h_\sigma(\phi(C)).
   \end{equation}
   For a fixed derangement $\sigma'$ and a fixed size $|C_s
   \union C_t|=n_B$, the value $h_{\sigma'}(C_s \union C_t)$ is minimal if $\sigma'(C_s) = C_t$ and  $|C_s| = |C_t|$.
   For a fixed $\sigma$, the rank $h_{\sigma}(C)$ is $|\phi(C_s \union C_t)| = |C_s|+|C_t|$ minus the matching nodes. Denoting $(s,v_i)$ in $\Hs_{\sigma}$ by $x_i$ and $(v_i,t)$ by
   $y_i$, the rank is
   \begin{align}
     h_{\sigma}(\phi(C_s) \union \phi(C_t))
     &= |C_s| + |C_t|
     \label{eqn:rankfunction}
     - {\Big| \big\{ (x_i, y_{\sigma(i)}) \big\}_{i=1}^n\;
       \inter\; (\phi(C_s) \times \phi(C_t)) \Big|}.
   \end{align}
   Hence, the sum in \eqref{eq:fb2} becomes
   \begin{align}
     \sum_{\sigma \in \mathfrak{S}} h_\sigma(C)
     &= |\mathfrak{S}|\big(|C_s| +
     |C_t|\big) 
     -\sum_{\sigma \in \mathfrak{S}} \big| \big\{ (x_i,
     y_{\sigma(i)}) \big\}_{i=1}^n\, 
     \inter\, (\phi(C_s) \times \phi(C_t)) \big| \\
     \label{eq:coincRewrite}
     &= |\mathfrak{S}|\big(|C_s| +
     |C_t|\big) 
     - \sum_{x_i \in \phi(C_s)} \sum_{\sigma \in
       \mathfrak{S}} \big| (x_i, y_{\sigma(i)}) \inter  (\{x_i\} \times
     \phi(C_t)) \big| 
   \end{align}
   To compute the sum over $\sigma$ in the second term, let $C_{s\inter t} \triangleq \{ (s, v)\, |\, \{(s,v),(v, t)\} \subseteq C\}$ be the set of $s$-edges whose counterpart on the $t$
side is also contained in $C$. Let further $D'(n_B-1)$
denote the number of permutations of $n_B-1$ elements (pair
$(x_i,y_k)$, i.e., $\sigma(i)=k$, is fixed), where one specific
element $x_k$ can be mapped to
any other of the $n_B-1$ elements, and the remaining elements must not
be mapped to their counterparts ($\sigma(j) \neq j$). Then there are $D'(n_B-1)$ derangements $\sigma$ realizing a specific mapping $\sigma(i)=k$.
Denoting the number of derangements of $n$ elements by $D(n)$, the sum above becomes
\begin{align}
  D(n_B) f_{bal}(C)
  &= (|C_s| + |C_t|)D(n_B) \\
  \nonumber
  &\quad\quad 
  - \sum_{x_i \in C_s \setminus C_{s
      \inter t}} \sum_{y_k \in C_t} D'(n_B-1) 
  - \sum_{x_i \in C_{s
      \inter t}} \sum_{y_k \in C_t, k\neq i} D'(n_B-1)\\[4pt]
  &= (|C_s| + |C_t|)D(n_B)\\[2pt]
  \nonumber
  &\quad\quad- \big(|C_s| - |C_{s \inter t}|\big)|C_t|D'(n_B-1) 
  - |C_{s \inter t}| (|C_t|-1) D'(n_B-1)\\[4pt]
  &= (|C_s| + |C_t|)D(n_B)  
  - (|C_s||C_t| - |C_{s \inter t}|)
  D'(n_B-1),
\end{align}
with $D(n) = |\mathfrak{S}| = n! \sum_{k=0}^n (-1)^k / k!$ \citep{st97}, and $D'(n-1) =
\sum_{k=0}^{n-1} (n-2)!(n-1-k)!(-1)^k$ by Proposition \ref{prop:derange} below.
   
Given that $|C_s| + |C_t|$ must cut at
  least $n_B$ edges and that $f_{bal}$ is increasing, $f_{bal}$ is minimized if
  $|C_s| = |C_t| = n_B/2$. 
  As a result, if $\beta$ is large enough such that $f_{bal}$ dominates the cost, then a minimum cooperative cut in $\Gs$ bisects the $\Gs_B$ subgraph of $\Gs$ optimally. \qed
\end{proof}

\begin{proposition}\label{prop:derange}
  Let $D'(n)$ be the number of permutations of $n$ elements where for one fixed element $i'$ we allow $\sigma(i') \in \{1, \ldots, n\}$, but $\sigma(i) \neq i$ for all $i \neq i'$. Then
  $D'(n) = \sum_{k=0}^{n} \frac{(n-1)!}{k!}(n-k)! (-1)^k$.
\end{proposition}
\begin{proof}
$D'(n)$ can be derived by the method of the forbidden board
\citep[pp.~71-73]{st97}. Let, without loss of generality, $i'=n$, so the
forbidden board is $B = \{(1,1),
(2,2), \ldots, (n-1,n-1)\}$. Let $N_j$ be the number of permutations
$\sigma$ for which $\big|\{(i,\sigma(i)\}_{i=1}^n \inter B\big| = j$, and
let $r_k$ be the
number of $k$-subsets of $B$ such that no two elements have a
coordinate in common. Here, $r_k~=~{n-1 \choose k}$. Then  $D'(n) = N_0 = N_n(0)$ for 
\begin{align}
  N_n(x) &= \sum_j N_j x^j = \sum_{k=0}^n r_k (n-k)!(x-1)^k 
  = \sum_{k=0}^{n} \frac{(n-1)!}{k!} (n-k) (x-1)^k,
\end{align}
and hence $D'(n) = \sum_{k=0}^{n} \frac{(n-1)!}{k!}
(n-k)! (-1)^k$. 
\end{proof}
\section{Convolutions of submodular functions are not always submodular}
\label{app:convolution}

The non-submodularity of convolutions was mentioned already in \citep{lo83}. For completeness, we show an
explicit example that illustrates that non-submodularity also holds for the special case of polymatroidal flows.
\begin{proposition}
\label{prop:conv}
  The convolution of two submodular functions $(f \ast g)(A) = \min_{B \subseteq A}f(B) + g(A\setminus B)$ is not in general submodular. In particular, this also holds for the cut cost functions occurring in the dual problems of polymatroidal maximum flows.
\end{proposition}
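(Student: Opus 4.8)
The plan is to give one explicit, very small example that settles both assertions at once. By Lemma~\ref{lem:dualpoly} and Equation~\eqref{eq:capacity_conv}, the cut cost that is dual to a polymatroidal maximum flow is precisely a convolution $\mathrm{cap}^{\text{in}} \ast \mathrm{cap}^{\text{out}}$ in which $\mathrm{cap}^{\text{in}}(C) = \sum_v \restriction{f}{\delta^- v}(C \cap \delta^- v)$ and $\mathrm{cap}^{\text{out}}(C) = \sum_v \restriction{f}{\delta^+ v}(C \cap \delta^+ v)$ for a monotone submodular edge cost $f$; being a sum of monotone submodular functions (on disjoint supports), each of $\mathrm{cap}^{\text{in}}$ and $\mathrm{cap}^{\text{out}}$ is itself monotone and submodular. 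Hence it suffices to produce one tiny graph with a monotone submodular $f$ whose induced convolution violates Inequality~\eqref{eq:subdef}; the resulting pair $\mathrm{cap}^{\text{in}}, \mathrm{cap}^{\text{out}}$, read as two abstract submodular set functions, then also witnesses the first (general) statement of the proposition, so no separate construction is required.

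Concretely, I would take the graph with edges $e_1 = (u,y)$, $e_2 = (u,w)$ and $e_3 = (x,w)$, so that only $u$ has two outgoing edges and only $w$ has two incoming edges (a source and a sink can be attached by extra edges which, being absent from every set $C$ we evaluate, contribute only $f(\emptyset)=0$ and can be ignored), together with the monotone submodular cost $f(A) = \max_{e \in A} w(e)$ where $w(e_1) = w(e_2) = 2$ and $w(e_3) = 1$. Then, restricted to subsets of $\{e_1,e_2,e_3\}$, $\mathrm{cap}^{\text{out}} = \restriction{f}{\{e_1,e_2\}} + \restriction{f}{\{e_3\}}$ and $\mathrm{cap}^{\text{in}} = \restriction{f}{\{e_1\}} + \restriction{f}{\{e_2,e_3\}}$. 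I would evaluate $\mathrm{cap} = \mathrm{cap}^{\text{in}} \ast \mathrm{cap}^{\text{out}}$ on the subsets of $\{e_1,e_2,e_3\}$; for each $C$ this is the minimum of $\mathrm{cap}^{\text{in}}(A) + \mathrm{cap}^{\text{out}}(C \setminus A)$ over the at most eight splits $A \subseteq C$, and it yields in particular $\mathrm{cap}(\{e_2\}) = 2$, $\mathrm{cap}(\{e_1,e_2\}) = 2$, $\mathrm{cap}(\{e_2,e_3\}) = 2$ and $\mathrm{cap}(\{e_1,e_2,e_3\}) = 3$. Taking $X = \{e_1,e_2\}$ and $Y = \{e_2,e_3\}$ then gives $\mathrm{cap}(X) + \mathrm{cap}(Y) = 4 < 5 = \mathrm{cap}(X \cup Y) + \mathrm{cap}(X \cap Y)$, so $\mathrm{cap}$ fails \eqref{eq:subdef} and is not submodular; since a restriction of a submodular function is submodular, the full cut cost on the whole edge set is non-submodular as well.

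The only labor is the bookkeeping above: checking the handful of split-minima, which is mechanical. The real content, and the part I would spend effort on while searching for the example, is placing the ``shared'' edge $e_2$ so that the cheap joint routing $\restriction{f}{\{e_1,e_2\}}(\{e_1,e_2\}) = 2$ can be exploited on the large set $\{e_1,e_2,e_3\}$ while the minimum defining the convolution is still forced up to $2$ on the singleton $\{e_2\}$ and up to $3$ on the union; this mild ``concavity'' across the head/tail split is exactly what a convolution can manufacture and what submodularity forbids. I do not anticipate any genuine obstacle once suitable weights are chosen, and the ones above already work.
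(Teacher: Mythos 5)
Your proposal is correct and follows essentially the same route as the paper: an explicit three-edge gadget with $f(A)=\max_{e\in A}w(e)$, where one edge shares a tail with a heavy partner and a head with the third edge, so that the head/tail convolution $\mathrm{cap}^{\text{in}}\ast\mathrm{cap}^{\text{out}}$ violates Inequality~\eqref{eq:subdef}; I verified your values $\mathrm{cap}(\{e_2\})=\mathrm{cap}(\{e_1,e_2\})=\mathrm{cap}(\{e_2,e_3\})=2$ and $\mathrm{cap}(\{e_1,e_2,e_3\})=3$, giving $4<5$. The only differences from the paper's example (Figure~\ref{fig:conv}) are cosmetic: a slightly smaller graph, different weights, and stating the violation via \eqref{eq:subdef} rather than via non-diminishing marginals of $e_3$.
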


\begin{figure}
  \centering
  \begin{minipage}{0.45\linewidth}
    \begin{center}
      \includegraphics[width=0.55\textwidth]{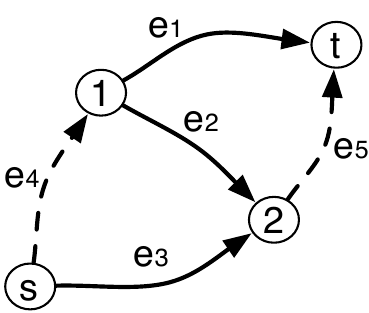}
    \end{center}
  \end{minipage}
  \begin{minipage}{0.45\linewidth}
    Let $f(A) = \max_{e \in A}w(e)$ and\\ $w(e_1) = w(e_2) = a$,\\ $w(e_3) = b$,\\ $w(e_4) = w(e_5) = \epsilon$.
  \end{minipage}
  \caption{Example showing that the convolution of submodular functions is not always submodular, e.g., for $a = 1.5$, $b=2$ and $\epsilon = 0.001$.}
  \label{fig:conv}
\end{figure}

To show Proposition~\ref{prop:conv}, consider the graph in Figure~\ref{prop:conv} with a submodular edge cost function $f(A) = \max_{e \in A}w(e)$. The two submodular functions that are convolved in the corresponding polymatroidal flow are the decompositions
\begin{align}
  \mathrm{cap}^{\text{out}}(A) &= \sum_{v \in \Vs} f(A \inter \delta^+(v))\\
  \mathrm{cap}^{\text{in}}(A) &= \sum_{v \in \Vs} f(A \inter \delta^-(v)).
\end{align}
Both $\mathrm{cap}^{\text{out}}$ and $\mathrm{cap}^{\text{in}}$ are submodular functions from $2^\Es$ to $\mathbb{R}_+$. Their convolution $h$ is
\begin{align}
  h(A) = (\mathrm{cap}^{\text{out}} \ast \mathrm{cap}^{\text{in}})(A) = \min_{B \subseteq A}\mathrm{cap}^{\text{out}}(B) + \mathrm{cap}^{\text{in}}(A \setminus B) = \fap(A).
\end{align}
For $h$ to be submodular, it must satisfy the condition of diminishing marginal costs, i.e., for any $e$ and $A \subseteq B \subseteq \Es \setminus e$, it must hold that $h(e \mid A) \geq h(e \mid B)$.
Now, let $A = \{e_2\}$ and $B = \{e_1, e_2\}$. The convolution here means to pair $e_3$ either with $e_1$ or $e_2$. Then, if $a < b$,
\begin{align}
  h(e_3 \mid A) &= \min\{a+b, b\} - a = b-a\\
  h(e_3 \mid B) &= a + b - \min\{a+a,a\} = b.
\end{align}
Hence, $h(e_3 \mid A) < h(e_3 \mid B)$, disproving submodularity of $h$.

\section{Cooperative Cuts and Polymatroidal Networks}\label{app:dualPMF}
We next prove Lemma~\ref{lem:dualpoly} that relates the
approximation $\fap$ to maxflow problems in polymatroidal networks.

\begin{proof} \emph{(Lemma~\ref{lem:dualpoly})}
  The first step is the dual of the polymatroidal flow. Let
$\mathrm{cap}^{\text{in}}: 2^\Es \to \mathbb{R}_+$ be the joint incoming
capacity, $\mathrm{cap}^{\text{in}}(C) = \sum_{v \in
  V}\mathrm{cap}_v^{\text{in}}(C \inter \delta^-v)$, and let equivalently
$\mathrm{cap}^{\text{out}}$ be the joint outgoing capacity. The dual of
the polymatroidal maximum flow is a minimum cut problem whose cost is a
convolution of edge capacities \citep{lo83}:
\begin{equation}
  \mathrm{cap}(C) =
  (\mathrm{cap}^{\text{in}} \ast \mathrm{cap}^{\text{out}})(C)
  \;\triangleq\; \min_{A \subseteq C} \bigl[ \mathrm{cap}^{\text{in}}(A) +
  \mathrm{cap}^{\text{out}}(C \setminus A) \bigr].
\end{equation}  
We will relate this dual to the approximation $\fap$. Given a minimal
$(s,t)$-cut $C$, let $\Pi(C)$ be a partition of $C$,
and $C_v^{\text{in}} = C^\Pi_v \inter \delta^-_v$ and $C_v^{\text{out}} = C^\Pi_v \inter \delta^+_v$. The cut $C$ partitions the nodes into two sets $\Vs_s$ containing $s$ and $\Vs_t$ containing $t$.
Since $C$ is a minimal directed cut, it contains only edges from the
$s$ side $\Vs_s$ to the $t$ side $\Vs_t$ of the graph. In consequence,
$C_v^{\text{in}} = \emptyset$ if $v$ is on the $s$ side, and
$C_v^{\text{out}} = \emptyset$ otherwise. Hence, $C^{\text{in}}_v
\union C^{\text{out}}_v$ is equal to either $C^{\text{in}}_v$ or
$C^{\text{out}}_v$, and since $f(\emptyset)=0$, it holds that
$f(C^{\text{in}}_v \union C^{\text{out}}_v) = f(C^{\text{in}}_v) +
f(C^{\text{out}}_v)$. 
Then, starting with the definition of $\fap$,
\begin{align}
  \label{eq:e1}
  \fap(C) &= \min_{\Pi(C) \in \mathcal{P}_C}\nlsum_{v \in \Vs}f(C^{\Pi}_v)\\
  &= \min_{\Pi(C)  \in \mathcal{P}_C} \nlsum_{v \in
    \Vs}f(C^{\text{in}}_v \union C^{\text{out}}_v)\\ 
  \label{eq:e2}
  &= \min_{\Pi(C)  \in \mathcal{P}_C}
  \nlsum_{v \in \Vs}\bigl[f(C^{\text{in}}_v) + f(C^{\text{out}}_v)\bigr]\\
  &= \min_{\Pi(C)  \in \mathcal{P}_C} 
  \nlsum_{v \in \Vs}\bigl[\mathrm{cap}_v^{\text{in}}(C^{\text{in}}_v) + \mathrm{cap}_v^{\text{out}}(C^{\text{out}}_v)\bigr]\\
  \label{eq:e3}
  &= \min_{C^{\text{in}},C^{\text{out}}} \bigl[\mathrm{cap}^{\text{in}}(C^{\text{in}}) + \mathrm{cap}^{\text{out}}(C^{\text{out}})\bigr]\\
  \label{eq:e4}
  &= \min_{C^{\text{in}} \subseteq C} \bigl[\mathrm{cap}^{\text{in}}(C^{\text{in}}) + \mathrm{cap}^{\text{out}}(C \setminus C^{\text{in}})\bigr]\\
  \label{eq:e5}
  &= (\mathrm{cap}^{\text{in}} \ast \mathrm{cap}^{\text{out}})(C).
\end{align}
The minimum in Equation~(\ref{eq:e2}) is taken over all feasible partitions $\Pi(C)$ and
their resulting intersections with the sets $\delta^+v, \delta^-v$. Then we use the
notation $C^{\text{in}} = \bigcup_{v \in \Vs}C_v^{\text{in}}$ for all
edges assigned to their head nodes, and $C^{\text{out}} = \bigcup_{v
  \in \Vs}C_v^{\text{out}}$. The minima in Equations~(\ref{eq:e3}) and
(\ref{eq:e4}) are again taken over all partitions in $\mathcal{P}_C$. The
final equality follows from the above definition of a convolution of
submodular functions. \qed
\end{proof}

\end{document}